\documentclass[pdflatex,sn-mathphys-num]{sn-jnl}% Math and Physical Sciences Numbered Reference Style
\usepackage{graphicx}%
\usepackage{multirow}%
\usepackage{amsmath,amssymb,amsfonts}%
\usepackage{amsthm}%
\usepackage{mathrsfs}
\usepackage{anyfontsize}
\usepackage[title]{appendix}%
\usepackage{textcomp}%
\usepackage{manyfoot}%
\usepackage{booktabs}%
\usepackage{url}
\usepackage{algorithm}%
\usepackage{algorithmicx}%
\usepackage{algpseudocode}%
\usepackage{listings}%
\usepackage{hyperref}
\hypersetup{bookmarksdepth=3}

\usepackage{xurl}

\usepackage{siunitx}
\usepackage{upquote}% upright quotes in code listings (minted removed: it needs shell escape, which arXiv disables)
\usepackage[x11names,table]{xcolor} 
\definecolor{LightGray}{gray}{0.98}
\usepackage{mathtools}

\DeclarePairedDelimiter\floor{\lfloor}{\rfloor}

\DeclareMathOperator{\ocol}{ocol}
\DeclareMathOperator{\place}{place}
\DeclareMathOperator{\gap}{gap}		
\DeclareMathOperator{\newgap}{newgap}
\DeclareMathOperator{\depth}{depth}

\usepackage{booktabs}
\usepackage{tabularx}
\usepackage{array}

\theoremstyle{thmstyleone}%
\newtheorem{theorem}{Theorem}%  meant for continuous numbers

\newtheorem{lemma}[theorem]{Lemma}%
\newtheorem{corollary}[theorem]{Corollary}%
\theoremstyle{thmstyletwo}%
\newtheorem{remark}{Remark}%

\theoremstyle{thmstylethree}%
\newtheorem{definition}{Definition}%

\begin{document}

\title{On the Computational Complexity of Blind Quantum Machine Learning}
%\title{Investigating the Computational Complexity of Blind Quantum Machine Learning with MasonQ: A Universal Brickwork Transpiler}

%%=============================================================%%
%% GivenName	-> \fnm{Joergen W.}
%% Particle	-> \spfx{van der} -> surname prefix
%% FamilyName	-> \sur{Ploeg}
%% Suffix	-> \sfx{IV}
%% \author*[1,2]{\fnm{Joergen W.} \spfx{van der} \sur{Ploeg} 
%%  \sfx{IV}}\email{iauthor@gmail.com}
%%=============================================================%%

\author[1]{\fnm{Rex} \sur{Fleur}}
\email{rexfleur@icloud.com}
 \equalcont{These authors contributed equally to this work.}

\author*[2,3,4]{\fnm{Tzula B.} \sur{Propp}}\email{propp@physics.leidenuniv.nl}
 \equalcont{These authors contributed equally to this work.}

% \equalcont{These authors contributed equally to this work.}

\author[1,2]{\fnm{Sebastian} \sur{Feld}}\email{s.feld@tudelft.nl}

\affil*[1]{\orgdiv{Quantum Circuits, Architectures and Technology (QCAT)}, \orgname{Delft University of Technology}, \orgaddress{\street{Mekelweg 4}, \city{Delft}, \postcode{2628 CD}, \country{The Netherlands}}}

\affil[2]{\orgdiv{QuTech}, \orgname{Delft University of Technology}, \orgaddress{\street{Lorentzweg 1}, \city{Delft}, \postcode{2628 CJ}, \country{The Netherlands}}}
\affil[3]{\orgdiv{Kavli Institute of Nanoscience}, \orgname{Delft University of Technology}, \orgaddress{\street{Lorentzweg 1}, \city{Delft}, \postcode{2628 CJ}, \country{The Netherlands}}}
\affil[4]{\orgdiv{Leiden Institute of Physics (LION)}, \orgname{Leiden University}, \orgaddress{\street{Niels Bohrweg 2}, \city{Leiden}, \postcode{2333 CA}, \country{The Netherlands}}}

\abstract{
Can quantum advantages in computational efficiency coexist with quantum advantages in privacy? In this paper, we construct and use  \emph{MasonQ}---a universal brickwork transpiler---to answer this question with a resounding yes. We approach this question by bridging two seemingly disparate paradigms: universal blind quantum computation (UBQC), based on measurement-based quantum computation, and quantum machine learning (QML) algorithms such as the Harrow-Hassidim-Lloyd (HHL) algorithm and quantum recommendation systems. 
We provide new upper bounds on the depth of UBQC brickwork graphs, establishing that \( D = \mathcal{O}(\omega n) \), where \( \omega \) is the circuit width and \( n \) its computational gate complexity. This is enabled by our newly developed transpiler MasonQ, which maps arbitrary quantum circuits to computational brickwork graph states -- the resource underpinning UBQC. Given an optimal Qiskit circuit as input, we prove optimality (up to logarithmic corrections) of the MasonQ transpiler in terms of asymptotic scaling of the total brickwork graph width and depth after decomposition into bricks. This enables systematic cost analyses of blind implementations, from near-term experiments to asymptotic regimes. %The proof is given in Appendix \ref{app:opt_graph}.

Leveraging MasonQ, we derive explicit depth bounds for several representative blind quantum machine learning (B-QML) algorithms, including the quantum Fourier transform, the HHL algorithm, recommendation systems, and quantum transformers, with brickwork depths scaling as \( \mathcal{O}(\log^3 N) \), \( \mathcal{O}(\mathrm{poly}(\kappa, s, 1/\varepsilon)\,\mathrm{polylog}(N)) \), \( \mathcal{O}(L^2 \log L) \), and \( \mathcal{O}\big((p+d)^{2}\,(\log p + p\log d + \log(p+d))\big) \), respectively, where \( \kappa \) denotes the condition number, \( s \) the sparsity, \( L \) the database size of the recommendation system, \( p \) the number of input patches (i.e., sequence length), \( d \) the dimension of each patch (i.e., embedding size), \( N \times N \) the size of the input matrix, and \( \varepsilon \) a tolerated error. Lastly, we analyze concrete resource estimates for near-term demonstrations of blind quantum machine learning. We propose a minimal B-QML implementation requiring only a \(12\) qubit quantum server. 

To the best of our knowledge, MasonQ is the first functional and optimal universal circuit-to-brickwork-graph transpiler to implement the full brickwork transpilation process from first principles.}

\keywords{Quantum Machine Learning, Quantum Information, Universal Blind Quantum Computation, Measurement Based Quantum Computation, Brickwork Graph Transpilation, Upper Bounds}

%%\pacs[JEL Classification]{D8, H51}

%%\pacs[MSC Classification]{35A01, 65L10, 65L12, 65L20, 65L70}

\maketitle

\section{Introduction}
%background
\noindent

% \todoTz{Since we will submit to a nature family paper, we should use their template not IEEE https://www.overleaf.com/latex/templates/springer-nature-latex-template/myxmhdsbzkyd}

Quantum computation enables computational speedups \cite{Grover1997,Harrow2009,Mantri2013,BermejoVega2018,Seddon2021,Liu2021,Pokharel2023,Jordan2025,BeneWatts2025} and reductions in energy consumption \cite{FellousAsiani2023,Meier2025,Hou2025,Abderrahim2025} for certain computational tasks. For both, the key is that quantum algorithms and hardware are \emph{efficient} at solving tasks in comparison to the best known classical approaches, with quantum machine learning (QML) applications being of particular contemporary interest \cite{Biamonte2017,GilFuster2024,Cho2024,Li2025}\footnote{While QML algorithms do not generally yield an exponential advantage over classical machine learners \cite{Tang_dequantisation_2019,Tang2023}, they can still give a computational advantage over their classical counterparts for certain problems \cite{Liu2021,Huang2021,Huang2022}.}. Simultaneously, universal blind quantum computation (UBQC) enables private delegated computation \cite{ubqc,Morimae2012,Mantri2017,Fitzsimons2017,Drmota2024}, with preliminary experiments performed on near-term hardware \cite{Polacchi2023,Drmota2024,Polacchi2025}. Here, the quantum privacy advantage is ensured by the fragility of quantum information (e.g., no-cloning theorem \cite{Wootters1982}) and the indistinguishability of computation rounds from test rounds, which are used to verify the behavior of untrusted quantum devices. In UBQC, indistinguishability is often ensured by measurement-based quantum computation performed on a planar brickwork graph.  

This raises an immediate question: are quantum advantages in computational and energy efficiency compatible with quantum privacy, without sacrificing the former? UBQC is universal, but not necessarily efficient. This is especially important in the noisy intermediate-scale quantum (NISQ) era, where limited qubit counts and noise preclude fault tolerant implementations of resource-demanding quantum algorithms. While there have been improvements to the achievable rates of remote qubit state preparation \cite{vanDam2024,vanDam2025,Propp2025,Propp2026}---a primitive for UBQC---there is a need for estimates for the sizes of graphs (and thus the number of remotely prepared qubit states) needed to implement algorithms, especially those used in QML.

In this paper, we answer the above question in the affirmative; we show that quantum circuits can be implemented efficiently (i.e. with a polylogarthmic depth overhead as a function of the input circuit size) within UBQC. This result is enabled by MasonQ, a provably efficient transpiler we here develop and introduce\footnote{Of course, the underlaying circuit may itself be inefficient; the efficiency we guarantee here is in terms of transpilation only.}. MasonQ functions by decomposing a circuit into single-row bricks (a line graph of four entangled qubits, Fig. \ref{fig:single_rot}) and two-row bricks (two single-row brick graphs connected with a CZ entangling gate on the third column, Fig \ref{fig:2cx}), both of which we refer to as bricks as each implements an elementary gate from our generator set $\mathcal{G}$. Inter-brick CZ gates are interwoven to combine individual bricks into larger brickwork graphs (\ref{fig:swap_cx}). In translating an $\omega$-qubit quantum circuit into a measurement pattern, we prove that, for all circuits, 
%\todoTz{What is the qualifier here? I don't think I appreciated this! I thought the bounds were universal. Rex: We prove for MCU MCX CX AND U2, not every gate that is possible. Maybe some crazy quantum gate will be invented that isnt "efficiently decomposable", say needs $n^2$ CX gates to implement, them the bound doesnt hold.. We can reword it but thats my rationale. Maybe we can do a proof by contradiction but Im not aware of theorems for decomposition that will help us here..}  
UBQC requires at most a brickwork graph of depth $\mathcal{O}(\omega n)$, where $n$ is the gate complexity (i.e. the scaling of the number of gates with input size) of the original circuit decomposed into multi-qubit controlled unitaries, multi-qubit controlled $X$ gates, and single-qubit rotations\footnote{Gates without such an efficient exact decomposition can be approximated with
polylogarithmic overhead in the inverse accuracy using the Solovay--Kitaev
algorithm~\cite{kitaev2002classical,dawson2006solovay}.}. Focusing on blind quantum machine learning (B-QML), where privacy is particularly well motivated by its reliance on data, we investigate whether privacy compromises potential quantum advantages in computational complexity, which will also impact energy consumption. Analyzing an $\mathcal{O}(\omega n)$ blinding overhead for the studied QML circuits, we demonstrate, that quantum speed advantages can be retained due to the polylogarthmic (in the unblinded circuit width) additional expense of making the computation blind. %\todoTz{Again, what is the note about representative here? I think it's worth making it explicit! If there is  a case that falls outside our analysis lets comment on it explicitly. Or is this just you being tentative about making claims ;) Rex: do you mean instead of saying representative, naming the algos explicitly?}

In this paper we introduce our universal brickwork transpilation software package, MasonQ, use it to derive the upper bounds on brickwork graph size, and develop a proof of principle B-QML experiment amenable to near term implementation on a quantum internet \cite{Wehner2018}. Section \ref{secMethod} introduces our transpiler used for empirical corroboration of analytical bounds and B-QML numerical experiment generation. Section \ref{sec:results} gives an overview of how the upper bounds were established, formulates a method to find the general cost of blindness, and proposes the minimal near-term B-QML experiment requiring \(12\) physical qubits, \(750\) remotely prepared qubit states, \(750\) single-qubit measurements and rotations, and \(900\) CZ gates. This near-term experiment implements a blind quantum recommendation system where a user queries a delegated server and, without revealing their preferences, receives a recommendation from one of two options based on Hamming distance with a two-bit feature i.e. description of the option. Section \ref{sec:conclusion} contains our discussion of the results and possible future directions of the work. Appendix \ref{app:statapp} gives the recursive, constructive proof of our transpiler's optimality, Appendix \ref{app:circuit_optimisations} compares initial mapping and routing strategies, Appendix \ref{app:bw_transpiler} provides further details of the transpilation software including noise modeling and visualization, Appendix \ref{app:experiment} contains additional data from our numerical study, and Appendix \ref{app:qml_analysis} gives the detailed analysis of B-QML protocols and, for the blind quantum recommendation system (B-QRS), gives a comparison with the encrypted (i.e. classically blinded) classical counterpart.
% \section{Introduction}\label{sec1}

\section{The MasonQ Transpiler} \label{secMethod}
\noindent

% Our primary tool to investigate the scaling of blind quantum algorithms is a transpiler. Mathematically, we define a transpiler as a function $
% \mathsf{T}:{C} \to {G}_{\mathrm{BW}}$
% mapping any quantum circuit $C\in\mathcal{C}$ to its corresponding computational brickwork graph ${G}_{\mathrm{BW}}\in\mathcal{G}$, including the associated measurement-dependent single-qubit rotations (correction dependencies) needed to implement arbitrary quantum circuits.  We developed such a transpiler that converts any Qiskit quantum circuit to a computational brickwork graph, available on \href{https://github.com/Nonsens24/brickwork_transpiler}{GitHub}. 
Our primary tool to investigate the scaling of blind quantum algorithms is a transpiler. Mathematically, we define a transpiler as a function
\[
\mathsf{T} : \mathcal{C} \rightarrow \mathcal{G}_{\mathrm{BW}},
\]
where \(\mathcal{C}\) denotes the set of quantum circuits and \(\mathcal{G}_{\mathrm{BW}}\) the set of computational brickwork graphs\footnote{These measurement graphs are also referred to as \textit{patterns} in literature~\cite{The_measurement_calc}.}. The transpiler maps any circuit \(C \in \mathcal{C}\) to a brickwork graph \(G_{\mathrm{BW}} \in \mathcal{G}_{\mathrm{BW}}\) that implements \(C\), together with the associated measurement-dependent single-qubit rotations (i.e., correction dependencies) required for determinism in measurement-based quantum computation (MBQC). We developed such a transpiler, which converts any Qiskit quantum circuit into its corresponding computational brickwork graph. The implementation is available on GitHub: \href{https://github.com/Nonsens24/brickwork_transpiler}{github.com/Nonsens24/brickwork\_transpiler}\footnote{This repository is currently private due to patenting, but access for academic researchers will be granted upon request.}. 

This section introduces our transpiler, which takes a decomposed circuit and produces an optimal mapping onto the brickwork graph. Appendix~\ref{app:opt_graph} provides a formal proof of optimality in terms of the scheduling of decomposed bricks (i.e. brickwork depth) using the decompositions into the universal gate-set $\mathcal{G}$ defined in Section~\ref{sec:graph_comp}. We also provide illustrative examples and a discussion of additional functionalities of MasonQ (optimization, visualization). Our transpiler is designed both for transpilation of general quantum algorithms for prospective users and also to support the analysis of the QML algorithms studied here. Crucially, our transpiler can determine brickwork graph dimensions independently of correction dependencies, enabling stronger scaling analyses by allowing larger graphs to be transpiled without computing all dependencies. 

Our transpiler, MasonQ, goes beyond existing transpilers in several key regards \cite{sunami2022graphix,mcbeth2023,joshi2026}: firstly, it is efficient, supporting multi-qubit controlled gates (specifically, non-adjacent CX gates) which aids in identifying low-depth circuit decompositions; secondly, it is tailor-made to support the brickwork graph structure for UBQC both in terms of transpilation and in terms of visualization; thirdly, it integrates Qiskit's native optimization for circuit compilation, so that future upgrades to the Qiskit compiler will automatically be included ~\cite{qiskit}. These are what enable our study of the brickwork depths of QML circuits presented in this paper.

\subsection{Graph composition}\label{sec:graph_comp}

The universality of brickwork graphs \cite{ubqc} follows from the composability of measurement graphs together with a universal gate-set. Subgraphs implementing elementary operations compose under concatenation (Theorem~1 of \cite{The_measurement_calc}), allowing larger patterns to be constructed by sequentially appending these building blocks, now referred to as bricks.

It therefore suffices to realize a universal generator set, as arbitrary circuits follow by concatenation. We consider
\begin{equation}
\begin{aligned}
\mathcal{G} &= \{ R_x(\theta), R_z(\theta), \mathrm{CX} \}, \\
R_x(\theta) &= e^{-i\theta X/2} \\
R_z(\theta) &= e^{-i\theta Z/2}
\end{aligned}
\end{equation}
where $\mathrm{CX}$ denotes the controlled-NOT gate.

Notably, decomposing arbitrary single-qubit rotations into $R_x(\theta)$ and $R_z(\theta)$ reduces the required qubit count by two relative to alternative constructions~\cite{parsimonious_qgates}.

Since $\mathcal{G}$ is universal~\cite{Nielsen_Chuang_2010}, any quantum circuit admits a decomposition into elements of $\mathcal{G}$ (up to global phase), yielding a brickwork-graph realization suitable for UBQC. The width of the brickwork graph will be identical to the number of qubits needed in the original circuit (up to a constant number of ancilla qubits, required to obtain a decomposition into $\mathcal{G}$ with linear overhead), but the depth could be much larger in principle. To prove the universality of our transpiler we will now define each measurement graph corresponding to a gate in the generator set $\mathcal{G}$.

\vspace{5mm}

\subsubsection{Single-Qubit Gates}
A single-row brick -- four horizontally entangled qubits in the brickwork graph -- implements an arbitrary $\mathrm{SU}(2)$ unitary via appropriate choices of measurement angles and adaptive feedforward corrections \cite{parsimonious_qgates}. Since $\mathrm{SU}(2)$ is closed under composition, any finite sequence of single-qubit gates can be compiled into a single $\mathrm{SU}(2)$ rotation and realized by one such four-qubit graph. In brickwork graphs, these graphs that implement single-qubit rotations are colored green, as illustrated in Fig.~\ref{fig:single_rot}.

% A single-row brick, that is four horizontally entangled qubits in the brickwork graph, can implement arbitrary SU(2) rotations \cite{parsimonious_qgates}. This means that \textbf{a sequence of an arbitrary amount of single-qubit gates, can always be implemented in such a four-qubit single-row brick}. Nodes of subgraphs implementing single-qubit rotations are coloured green in the brickwork graphs, as illustrated in Fig. \ref{fig:single_rot}.

% \todoTz{Tzula: We define a gadget as a brick or set of bricks implementing a desired gate. Or omit}

\begin{figure}[htbp]
\centering
\includegraphics[width=0.75\textwidth]{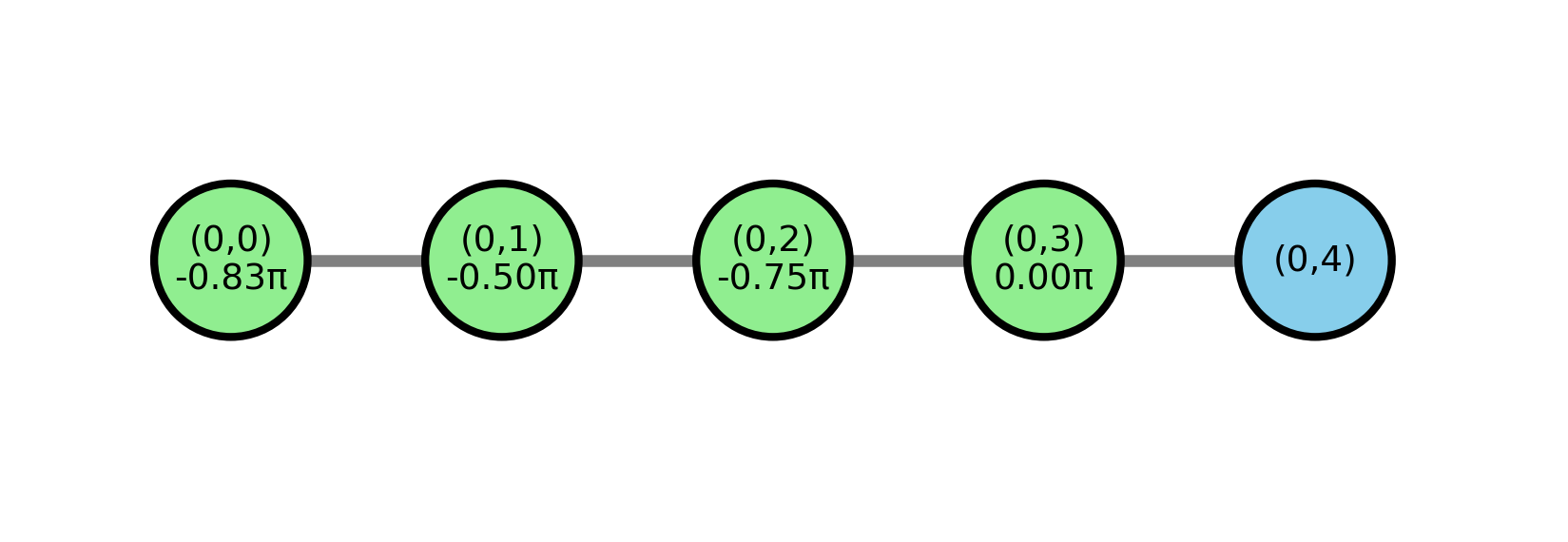}
\caption{Graphical representation of the composed single-qubit gates implementing a sequence of single-qubit gates ($H$, $R_x(\pi/3)$, and $T$) as a single single-row brick. The nodes and edges represent a graph that implements single-qubit rotations, labeled by their graph vertex index $(\text{row}, \text{column})$ and measurement angle. The blue node at the far right denotes the unmeasured output qubit of the measurement pattern containing the quantum state, and therefore its label contains no angle. Emphatically, while any sequence of single-qubit unitaries can be thus decomposed into a single brick, multi-qubit gates e.g. CX correspond to at least one brick in the graph.}
\label{fig:single_rot}
\end{figure}

\subsubsection{Adjacent \texorpdfstring{$\mathrm{CX}$}{CX} gates}

With $R_x(\theta)$ and $R_z(\theta)$ implemented, the remaining gate from the generator set $\mathcal{G}$ required for universality is $\mathrm{CX}$. Brickwork graphs impose connectivity constraints, allowing interactions only between qubits located in adjacent rows. As a result, transpiling $\mathrm{CX}$ operations between non-adjacent rows becomes nontrivial, at least for planar brickwork graphs \cite{Fitzsimons_Kashefi_verify_bqc}.

To address this, we distinguish between two cases: $\mathrm{CX}$ gates acting on adjacent qubits and those acting on non-adjacent qubits. This subsection focuses on the former, while the latter will be treated in subsequent sections. To formalize the notion of adjacency, we begin with the following definition.

\begin{definition}[Adjacency in a brickwork graph]\label{def:adjacent}
Two logical qubits (rows) in a brickwork graph are said to be \emph{adjacent} if and only if there exists a $\mathrm{CZ}$ edge directly connecting the corresponding vertices in the graph. Equivalently, the measurement pattern implements a $\mathrm{CZ}$ operation between these rows without requiring any permutation of the row ordering.
\end{definition}

The measurement angles implementing a $\mathrm{CX}$ gate between two logical qubits are specified in \cite{ubqc} and are used here without modification. To enhance the expressivity of the transpilation procedure, $\mathrm{CX}$ gates are supported in both orientations, namely with the control qubit above the target and vice versa. Figure~\ref{fig:2cx} illustrates these two configurations, encoded in the graph and distinguished by color to indicate their respective roles.

\begin{figure}[htbp]
    \centering

    \begin{minipage}[t]{0.48\textwidth}
        \centering
        \includegraphics[width=\linewidth]{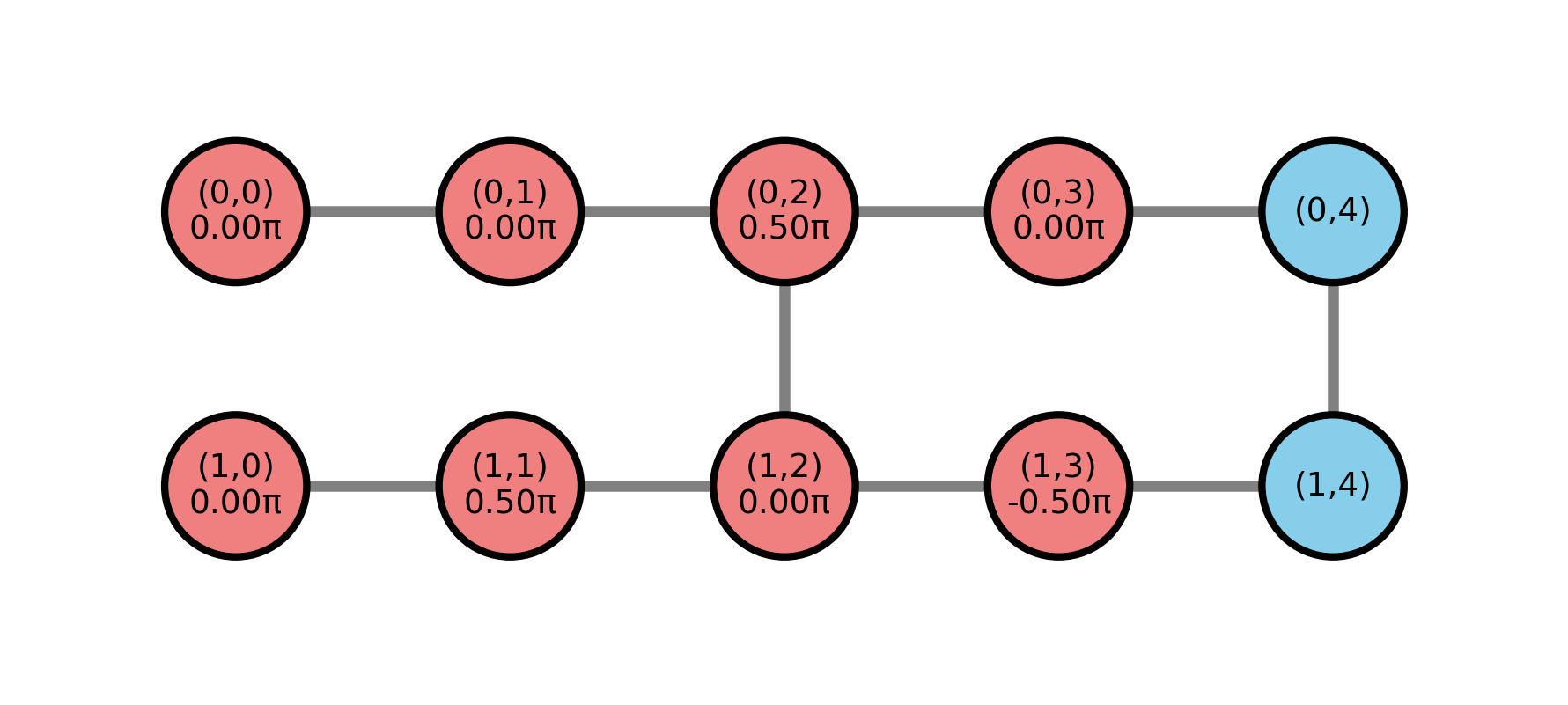}\\[-0.2em]
        \small (a)
    \end{minipage}
    \hfill
    \begin{minipage}[t]{0.48\textwidth}
        \centering
        \includegraphics[width=\linewidth]{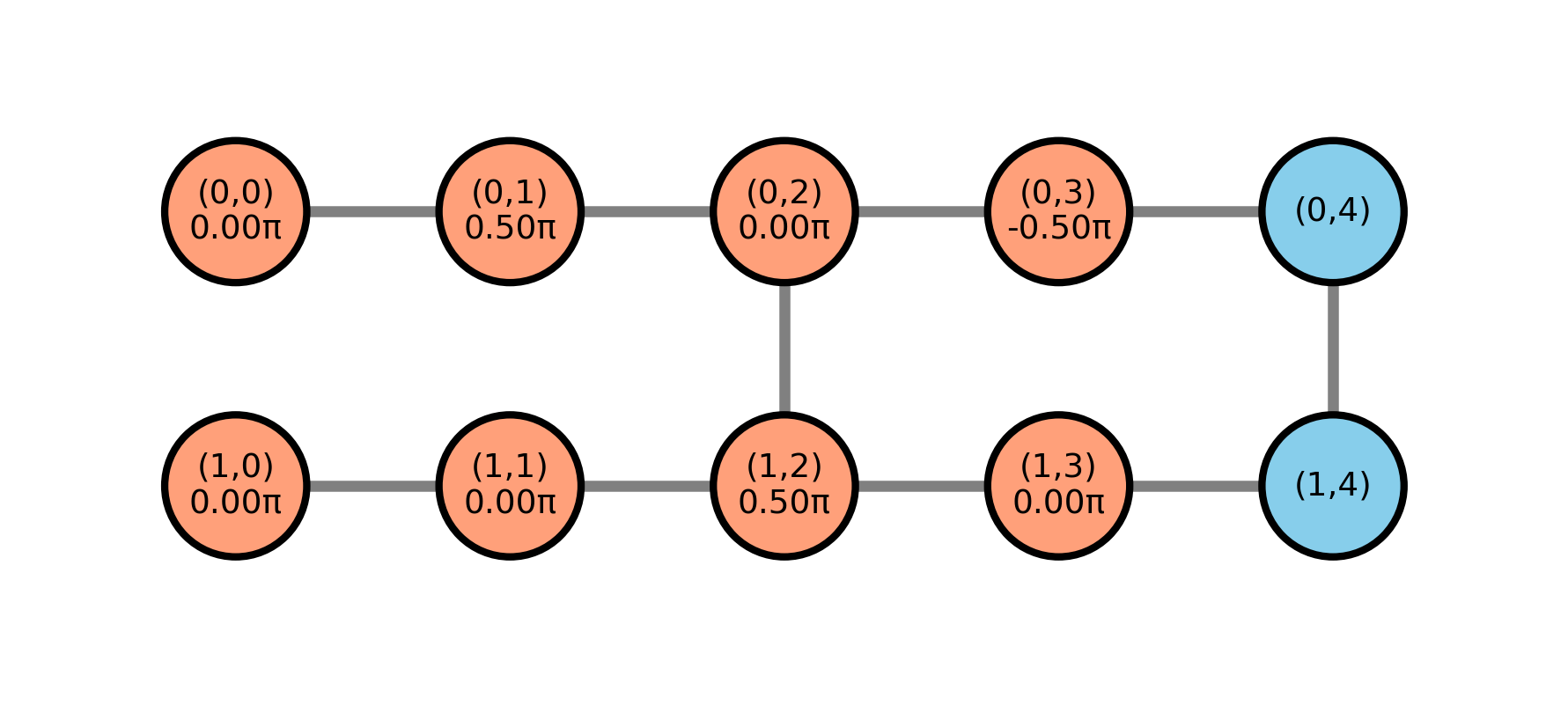}\\[-0.2em]
        \small (b)
    \end{minipage}

    \caption{
    \textbf{(a)} $\mathrm{CX}$ gate encoded in a single two-row brick with the logical control qubit on the upper row and the logical target qubit on the lower row.
    \textbf{(b)} The same construction with control and target interchanged.
    Control and target qubits are distinguished by color; we color the brick red to denote a lower-qubit target and upper-qubit control, and color the brick orange to denote the reversed case. 
    }
    \label{fig:2cx}
\end{figure}

We refer to the graphs in Figs.~\ref{fig:single_rot} and~\ref{fig:2cx} as \emph{bricks}, which, together with the identity brick (denoted with light blue nodes, see Appendix \ref{app:id_brick} and Fig. \ref{fig:id_brick}), are combined to build the brickwork graph.

%\todoTz{Why are the identity bricks linked by entanglement? Is this correct, or a mistake? I didn't think they needed a second entanglement link at the end. If it's not a mistake, a sentence explaining would be nice. Rex: The rightmost nodes here in blue represent the output qubits containing the state after computation of the left hand side of the graph. Two things: 1: the vertical CZ on these CX bricks are of no consequence and alternate (see fig 3) 2: The identity brick is at least 4 horizontally connected nodes that implement HHHH = II = I. There is no isolated image of this in the paper but you can see them in fig 3 between the CX bricks. Do you want me to add them? or is a sentence sufficient? -- for now I added a small appendix entry on it!}

% \subsubsection{Non-adjacent $\mathrm{CX}$ Gates}
% \subsubsection{\texorpdfstring{Non-adjacent $\mathrm{CX}$ Gates}{Non-adjacent $\mathrm{C}X$ Gates}}
\subsubsection{\texorpdfstring{Non-adjacent $\mathrm{CX}$ Gates}{Non-adjacent CX Gates}}

By Definition~\ref{def:adjacent}, non-adjacent logical qubits in a brickwork graph are not connected by a $\mathrm{CZ}$ edge between the corresponding rows. Consequently, the encoding of adjacent $\mathrm{CX}$ gates cannot be applied directly. Moreover, the rigid structure of the graph is essential for blindness in planar circuits \cite{Ma2024}, as permuting the logical-qubit ordering during execution would reveal structural information. Thus to implement a non-adjacent $\mathrm{CX}$ without violating blindness constraints, we swap logical qubit states until the relevant logical qubits become adjacent, the adjacent $\mathrm{CX}$ is applied, and the states are swapped back (if required by the larger algorithm to return the qubit states to their original placements; if a permuted output is permissible, this is always cheaper). Since $\mathrm{SWAP}_{1,2}=(\mathrm{CX}_{12}\,\mathrm{CX}_{21}\,\mathrm{CX}_{12})$ can be realized using three $\mathrm{CX}$ gates~\cite{Nielsen_Chuang_2010}, our transpiler decomposes any non-adjacent $\mathrm{CX}$ into a sequence of adjacent $\mathrm{CX}$ operations that effect the necessary swaps, followed by the desired adjacent $\mathrm{CX}$ (and inverse SWAP operations, if required). The direction of the SWAP operations is determined by the routing algorithm provided as input to the transpiler; see Appendix~\ref{app:circuit_optimisations} for examples.

Figure~\ref{fig:swap_cx} illustrates the SWAP-based routing procedure. The circuit in Fig.~\ref{fig:swap_cx}a, containing a non-adjacent $\mathrm{CX}$ gate, is decomposed into adjacent $\mathrm{CX}$ gates (Fig.~\ref{fig:swap_cx}b). Specifically, Fig.~\ref{fig:swap_cx}b comprises four operations: an initial $\mathrm{CX}$ (eight red nodes, see Fig.~\ref{fig:swap_cx}c), followed by an identity brick (i.e. a brick implementing the (trivial) identity gate, see Appendix \ref{app:id_brick}), a reoriented $\mathrm{CX}$ (orange), again an identity brick, and then two more $\mathrm{CX}$ gates that complete the SWAP and implement the desired $\mathrm{CX}$. This sequence is then mapped to the brickwork pattern in Fig.~\ref{fig:swap_cx}c using the generator set $\mathcal{G}$. The necessity of these identity bricks is due to an alignment problem i.e. that CX gates cannot be placed arbitrarily in the brickwork graph topology, necessitating the insertion of identity bricks as detailed in Appendix \ref{app:statapp}. 

% Figure~\ref{fig:swap_cx} illustrates the SWAP-based routing procedure. The circuit in Fig.~\ref{fig:swap_cx}a, which contains a non-adjacent $\mathrm{CX}$ gate, is first decomposed into a sequence of adjacent $\mathrm{CX}$ gates (Fig.~\ref{fig:swap_cx}b). In particular, Fig.~\ref{fig:swap_cx}b comprises four gates: a $\mathrm{CX}$ represented by the 8 red nodes, followed by an identity brick, which in turn is followed by an reoriented $\mathrm{CX}$ (whose nodes are shown in orange), and finally two further $\mathrm{CX}$ gates, the first finishing the swap and the second implementing the intended $\mathrm{CX}$ gate. This decomposition is then mapped to the corresponding brickwork pattern shown in Fig.~\ref{fig:swap_cx}c composing the bricks supported by the generator set $\mathcal{G}$. Light-blue nodes represent identity (wire) bricks, which implement state teleportation without modifying the quantum state.

% Figure~\ref{fig:swap_cx} illustrates the SWAP-based routing procedure. The circuit in Fig.~\ref{fig:swap_cx}a, which contains a non-adjacent $\mathrm{CX}$ gate, is first decomposed into a sequence of adjacent $\mathrm{CX}$ gates (Fig.~\ref{fig:swap_cx}b). This decomposition is then mapped to the corresponding brickwork pattern shown in Fig.~\ref{fig:swap_cx}c using the generator set $\mathcal{G}$. Light-blue nodes represent identity (wire) bricks, which implement state teleportation without modifying the quantum state.

\begin{figure}[htbp]
    \centering

    % Top row
    \begin{minipage}[t]{0.48\textwidth}
        \centering
        \includegraphics[width=\linewidth,height=4.4cm,keepaspectratio]{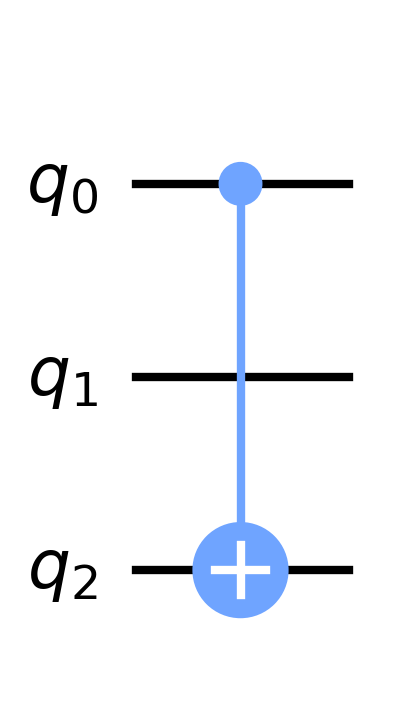}\\[-0.2em]
        \small (a)
    \end{minipage}
    \hfill
    \begin{minipage}[t]{0.48\textwidth}
        \centering
        \includegraphics[width=\linewidth,height=4.4cm,keepaspectratio]{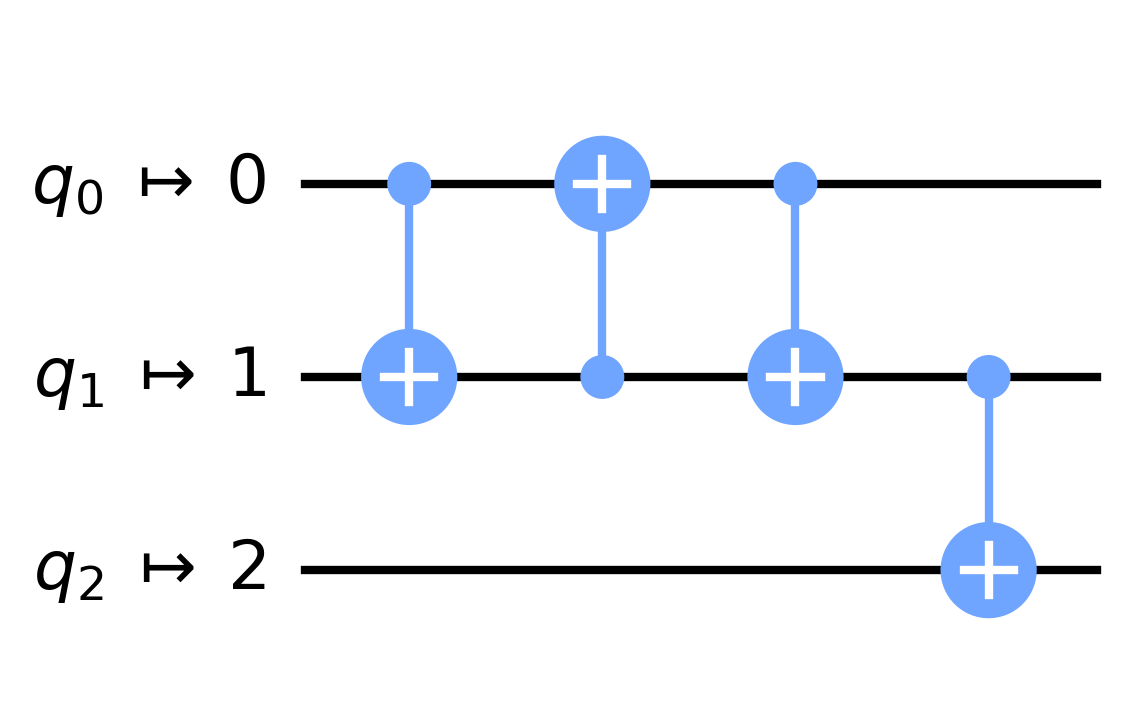}\\[-0.2em]
        \small (b)
    \end{minipage}

    \vskip0.8em

    % Bottom panel
    \begin{minipage}[t]{\textwidth}
        \centering
        \includegraphics[width=\linewidth]{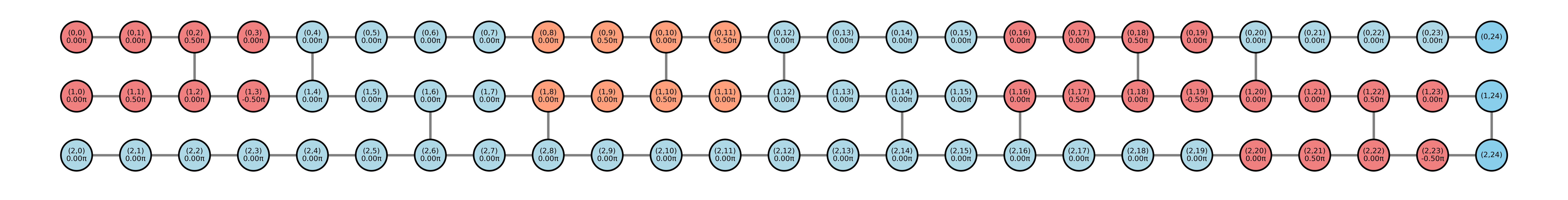}\\[-0.2em]
        \small (c)
    \end{minipage}

    \caption{
    Illustration of the decomposition of a non-adjacent $\mathrm{CX}$ gate and its mapping to a brickwork graph.
    \textbf{(a)} A non-adjacent $\mathrm{CX}$ gate.
    \textbf{(b)} Decomposition into adjacent $\mathrm{CX}$ gates. This decomposition does not permute the identity of the qubits i.e. they are (effectively) swapped back to their original placements, hence the arrows indicating $q_i \rightarrow i$.
    \textbf{(c)} Mapping of the aligned circuit to the corresponding brickwork graph; note that vertical inter-brick connections are alternated, forming a long-short-long-short pattern.
    Light-blue nodes represent wire, or identity, bricks that teleport the state without alteration. Here, the final rightmost column qubits are output qubits (darker blue) without a measurement angle specified; these may either be measured to complete the computation, or fed into another part of the larger algorithm. The placement of inter-row CZs is determined as follows: labeling the topmost row with $r=0$, CZ connections are made alternating based on both row $r$ and column $c$ (also starting from $c=0$), and we here define them as connecting downward i.e. from $r,c\to r,c+1$. For $r$ even, CZs downward are made for $c=2,4\,$ mod $8$. For $r$ odd, CZs downward are made for $c=6,8\,$ mod $8$, exempting the column $c=0$ (where CZs are not performed, in agreement with the definition given in Ref.~\cite{ubqc}; note the shift of index labeling by $1$ (i.e. we count from $0$ instead of $1$).    }
    \label{fig:swap_cx}
\end{figure}

% \subsubsection{Multi-controlled $\mathrm{U}$ Gates} \label{sec:mcx} 
\subsubsection{\texorpdfstring{Multi-controlled $\mathrm{U}$ Gates}{Multi-controlled U Gates}}\label{sec:mcx}% Do you want this explanation?

% Multi-controlled $\mathrm{U}$ ($\mathrm{MCU}$) gates, which generalize the $\mathrm{CX}$ operation to multiple control qubits and arbitrary rotations on the target qubit, play a central role in a variety of quantum machine learning algorithms such as the quantum recommendation system analysed in this thesis \cite{Sawerwain_Wróblewski_2019} and high-level quantum programming constructs. An $\mathrm{MCU}$ gate applies the unitary $\mathrm{U} \in \mathrm{U(2)}$ to a target qubit if and only if all the control qubits are in the $|1\rangle$ state. The efficient implementation of these gates is therefore crucial for decomposing quantum algorithms efficiently.

Vital to our proofs of the efficiency of B-QML later in this work is the ability to generate efficient circuit decomposition in terms of our native gate-set $\mathcal{G}$. Recent work by Rosa et al. \cite{Rosa2025optimizinggate_32n_multicx} provides a significant advancement in the decomposition of multi-controlled unitary ($\mathrm{MCU}$) gates. While previously the decomposition of an $n$-controlled single-target unitary into elementary one- and two-qubit gates required a number of $\mathrm{CX}$ gates scaling quadratically with the number of control qubits $n$ for general $\mathrm{U}(2)$ gates (and linearly for certain subclasses such as $\mathrm{SU}(2)$ rotation gates), Rosa et al. demonstrate that, by introducing a single auxiliary qubit initialized in the $|0\rangle$ state, it is possible to rewrite any multi-controlled $\mathrm{U}(2)$ gate---including multi-controlled not ($\mathrm{MCX}$) gates---as a composition of multi-controlled $\mathrm{SU}(2)$ gates with a phase correction. This approach leverages highly efficient linear decomposition algorithm available for $\mathrm{SU}(2)$ gates given in Ref.~\cite{Rosa2025optimizinggate_32n_multicx}, resulting in an implementation that requires at most $32n$ non-adjacent $\mathrm{CX}$ gates to implement an $\mathrm{MCU}$ gate with a constant number of ancillae. 

The linear gate count of this decomposition does not, by itself,
include the cost of satisfying nearest-neighbour connectivity.
For the asymptotic brickwork bound, we therefore use the
connectivity-aware construction of Zindorf and
Bose~\cite{ZindorfBose2025LNN}. Their Theorem~6 implements an
arbitrary multi-controlled $U(2)$ gate with $c\geq6$ controls
on a linear chain of $W\geq c+2$ qubits using one clean
ancilla. The construction permits arbitrary positions of the
controls, target, and ancilla, and requires at most
$12W+12c-56$ nearest-neighbour CX gates, $16c-32$ T gates,
$8c-8$ H gates, and eleven single-qubit rotations.

Since $c\leq W-2$, the total elementary gate count is
$\mathcal{O}(W)$, with connectivity already accounted for.
The ancilla is restored and can be reused between successive
gates. We reserve one clean ancillary qubit in addition to
the original circuit width $\omega$, giving
$W=\omega+1$. Single-qubit gates and adjacent CX gates admit
constant-cost brickwork implementations. Consequently, this
construction yields an $\mathcal{O}(\omega)$ brickwork-depth
bound for each multi-controlled $U(2)$ gate.

% Crucially, since this approach is linear it does not negatively affect the asymptotic upper-bounded complexity measured as brickwork depth. In the event such a decomposition is not efficient, it can always be approximated efficiently using the Solovay-Kitaev algorithm \cite{dawson2006solovay,kitaev2002classical}, which leads in turn to a universal upper bounded depth for $\varepsilon$-approximated quantum gates (theorem~\ref{thm:bw_sk_upper_bound}). 

%\todoTz{Intuitively, this should mess up blidnness, since no wwe are using computational basis states instead of just rotated $+$ states in our computation! But I think it doesn't. Can you comment on how? I think it's by hiding our ancilla with rotations, but I am curious for your thoughts here. A short comment would be helpful! REX: So in UBQL the graph is initialized into this big $+$ state with some thetas (irrelevant for scaling) The first operation in ubqc is therefore usually initialisation, that is single qubit rotations to get it into the state you want from $+$. These rotations and computations happen onto the existing state which works, I believe, as a consequence of the stabilizer formalism. Im not entirely sure where your confusion comes from since we always measure in the XY basis and apply our compations in that manner. But lets discuss this!}

\subsection{Circuit Decomposition}

As discussed, arbitrary quantum circuits are transpiled into a brickwork graph by decomposing them into sequences of elementary operations drawn from the gate-set $\mathcal{G}$. This restriction to our gate-set $\mathcal{G}$ is essential to ensure compatibility with the brickwork graph structure, which admits only a fixed set of local operations (but does not restrict the arguments in terms of computational complexity due to the Solovay-Kitaev algorithm as mentioned above). In our implementation, this decomposition is carried out using Qiskit~\cite{qiskit}, whose connectivity-aware decomposition routines produce circuit representations restricted to $\mathcal{G}$ while accounting for qubit placement and routing constraints. The resulting circuits are thus both gate-set compliant and more structurally similar to the brickwork graph topology compared to the initial input circuit.

The transpilation procedure is configurable and can be tailored to different optimization objectives through the choice of decomposition strategies, optimization levels, and routing methods. Further implementation details and configuration options are provided in Appendix~\ref{app:bw_transpiler} and in the public repository for MasonQ, our proposed brickwork transpiler on \href{https://github.com/Nonsens24/brickwork_transpiler}{github.com/Nonsens24/brickwork\_transpiler}. An illustrative example of these optimizations is given in Appendix~\ref{app:circuit_optimisations}. 

% As discussed, arbitrary quantum circuits are transpiled into a brickwork graph by decomposing them into sequences of elementary operations drawn from the gate-set $\mathcal{G}$. Our transpiler employs Qiskit~\cite{qiskit} to perform this decomposition, leveraging its optimised decomposition routines, including qubit routing, initial placement, and connectivity-aware transformations. This yields efficient circuit representations that are restricted to $\mathcal{G}$ which gets us closer to the brickwork graph topology.

% The transpilation procedure is configurable and can be tailored to different optimisation objectives through the choice of decomposition strategies, optimisation levels, and routing methods. Implementation details and configuration options are available in our public repository (\href{https://github.com/Nonsens24/brickwork_transpiler}{brickwork\_transpiler}). An illustrative example of these optimisations is provided in Appendix~\ref{app:circuit_optimisations}.

%For this purpose, the Python-based Qiskit framework  was employed, selected due to its active development, comprehensive library of quantum functionalities, and robust community support.

\subsection{Details of Implementation} \label{subsec:high_level_transpiler}

To simulate the resulting execution encoded by measurements of the brickwork graph, we employ \emph{Graphix}, an open-source Python library for the generation, optimization, and simulation of measurement-based quantum computing (MBQC) patterns~\cite{uldemolins2026grpahix034, sunami2022graphix}. Graphix provides a native abstraction of MBQC in the form of a \texttt{Pattern} object, which encodes the measurement graph together with its associated correction dependencies and measurement angles. 

Our transpilation pipeline, shown in Fig.~\ref{fig:pipeline}, leverages this abstraction in terms of \texttt{Pattern} objects directly: it takes as input a \texttt{QuantumCircuit} defined in Qiskit and produces as output a corresponding Graphix \texttt{Pattern}. In this way, the transpiler maps circuit-based descriptions into an MBQC representation that can be further optimized and simulated within the Graphix framework. Our  transpiler is also able to visualize the pattern for debugging and analysis purposes.

\begin{figure}[ht!]
\centering
\includegraphics[width=\linewidth]{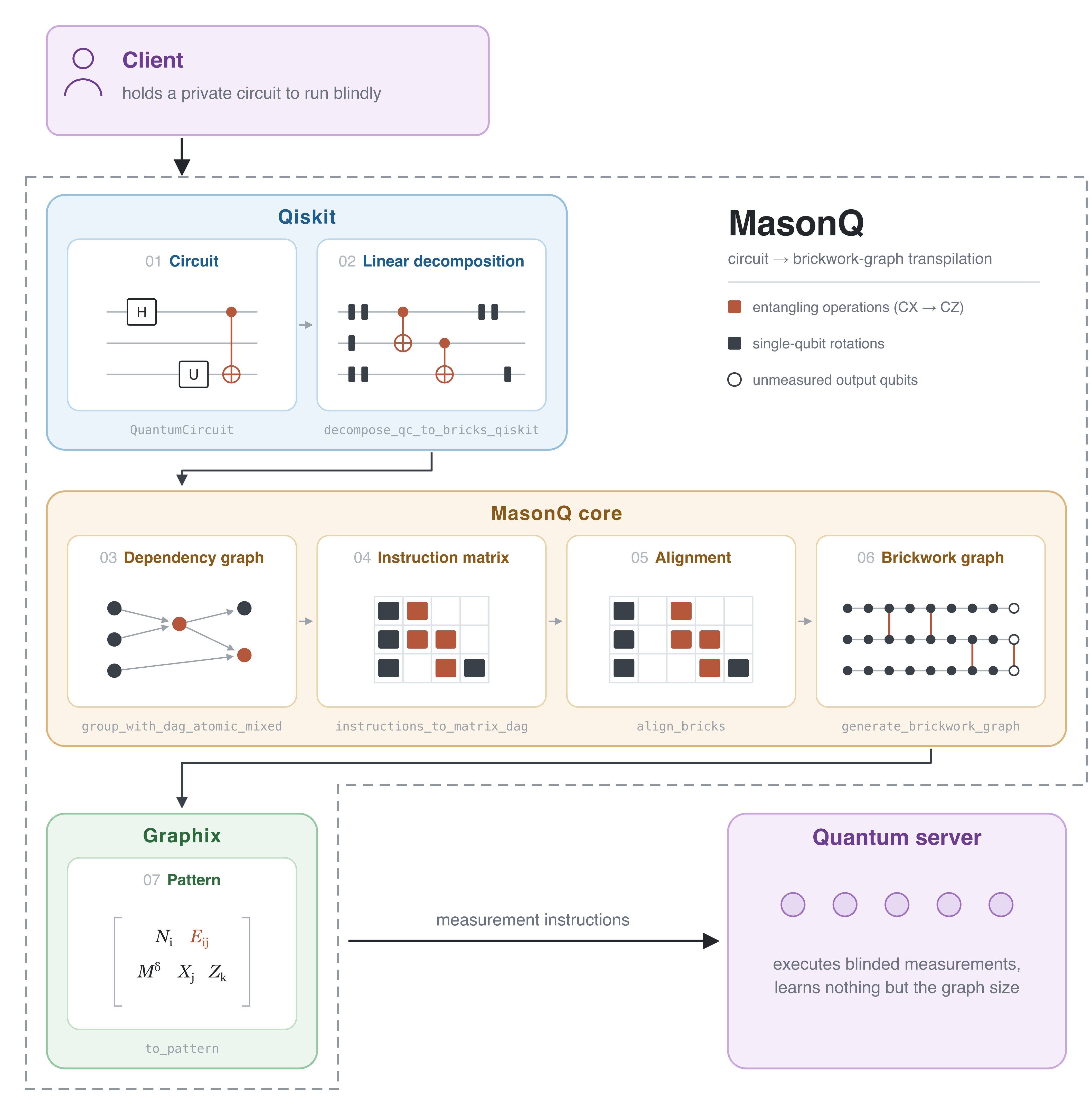}
\caption{The transpilation pipeline used by our software package \texttt{MasonQ}, as envisioned embedded in a larger UBQC protocol. First, a client specifies a quantum circuit they wish to run blindly on a remote quantum server, which they construct as a Qiskit circuit. This circuit is then decomposed into the supported gateset $\mathcal{G}$. From this, MasonQ generates a dependency graph for the adaptive feedforward correction angles for MBQC. This is used to generate a matrix of instructions that records all interactions between qubits, arranging them to align with the brickwork graph topology. MasonQ then aligns the instruction matrix to the actual brickwork graph, as not all gates can be performed in all locations within the graph. This results in a final brickwork graph which is input to Graphix as a pattern, which can be sent as instructions to a quantum server enabling implementation of the desired UBQC protocol.}
\label{fig:pipeline}
\end{figure}

This section provides a high-level overview of how circuits are mapped into measurement-based patterns and why both representations are needed: the former captures the gate-model description used by Qiskit optimizations, while the latter supports efficient pattern simulation, correction dependency computation, and supports optimizations of the measurement calculus \cite{The_measurement_calc}. To the best of our knowledge, MasonQ is the first functional and optimal universal circuit-to-brickwork-graph transpiler, implementing the full brickwork transpilation process from first principles.

The entry-point function accepts a Qiskit \texttt{QuantumCircuit} (\texttt{qc\_in}), an optimization level \texttt{opt} (for Qiskit's decomposition/optimization of the circuit to the generator set), Qiskit’s routing layout choices (passed through to the underlying transpilation stack), and an option to use ancillae for linear decomposition of $\mathrm{MCU}$ gates into adjacent $\mathrm{CX}$ gates\footnote{For a comprehensive list of supported optimization, routing, and layout options, see the Qiskit transpiler documentation available at \href{https://quantum.cloud.ibm.com/docs/en/api/qiskit/transpiler}{quantum.cloud.ibm.com/docs/en/api/qiskit/transpiler}.}. The transpiler returns (i) a Graphix \texttt{Pattern} (a graph with measurement angles and dependencies), (ii) a color map used by our visualizer, and (iii) the transpiled circuit itself which was used to potentially undo permutations of the output qubit wires from the original quantum circuit, which can then be compared with a reference solution. The color map is included to make the visualization interpretable, enabling the user to clearly follow the evolution of the pattern. %Without it, the resulting graphs would be difficult to interpret.

Each step in this process is briefly detailed below.

\begin{enumerate}
    \item \textbf{Decomposing the Circuit:} The decomposition of the input circuit gates to the generator set is the first step of transpilation. Having specified our generator set $\mathcal{G}$, this method creates a coupling map to enforce brickwork graph connectivity constraints; together with our optimization methods this process decomposes the circuit using the Qiskit decomposer. The user has the option to use ancilla qubits which adds a constant number of ancillae making the decomposition of $\mathrm{MCU}$ gates linear in the number of control qubits (see Sec. \ref{sec:mcx} for more details).

    \item \textbf{Building the Instruction Matrix:} To extract and preserve the necessary information from a Qiskit circuit, we construct a dedicated matrix that records all instructions, including both the qubits they act upon and their execution order using a dependency graph (see Appendix~\ref{sec:instr_mat}). This representation not only encodes the circuit into a matrix form but also arranges the gates such that they align with the brickwork graph topology. This matrix is referred to as the instruction matrix of the circuit.
    
   \item \textbf{Constructing the Graph:} Once the instruction matrix has been aligned, there exists a direct correspondence between the positions of the encoded gates and the structure of the brickwork graph. Thus, the brickwork graph has its size determined by the instruction matrix, and is now constructed. We represent it as a NetworkX (NX) graph---a standard Python graph data structure consisting of nodes and edges---required for generating a Graphix \texttt{Pattern}.

    \item \textbf{Creating the Pattern:} The final step is to construct a Graphix \texttt{Pattern}, which serves as the central object for representing, optimizing, and simulating the MBQC computation. This is achieved using the Graphix function \texttt{generate\_from\_graph()}, which requires three inputs: a list of measurement angles for the graph nodes, a list of input qubits, and a list of output qubits.
    
    At this stage, the transpiler assigns measurement angles to the graph according to the instruction matrix, thereby rendering the \texttt{Pattern} operational. No further modifications are required to analyze the graph structure or its complexity scaling. Although UBQC introduces an additional encryption layer through random $\theta$ rotations, which is supported by our transpiler, we do not include this layer here because it is not needed for the computational complexity analysis. Its implementation is therefore deferred to Appendix~\ref{app:encryption}. The output qubits are identified with the qubits in the final column of the brickwork graph. A detailed description of the transpiler, together with proofs of its correctness and optimality in terms of brickwork depth, is given in Appendix~\ref{app:bw_transpiler}, which also includes a discussion of the incorporation and visualization of noise; see Appendix~\ref{app:noise_elab}.

\end{enumerate}

With the instruction generation complete, they may be sent to a remote quantum server to perform the delegated UBQC.

\section{Results} \label{sec:results}

This section presents the main results. We derive an asymptotic upper bound on the scaling of brickwork graphs, use it to characterize the resource requirements of B-QML algorithms, and propose a minimal experimental realization of B-QML for a quantum recommendation system amenable to implementation on a near-term quantum internet \cite{Wehner2018}.

\subsection{Upper Bound on Brickwork Graph Scaling}\label{sec:upp_bound_bw}

Understanding the scaling of brickwork graphs is essential for quantifying the resource requirements of measurement-based quantum computations, i.e., the number of qubits and the depth overhead induced by the brickwork transpilation.
Throughout this work, the depth $D(\mathcal{B})$ of a brickwork graph $\mathcal{B}$ is defined as
the number of brick columns in the aligned instruction matrix. In the explicit
graph representation used by the implementation, each brick column is expanded
into four measured graph columns and one final output column is appended. Thus
the graph-column depth is
\[
D_{\mathrm{graph}}(\mathcal{B})=4D(\mathcal{B})+1.
\]
To this end, we first establish an upper bound on the additional brickwork depth due to alignment of bricks in the graph, i.e., the problem of mapping bricks onto the brickwork structure under the constraint that $\mathrm{CX}$ bricks can only be placed at specific locations. This constraint directly governs the scaling behavior of the graph. The following theorems derive this bound in an abstract setting, with further details -- including its realization within our specific transpiler -- provided in Appendix~\ref{sec:practical_alignment_depth_bound}; the intuition of our argument is that, for each brick, alignment requires at most a shift by a single column in the graph. 
%explains and proves how the transpiler obeys the assumptions of these proofs, thereby adhering to these upper bounds.

\begin{theorem}[Brickwork alignment bound]\label{thm:abstract_bw_bound_2n}
For an instruction matrix of depth $n$, independently aligning
each brick to the brickwork parity constraint (that is, the location of $\mathrm{CX}$ bricks dictated by the brickwork structure) increases the number of columns
by at most a factor of two. Hence the resulting brickwork depth satisfies
\[
D(B) \leq 2n.
\]
% Independently aligning the bricks of an unaligned instruction matrix with \(n\) columns increases the number of columns by at most a factor of two; in particular,
% the depth of the resulting brickwork graph is at most $2n$.
% $D(\mathcal{B}) \leq 2n$.
% Let $M$ be a row-isolated instruction matrix of depth $n$ produced by the
% transpiler scheduler. Then independently aligning the CX bricks to the
% brickwork parity constraint increases the number of columns by at most a
% factor of two. In particular, the depth of the aligned brickwork graph
% satisfies
% \[
% D(\mathcal{B}) \leq 2n.
% \]
% This aligns the theorem with Appendix A.3/Theorem 10 and the code.
\end{theorem}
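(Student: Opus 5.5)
We prove the bound by a column-by-column simulation that maps each unaligned column to at most two aligned brick columns. The underlying parity structure comes from the CZ placement rule in Fig.~\ref{fig:swap_cx}. In brick-column units, a two-row (CX) brick between rows $r$ and $r+1$ can occupy aligned brick column $j$ only if $j \equiv r \pmod 2$, up to a fixed offset. Single-row bricks and identity bricks have no such constraint: they can sit in any column, because a single-row brick only uses the four horizontal qubits of its row.

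\textbf{Setup.} We model the unaligned instruction matrix $M$ as columns $C_1,\dots,C_n$. Each column is a set of bricks acting on pairwise disjoint rows, i.e.\ the matrix is row-isolated, so gates within one column commute and act in parallel.

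\textbf{Construction.} We build the aligned matrix inductively. Each column $C_k$ is replaced by two consecutive aligned columns, $2k-1$ and $2k$. One of these has even parity and the other odd. Each CX brick of $C_k$ acting on rows $(r,r+1)$ goes into whichever of the two columns matches $r$'s parity. Each single-row brick goes into the first of the two columns. Every row left unoccupied in a slot is filled with an identity brick, which implements $I$ as in Appendix~\ref{app:id_brick}.

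\textbf{Correctness.} We must check two things.
\begin{enumerate}
\item Within a slot, no two bricks overlap. Bricks of $C_k$ already act on disjoint rows, and splitting them between two columns cannot create a new overlap.
\item Within a slot pair, reordering bricks preserves the overall unitary. All bricks of $C_k$ act on disjoint qubits and therefore commute, and the identity bricks contribute trivially.
\end{enumerate}
Ordering across different $k$ is preserved because slot pairs are concatenated in order. Composability of the patterns (Theorem~1 of \cite{The_measurement_calc}) then shows that the aligned graph implements the same circuit. The pair $2k-1,2k$ contains both parities, so every CX brick has a legal home. This gives $D(\mathcal{B})\le 2n$.

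\textbf{Main obstacle.} The hardest step is justifying the claim that a CX brick's legality depends only on the parity of its column index and its top row. This must be read off from the rule ``$r$ even: CZs at $c\equiv 2,4 \pmod 8$; $r$ odd: $c\equiv 6,0 \pmod 8$.'' In brick units of four graph columns, this rule says row pairs with even top row connect in even brick columns and those with odd top row in odd brick columns. The $c=0$ exemption only affects the very first brick column. If it breaks a case, the plan is to shift by one column, which costs only an additive constant, or to argue that the first column can always host a parity-correct slot.

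A sharper variant closes the gap. Scan left to right and shift each brick by at most one column relative to its predecessor on the same rows; this is the ``shift by a single column'' intuition. The total shift per unaligned column is then at most one, which again yields $\le 2n$.
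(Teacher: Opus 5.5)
Your proof is correct, but it takes a different route from the paper's. The paper's proof (Appendix~\ref{thm:bw_alignment_bound_nor_prac}) is an induction over bricks processed left to right. Each new brick is either already aligned or needs a one-column shift, so the depth grows by at most two per brick. The rigorous column-level form of this argument is the greedy earliest-feasible placement of Appendix~\ref{proof:bw_alignment_gaps}. There, Lemma~\ref{lem:delay-bound} gives $\place(B)\le 2\ocol(B)+1$ by induction over the processing order, and Lemma~\ref{lem:rotation-insertion} reinserts the single-qubit cells into reserved columns.

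You instead exhibit a static witness: every original column becomes an even/odd slot pair, and each brick drops into the slot of its own parity. This has three advantages:
\begin{itemize}
\item It makes the factor two transparent and needs only row-disjointness within a column.
\item It handles all bricks of a column at once, whereas the paper's induction is phrased per brick rather than per column.
\item Single-qubit cells need no reinsertion rule and no $\chi_r$ reservation.
\end{itemize}
The price is that it says nothing about the layout MasonQ actually produces. Your construction always has exactly $2n$ columns padded with identity bricks. The greedy procedure instead closes empty gaps, is earliest-feasible (Theorem~\ref{app:thm:optimality}), and still satisfies the same bound (Theorem~\ref{thm:global_depth_bound_impl}). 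Theorem~\ref{thm:bw_general_alignment} only needs an existence bound, so your argument serves equally well there.

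Two small remarks. First, your ``sharper variant'' is essentially the paper's argument, but the one-column shift must be measured against the more-delayed of the two row predecessors:
\[
\place(B)-\ocol(B)\le 1+\max_r\bigl(\place(L_r(B))-\ocol(L_r(B))\bigr).
\]
Read row by row, the one-column-shift claim is false, because a brick inherits delay from its other row. The gap-bound paragraph of Appendix~\ref{sec:practical_alignment_depth_bound} gives an example where a row-$1$ gap grows from $4$ to $7$. Inducting on the original column from the max-over-rows version gives exactly Lemma~\ref{lem:delay-bound}.

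Second, your reason that single-qubit bricks may sit in any column (``only the four horizontal qubits of its row'') is not literally accurate. In every brick column, each interior row is CZ-coupled to a neighbour. However, the paper makes the same modelling assumption: every single-qubit cell is realisable by one brick in any column. So this does not distinguish your proof from the paper's.
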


Theorem \ref{thm:abstract_bw_bound_2n} establishes the abstract alignment bound used in the scaling
analysis. The implementation-specific alignment procedure is described
separately in Appendix C.

\noindent\emph{Intuition.} Because vertical-\(\mathrm{CZ}\) support alternates with period two (i.e. the brickwork structure alternates vertical connections between bricks so that only every-other brick in a row gets a vertical connection to its upper neighbor), an adjacent \(\mathrm{CX}\) brick is either already aligned or can be made aligned by a shift of one column. Thus independent alignment never incurs more than a one-column shift per brick, yielding the stated factor-two bound. A complete proof is given in Appendix~\ref{thm:bw_alignment_bound_nor_prac}.

Next, we establish that implementing state swaps within the graph incurs a cost that scales linearly with the number of adjacent $\mathrm{CX}$ gates. This property is required for deriving the upper bound on the scaling.

\begin{lemma}\label{lem:nonadj_cx_overhead}
Consider a quantum circuit mapped to a one-dimensional nearest-neighbor
architecture, and let $\omega$ denote an upper bound on the number of intermediate
rows of qubits that must be crossed over in order to realize any non-adjacent two-qubit
interaction in the circuit. Then any non-adjacent CX gate can be implemented
using at most
\[
6\omega+1
\]
nearest-neighbor CX gates. In particular, the nearest-neighbor routing overhead
of a non-adjacent CX gate is $O(\omega)$.
% Consider a quantum circuit mapped to a one-dimensional nearest-neighbor architecture, and let \(\omega\) denote an upper bound on the number of intermediate qubits that must be crossed in order to realize any non-adjacent two-qubit interaction in the circuit. Then any non-adjacent \(\mathrm{CX}\) gate can be implemented using at most
% \begin{equation}
%     6\omega+2
% \end{equation}
% adjacent \(\mathrm{CX}\) gates.
% Hence, the overhead for implementing a non-adjacent \(\mathrm{CX}\) gate is linear in the width parameter \(\omega\).
\end{lemma}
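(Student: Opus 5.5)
The plan is to use the SWAP-routing construction described above for non-adjacent $\mathrm{CX}$ gates and count the nearest-neighbour $\mathrm{CX}$ gates it uses. Suppose the control sits on row $a$ and the target on row $b$, and without loss of generality $a<b$. Let $k=|b-a|-1$ be the number of intermediate rows between them, so $k\le\omega$ by hypothesis. We move the control's logical state downward by a chain of adjacent SWAPs,
\[
\mathrm{SWAP}_{a,a+1},\ \mathrm{SWAP}_{a+1,a+2},\ \dots,\ \mathrm{SWAP}_{b-2,b-1}.
\]
This is $k$ swaps, after which the control state lives on row $b-1$, adjacent to the target.

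Next we apply a single adjacent $\mathrm{CX}$ between rows $b-1$ and $b$. The orientation (control above target) is already supported by the red brick of Fig.~\ref{fig:2cx}. Had we chosen instead to route the target upward, the orange brick would handle the reversed orientation, so the argument is symmetric. Finally we undo the permutation by applying the same SWAP chain in reverse order, which restores every intermediate logical qubit to its original row. Because SWAPs only permute logical states, correctness follows directly: conjugating the adjacent $\mathrm{CX}$ by the permutation $\pi=\mathrm{SWAP}_{b-2,b-1}\cdots\mathrm{SWAP}_{a,a+1}$ gives $\pi^{-1}\,\mathrm{CX}_{b-1,b}\,\pi=\mathrm{CX}_{a,b}$, with all other qubits acting as spectators.

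For the count, each adjacent SWAP decomposes as $\mathrm{CX}_{12}\mathrm{CX}_{21}\mathrm{CX}_{12}$ (three nearest-neighbour $\mathrm{CX}$ gates), as noted in the text. The forward chain and the backward chain together use $2k$ SWAPs, so the total is
\[
3k+1+3k = 6k+1\le 6\omega+1
\]
nearest-neighbour $\mathrm{CX}$ gates. This gives the stated bound, and the $O(\omega)$ claim is immediate. If a permuted output is allowed, the return chain can be dropped, giving the smaller count $3k+1$, but the lemma does not need this.

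The only real subtlety is interpreting the width parameter: the bound requires $k\le\omega$, i.e.\ that $\omega$ counts intermediate rows rather than total rows. If $\omega$ is instead read as the circuit width, we still have $k\le\omega-2$, so the bound holds a fortiori. A secondary point is that one might worry the reversed-orientation $\mathrm{CX}$ inside a SWAP needs extra Hadamard conjugation. This costs no additional $\mathrm{CX}$ gates, since both orientations are native bricks (Fig.~\ref{fig:2cx}) and Hadamards are single-qubit gates that do not count toward the $\mathrm{CX}$ total.
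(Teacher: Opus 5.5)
Your proposal is correct and follows essentially the same argument as the paper: move one endpoint next to the other with at most $\omega$ adjacent SWAPs, apply one nearest-neighbour $\mathrm{CX}$, undo the routing with at most $\omega$ further SWAPs, and count three $\mathrm{CX}$ gates per SWAP to reach $6\omega+1$. You add two refinements the paper leaves implicit: the conjugation check $\pi^{-1}\,\mathrm{CX}_{b-1,b}\,\pi=\mathrm{CX}_{a,b}$, and the exact count $6k+1$ with $k\le\omega$.
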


% \noindent\emph{Intuition.} In a one-dimensional architecture, a non-adjacent interaction must be made local by swapping one qubit along the line until the two participating qubits become adjacent. If at most \(m\) intermediate qubits lie between them, then at most \(m\) swaps are needed to bring them together and at most \(\omega\) swaps are needed to restore the original ordering. Since each swap costs three \(\mathrm{CX}\) gates and each \(\mathrm{CX}\) gate takes at most two bricks to implement, the routing contributes at most \(6\omega\) \(\mathrm{CX}\) gates, and the remaining local implementation contributes two bricks. Therefore the total cost is bounded by \(6\omega+2\). The complete proof is given in Appendix \ref{app:nonadj_cx_overhead_app}.
\noindent\emph{Intuition.} In a one-dimensional architecture, a non-adjacent interaction is
made local by swapping one qubit along the line until the two participating
qubits become adjacent. If at most $\omega$ intermediate qubits lie between them,
then at most $\omega$ swaps are needed to bring them together and at most $\omega$ swaps
are needed to restore the original ordering. Since each swap decomposes into
three CX gates, the swap routing contributes at most $6\omega$ nearest-neighbor CX
gates. Once the qubits are adjacent, one additional nearest-neighbor CX
implements the desired interaction. Thus the total number of nearest-neighbor
CX gates is at most $6\omega+1$. The complete proof is given in Appendix \ref{app:nonadj_cx_overhead_app}.

% Combining theorem \ref{thm:abstract_bw_bound_2n} and lemma \ref{lem:nonadj_cx_overhead} with the linear cost of implementing $\mathrm{MCU}$ gates (Section~\ref{sec:mcx}, \cite{Rosa2025optimizinggate_32n_multicx}), a more general statement can be made on the cost of blind quantum computation using UBQC of quantum circuits built using these gate classes.\\

% ROSA TEXT:
% By combining Lemma \ref{lem:nonadj_cx_overhead} with the fact that an $\mathrm{MCU}$ gate can be implemented at cost $32n$ (Section~\ref{sec:mcx}), where $n$ denotes the number of $\mathrm{CX}$ gates appearing in its decomposition, we derive the general bound $32(6\omega + 2)$ for $\mathrm{MCU}$ gates. Invoking Theorem \ref{thm:abstract_bw_bound_2n} then gives an upper bound on the depth scaling of brickwork graphs for algorithms built from this gate-set.

\begin{lemma}[Nearest-neighbour MCU brickwork bound]
\label{lem:mcu_lnn_bw}
An arbitrary multi-controlled single-target $U(2)$ gate
acting within an $\omega$-qubit register admits a brickwork
implementation of depth $\mathcal{O}(\omega)$ using one
additional clean ancilla, which is restored after the gate.
\end{lemma}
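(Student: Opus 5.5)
The plan is to combine the connectivity-aware decomposition of Zindorf and Bose~\cite{ZindorfBose2025LNN}, as recalled in Section~\ref{sec:mcx}, with the constant-cost brick implementations and the alignment bound of Theorem~\ref{thm:abstract_bw_bound_2n}. We place the single clean ancilla as an extra row, for example the bottom row, so the register has $W=\omega+1$ rows arranged as a linear chain. This chain is exactly the nearest-neighbour connectivity of the brickwork graph in Definition~\ref{def:adjacent}. We then split into two cases according to the number of controls $c$.

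\textbf{Case $c\ge 6$.} Here $c\le \omega-1\le W-2$, so Theorem~6 of~\cite{ZindorfBose2025LNN} applies with arbitrary positions of the controls, target and ancilla. It yields a circuit of at most $12W+12c-56$ nearest-neighbour CX gates. It also uses at most $16c-32$ T gates, at most $8c-8$ H gates, and eleven further rotations. The total gate count is therefore $\mathcal{O}(\omega)$, and every gate is either single-qubit or a CX between adjacent rows.

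\textbf{Case $c<6$.} The gate acts on at most seven qubits, so it has a constant-size decomposition into CX and single-qubit gates, with possibly non-adjacent CXs. By Lemma~\ref{lem:nonadj_cx_overhead}, each non-adjacent CX costs at most $6\omega+1$ adjacent CXs. The count is again $\mathcal{O}(\omega)$.

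Next we translate gate count into brickwork depth. Each maximal run of single-qubit gates on a row merges into one single-row brick, since $\mathrm{SU}(2)$ is closed under composition. Each adjacent CX occupies one two-row brick, in either orientation as in Fig.~\ref{fig:2cx}. The instruction matrix is built serially: each adjacent CX gets its own brick column, and each column of merged single-qubit bricks goes between consecutive CX columns. This gives an unaligned depth of at most $2\cdot\#\mathrm{CX}+1=\mathcal{O}(\omega)$, and idle rows are filled with identity bricks. Theorem~\ref{thm:abstract_bw_bound_2n} then gives an aligned depth of at most twice this, which is still $\mathcal{O}(\omega)$.

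Finally, the ancilla is restored to $|0\rangle$ by the Zindorf–Bose construction, and in the small case we use no ancilla at all. Correctness of the full pattern follows from composability of measurement patterns (Theorem~1 of~\cite{The_measurement_calc}).

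The main obstacle I expect is the bookkeeping that the scheduling is faithful and does not silently inflate depth. Serial scheduling is crude but is all that is needed for an $\mathcal{O}(\omega)$ bound; we do not need parallelism. There are two further points to check. First, the T and H gates interleaved with CXs must not force extra columns beyond one per CX; they do not, since they merge into the single-qubit bricks between CXs. Second, the ancilla must really be clean at each invocation; this holds because it is freshly initialised once and restored after every gate.
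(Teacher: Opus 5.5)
Your proof is correct and follows the paper's argument: the same case split at $c=6$, with Theorem~6 of~\cite{ZindorfBose2025LNN} giving $\mathcal{O}(W)$ nearest-neighbour gates for $c\geq 6$ and an $\mathcal{O}(W)$ routing cost for the bounded-size case $c<6$, followed by a constant-cost translation into bricks and the factor-two alignment bound of Theorem~\ref{thm:abstract_bw_bound_2n}. The differences are only in presentation. For the small case you route each CX separately via Lemma~\ref{lem:nonadj_cx_overhead}, rather than first SWAPping the active qubits into a contiguous block as the paper does. You also spell out the conversion from gate count to depth with a serial schedule of at most $2\cdot\#\mathrm{CX}+1$ columns, a step the paper leaves implicit.
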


\begin{proof}
Write $c$ for the number of controls and
$W=\omega+1$ for the implemented width, including the ancilla.
For $c\geq6$, Theorem~6 of~\cite{ZindorfBose2025LNN} provides an implementation
containing $\mathcal{O}(W+c)=\mathcal{O}(W)$ elementary gates
on a linear nearest-neighbour architecture, including the
cost of arbitrary qubit placement.

The remaining cases $c<6$ involve a bounded number of active
qubits. These can be brought into a contiguous region using
$\mathcal{O}(W)$ adjacent SWAP gates, followed by a
constant-size decomposition and restoration of the original
placement. Their elementary gate count is also
$\mathcal{O}(W)$.

Each elementary gate has a constant-cost translation into
the brickwork gate family. In particular, H and T gates
can be expressed using a constant number of native rotations,
up to an irrelevant overall phase. The brickwork alignment
step contributes only a constant factor. The resulting
depth is therefore $\mathcal{O}(W)=\mathcal{O}(\omega)$.
\end{proof}

\begin{theorem}[Brickwork Depth Upper Bound]
\label{thm:bw_general_alignment}
Let \(\mathcal{C}\) be a quantum circuit containing \(n\) original gates
acting on \(\omega\geq1\) logical qubits, with each
multi-controlled single-target \(U(2)\) gate counted once.
Reserve one additional clean ancilla and write
\(W=\omega+1\) for the implemented width. Let
\[
n_{\mathrm{rot}},\quad
n_{\mathrm{adjCX}},\quad
n_{\mathrm{CX}},\quad
n_{\mathrm{MCU}},
\]
denote the numbers of occurrences of native single-qubit
rotations, adjacent \(\mathrm{CX}\) gates, non-adjacent
\(\mathrm{CX}\) gates, and multi-controlled single-target
\(U(2)\) gates, respectively, in \(\mathcal{C}\). These categories
are taken to be disjoint, so that
\begin{equation}
n_{\mathrm{rot}}
+n_{\mathrm{adjCX}}
+n_{\mathrm{CX}}
+n_{\mathrm{MCU}}
=n.
\end{equation}

Using the nearest-neighbour $\mathrm{MCU}$ construction of 
Lemma~\ref{lem:mcu_lnn_bw}, there is a constant
\(a_{\mathrm{MCU}}\), independent of \(n\), \(\omega\),
and the number of controls, such that the following
upper bounds hold for the brick-column cost before
global alignment:
\begin{equation}
\begin{aligned}
\text{rotation}
    &:\; 1,
&
\text{adj.\ }\mathrm{CX}
    &:\; 2,
\\
\text{non-adj.\ } \mathrm{CX}
    &:\; 2(6W+1),
&
\mathrm{MCU}
    &:\; a_{\mathrm{MCU}}W.
\end{aligned}
\end{equation}
Consequently, \(\mathcal{C}\) admits a brickwork implementation
\(\mathcal{B}\) of width \(W\) whose depth satisfies
\begin{equation}
\label{eq:bw_bound}
D(\mathcal{B})=\mathcal{O}(\omega n).
\end{equation}
\end{theorem}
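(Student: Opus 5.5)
The plan is to bound the brick-column cost of each gate category separately, sum these costs into an unaligned instruction matrix, and then invoke Theorem~\ref{thm:abstract_bw_bound_2n} to pay for alignment.

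\textbf{Per-gate costs.} The four costs come from earlier results.
\begin{itemize}
\item \emph{Rotations.} A native rotation, or any run of consecutive single-qubit gates on one row, fits into a single single-row brick, so it costs one column.
\item \emph{Adjacent $\mathrm{CX}$.} An adjacent $\mathrm{CX}$ occupies one two-row brick. I budget two columns for it, which leaves room for a possible orientation or identity padding brick, as in Fig.~\ref{fig:swap_cx}.
\item \emph{Non-adjacent $\mathrm{CX}$.} Any two rows are separated by at most $W-2\le W$ intermediate rows, so Lemma~\ref{lem:nonadj_cx_overhead} applies with the crossing parameter set to $W$. It yields at most $6W+1$ nearest-neighbour $\mathrm{CX}$ gates, each costing at most $2$ columns, for a total of $2(6W+1)$.
\item \emph{$\mathrm{MCU}$ gates.} Lemma~\ref{lem:mcu_lnn_bw} gives an $\mathcal{O}(W)$-gate nearest-neighbour decomposition using the reserved ancilla. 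The constants there come from Theorem~6 of \cite{ZindorfBose2025LNN}: at most $12W+12c-56$ $\mathrm{CX}$ gates, $\mathcal{O}(c)$ single-qubit gates, and $c\le W-2$. For $c<6$, the cost is $\mathcal{O}(W)$ swaps plus a constant. Every elementary gate costs at most $2$ columns. Taking the maximum over the two regimes gives an explicit $a_{\mathrm{MCU}}$, for instance around $50$, that is independent of $c$, $n$ and $\omega$.
\end{itemize}
Because the ancilla is restored after each $\mathrm{MCU}$, it can be reused by the next one. The width therefore stays at $W$ throughout.

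\textbf{Summing the costs.} I will schedule the decomposed gates sequentially, placing each gate's bricks in fresh columns. This is a crude but valid schedule, because sequential concatenation of patterns is allowed (Theorem~1 of \cite{The_measurement_calc}). The unaligned depth is then
\[
n' \le n_{\mathrm{rot}} + 2n_{\mathrm{adjCX}} + 2(6W+1)n_{\mathrm{CX}} + a_{\mathrm{MCU}}W\,n_{\mathrm{MCU}} \le \max\{14, a_{\mathrm{MCU}}\}\, W n,
\]
where the second inequality uses $W\ge 2$ and the fact that the four counts sum to $n$. Theorem~\ref{thm:abstract_bw_bound_2n} then gives $D(\mathcal{B})\le 2n'$. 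Since $W=\omega+1\le 2\omega$, this is $\mathcal{O}(\omega n)$.

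\textbf{Main obstacle.} The delicate step is checking that the per-gate budgets are really \emph{pre-alignment} costs that compose additively. In particular, stacking the bricks of consecutive gates must not create parity conflicts that exceed the one-column-per-brick shift covered by Theorem~\ref{thm:abstract_bw_bound_2n}. I would handle this by placing each gate's bricks in fresh columns. This only overcounts, so the instruction matrix remains a valid input to the alignment theorem. A secondary point is converting H and T into $R_x,R_z$ products up to global phase; each such conversion still fits in a single-row brick, so it changes nothing.
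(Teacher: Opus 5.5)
Your proposal is correct and follows the paper's proof essentially step for step: per-gate pre-alignment budgets (Lemma~\ref{lem:nonadj_cx_overhead} with crossing parameter at most $W$, the Zindorf--Bose case split $c\geq 6$ versus $c<6$ for MCUs with a reusable clean ancilla), summation of these costs, and then the factor-two alignment of Theorem~\ref{thm:abstract_bw_bound_2n}. Using disjointness to get $\max\{14,a_{\mathrm{MCU}}\}Wn$ is only a slightly tighter form of the paper's count-by-count bound, and the paper uses the same device in Corollary~\ref{cor:explicit_bw_gate_count}.

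One small slip: your illustrative value $a_{\mathrm{MCU}}\approx 50$ is too low. With $c\leq W-2$, the CX gates alone cost $2(12W+12c-56)\leq 48W-160$ columns, and the $24c-29$ single-qubit gates add up to $24W-77$ more even at one column each; the paper obtains $88W$ by charging $H$ as three rotations. This does not affect the theorem, which only needs the existence of such a constant.
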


\begin{proof}
We first specify the MCU contribution. Let \(b_{\mathrm{MCU}}(W,c)\) denote the number of brick
columns used to implement one MCU with \(c\) controls on
\(W\) rows, before global alignment. For an MCU with
\(c\geq6\) controls, Theorem~6 of
Ref.~\cite{ZindorfBose2025LNN} gives a nearest-neighbour
implementation with one clean ancilla using at most
\(12W+12c-56\) CX gates, \(16c-32\) T gates,
\(8c-8\) H gates, and eleven native single-qubit
rotations. These counts already include the cost of
qubit placement along the line.

We conservatively bound an adjacent CX by two brick columns. Up to
global phase, a T gate requires one native rotation
and an H gate can be implemented using three native
rotations. The brick-column cost before alignment
therefore satisfies
\begin{equation}
\begin{aligned}
b_{\mathrm{MCU}}(W,c)
&\leq
2(12W+12c-56)
+(16c-32)
+3(8c-8)
+11
\\
&=24W+64c-157
\\
&\leq88W-285
\\
&\leq88W,
\end{aligned}
\end{equation}
where \(c\leq W-2\).

For \(c<6\), an MCU acts on at most six logical qubits:
five controls and one target. Such a gate admits an
elementary decomposition whose gate count is bounded
independently of the total width \(W\). Routing each
two-qubit interaction and restoring the original placement
costs at most \(\mathcal{O}(W)\). Since the number of these
interactions is bounded, the total pre-alignment cost is
at most \(a_0W\), for some constant \(a_0\) independent
of \(W\) and the choice of \(U\).

Combining this with the bound for \(c\geq6\), we obtain
\begin{equation}
b_{\mathrm{MCU}}(W,c)
\leq
\begin{cases}
a_0W, & c<6,\\
88W,  & c\geq6.
\end{cases}
\end{equation}
Defining
\begin{equation}
a_{\mathrm{MCU}}=\max\{88,a_0\}
\end{equation}
therefore gives the uniform bound
\begin{equation}
b_{\mathrm{MCU}}(W,c)\leq a_{\mathrm{MCU}}W
\end{equation}
for all allowed control counts. The clean ancilla is
restored whenever used and can therefore be reused
throughout the circuit.

The depth of the brickwork graph \(\mathcal{B}\) is
upper bounded by summing the contributions from the
individual gate classes after transpilation into
brickwork form. Using the bounds above, together with
the alignment factor \(2\) from
Theorem~\ref{thm:abstract_bw_bound_2n}, we obtain
\begin{equation}
\begin{aligned}
D(\mathcal{B})\leq
2\Bigl(
&n_{\mathrm{rot}}
+2n_{\mathrm{adjCX}}
\\
&+2(6W+1)n_{\mathrm{CX}}
+a_{\mathrm{MCU}}Wn_{\mathrm{MCU}}
\Bigr).
\end{aligned}
\end{equation}

Since the count of each gate type is bounded by the
total number of original gates, we have
\begin{equation}
n_{\mathrm{rot}},\quad
n_{\mathrm{adjCX}},\quad
n_{\mathrm{CX}},\quad
n_{\mathrm{MCU}}
\leq n.
\end{equation}
Substituting these bounds yields
\begin{equation}
\begin{aligned}
D(\mathcal{B})
&\leq
2\Bigl(
n+2n+2(6W+1)n+a_{\mathrm{MCU}}Wn
\Bigr)
\\
&=
\bigl[(2a_{\mathrm{MCU}}+24)W+10\bigr]n.
\end{aligned}
\end{equation}
Therefore, with \(W=\omega+1\),
\begin{equation}
\boxed{
D(\mathcal{B})
\leq
\bigl[(2a_{\mathrm{MCU}}+24)(\omega+1)+10\bigr]n
=
\mathcal{O}(\omega n)
}.
\end{equation}
Hence, the depth of the constructed brickwork graph
scales at most linearly in both the original width
\(\omega\) and the original gate count \(n\).
\end{proof}

Thus, the main result of this section establishes that circuits
consisting of these gate classes admit a brickwork implementation
whose depth scales at most proportionally to the original gate
count multiplied by the original circuit width, using one
additional clean ancilla. The same conclusion applies to circuits
that can be decomposed into these gate classes with total gate
count $\mathcal{O}(n)$ and width $\mathcal{O}(\omega)$.

Now follows a corollary useful for exact gate bounds in experimental settings.

%\todoTz{REX: I added this corollary with Tzulas comments on experimentalists interests in numerical bounds in mind. Do let me know what you think!}

%OLD MCU
% \begin{corollary}[Conservative Numerical Brickwork Depth Bound]
% \label{cor:explicit_bw_gate_count}
% Let \(C\) be a quantum circuit acting on \(\omega\) logical qubits and containing exactly \(n\) gates, each belonging to one of the classes considered in Theorem~\ref{thm:bw_general_alignment}. Then the depth \(D(\mathcal{B})\) of the corresponding brickwork graph \(\mathcal{B}\) satisfies
% \begin{equation}
% D(\mathcal{B}) \le (384\omega+64)\, n.
% \end{equation}
% \end{corollary}

% \begin{proof}
% Apply the proof of Theorem~\ref{thm:bw_general_alignment} and note that, among the allowed gate classes, the largest transpilation cost is induced by a multi-controlled \(\mathrm{U}\) gate, namely
% \[
% 2\cdot 32(6\omega+1)=384\omega+64,
% \]
% where the factor \(2\) is the alignment overhead from Theorem~\ref{thm:abstract_bw_bound_2n}. Since the circuit contains exactly \(n\) gates, the total brickwork depth is bounded by \(n\) times this worst-case per-gate cost:
% \[
% D(\mathcal{B}) \le (384\omega + 64)\,n.
% \]
% \end{proof}

\begin{corollary}[Conservative Brickwork Depth Bound]
\label{cor:explicit_bw_gate_count}
Let \(\mathcal{C}\) be a quantum circuit acting on \(\omega\geq1\)
logical qubits and containing exactly \(n\) gates, each
belonging to one of the classes considered in
Theorem~\ref{thm:bw_general_alignment}.
With one additional clean ancilla, \(\mathcal{C}\) admits a brickwork
implementation \(\mathcal{B}\) of width \(W=\omega+1\)
whose depth satisfies
\begin{equation}
D(\mathcal{B})
\leq 2a_{\mathrm{MCU}}(\omega+1)n,
\end{equation}
where \(a_{\mathrm{MCU}}=\max\{88,a_0\}\) is the
constant defined in the proof of
Theorem~\ref{thm:bw_general_alignment}.
\end{corollary}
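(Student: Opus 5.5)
The plan is to redo the final summation in the proof of Theorem~\ref{thm:bw_general_alignment} more carefully. That proof bounds each of $n_{\mathrm{rot}},n_{\mathrm{adjCX}},n_{\mathrm{CX}},n_{\mathrm{MCU}}$ separately by $n$, which loses a constant. Here I would use the disjointness identity $n_{\mathrm{rot}}+n_{\mathrm{adjCX}}+n_{\mathrm{CX}}+n_{\mathrm{MCU}}=n$ together with a single uniform per-gate bound. Concretely, I would show that every allowed gate class has pre-alignment brick-column cost at most $a_{\mathrm{MCU}}W$. Then the unaligned instruction matrix has depth at most $a_{\mathrm{MCU}}Wn$, and Theorem~\ref{thm:abstract_bw_bound_2n} supplies the factor $2$.

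The key step is the comparison of the four per-gate costs listed in Theorem~\ref{thm:bw_general_alignment} against $a_{\mathrm{MCU}}W$. Since $\omega\ge1$, we have $W=\omega+1\ge2$, and $a_{\mathrm{MCU}}=\max\{88,a_0\}\ge88$.
\begin{itemize}
\item A rotation costs $1\le 88W$.
\item An adjacent $\mathrm{CX}$ costs $2\le 88W$.
\item A non-adjacent $\mathrm{CX}$ costs $2(6W+1)=12W+2\le 13W\le 88W$, using $W\ge2$.
\item An $\mathrm{MCU}$ costs $b_{\mathrm{MCU}}(W,c)\le a_{\mathrm{MCU}}W$ for every control count $c$. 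This is exactly the uniform bound established in the proof of Theorem~\ref{thm:bw_general_alignment}, which covers both the case $c\ge6$ (bound $88W$) and the case $c<6$ (bound $a_0W$).
\end{itemize}
Hence every gate costs at most $a_{\mathrm{MCU}}W$ brick columns before alignment. The clean ancilla is restored after each $\mathrm{MCU}$ (Lemma~\ref{lem:mcu_lnn_bw}), so a single ancilla serves all gates and the width stays $W$.

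Next I would concatenate the bricks of the $n$ gates sequentially, using composability of measurement patterns as discussed in Section~\ref{sec:graph_comp}. This gives an instruction matrix of depth at most $\sum_{\text{gates}} a_{\mathrm{MCU}}W=a_{\mathrm{MCU}}Wn$. Applying Theorem~\ref{thm:abstract_bw_bound_2n} to this matrix then yields $D(\mathcal{B})\le 2a_{\mathrm{MCU}}(\omega+1)n$.

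The only point needing care is the non-adjacent $\mathrm{CX}$ comparison, which uses $W\ge2$, i.e.\ the hypothesis $\omega\ge1$. A second point is that the alignment theorem is applied once, globally, to the summed depth, rather than to each gate separately. Both directions give the same factor $2$, since the alignment bound is linear in depth. Everything else is bookkeeping already carried out in the proof of Theorem~\ref{thm:bw_general_alignment}.
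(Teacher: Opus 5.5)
Your proposal is correct and matches the paper's proof: both take the uniform per-gate bound $\max\{1,2,2(6W+1),a_{\mathrm{MCU}}W\}=a_{\mathrm{MCU}}W$ (using $W\geq2$ and $a_{\mathrm{MCU}}\geq88$), sum it over the exactly $n$ gates, and apply the alignment factor $2$ from Theorem~\ref{thm:abstract_bw_bound_2n}. Your remark that counting the gates once, rather than bounding each class count separately by $n$, is what produces the tighter constant is also consistent with the paper.
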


\begin{proof}
Apply the proof of
Theorem~\ref{thm:bw_general_alignment}.
Since \(a_{\mathrm{MCU}}\geq88\) and \(W\geq2\),
the pre-alignment brick-column cost of every allowed
gate is bounded by \(a_{\mathrm{MCU}}W\):
\begin{equation}
\max\bigl\{
1,\;2,\;2(6W+1),\;a_{\mathrm{MCU}}W
\bigr\}
=a_{\mathrm{MCU}}W.
\end{equation}
The circuit contains exactly \(n\) original gates.
Summing their costs and applying the alignment factor
\(2\) from Theorem~\ref{thm:abstract_bw_bound_2n}
therefore yields
\begin{equation}
D(\mathcal{B})
\leq 2a_{\mathrm{MCU}}Wn
=2a_{\mathrm{MCU}}(\omega+1)n.
\end{equation}
\end{proof}

We now extend Theorem~\ref{thm:bw_general_alignment} by deriving a universal upper bound on the brickwork depth for circuits containing gates outside the linearly decomposable gate classes---also referred to as non-native gates---using Solovay--Kitaev approximations to translate such gates into the native gate-set~\cite{dawson2006solovay,kitaev2002classical}.

To make this precise, we now consider circuits containing arbitrary constant-locality (that is, involving a set of at most $q$ qubit wires)
unitary gates which do not belong to the gate classes considered in Theorem \ref{thm:bw_general_alignment} and are assumed non-native\footnote{Non-native gates are the unknown multi qubit unitaries $U$ that could break the $\mathcal{O}(\omega n)$ bound by not being efficiently decomposable into $\mathcal{G}$, were it not for the Solovay--Kitaev theorem.} in terms of $\mathcal{G}$. Such gates are first compiled
to a fixed finite universal gate-set using the Solovay--Kitaev algorithm, and the resulting
finite gate sequence is then translated exactly into the brickwork-native gate family
\[
\mathcal G
=
\{R_x(\theta),R_z(\theta),\mathrm{CX}\}.
\]
This gives a universal brickwork depth bound with only polylogarithmic dependence on the inverse approximation accuracy.

We also note that our generator set
\[
\mathcal{G}
=
\{R_x(\theta),R_z(\theta),\mathrm{CX}\}
\]
can implement, with constant overhead, the finite approximately universal gate-set
\[
\mathcal G_{\mathrm{SK}}^{(q)}
=
\{H,T,T^\dagger,\mathrm{CX}\}.
\]
Indeed,
\[
T = e^{i\pi/8}R_z(\pi/4),
\qquad
T^\dagger = e^{-i\pi/8}R_z(-\pi/4),
\]
and
\[
H = i\,R_z(\pi/2)R_x(\pi/2)R_z(\pi/2),
\]
while \(\mathrm{CX}\) is already contained in \(\mathcal G\).
Hence every gate in \(\mathcal G_{\mathrm{SK}}^{(q)}\) has a constant-size implementation
over \(\mathcal G\). Therefore, for the purpose of applying the
Solovay--Kitaev theorem, arbitrary \(q\)-local non-native unitary gates may first be
approximated over \(\mathcal G_{\mathrm{SK}}^{(q)}\), and the resulting gates may then be
translated into \(\mathcal G\) with constant overhead.

\begin{theorem}[Universal Brickwork Depth Bound under Solovay--Kitaev Approximation]
\label{thm:bw_sk_upper_bound}
Let \(C\) be a quantum circuit of gate complexity \(n\) acting on \(\omega\) logical qubits.
Suppose that \(C\) is expressed over an arbitrary finite approximately universal gate-set,
and that every non-native gate has locality at most \(q=O(1)\)\footnote{Here, the locality of a gate means the number of logical qubits on which the gate acts nontrivially. Thus a
one-qubit rotation has locality \(1\), a two-qubit \(\mathrm{CX}\) gate has locality \(2\),
and a \(q\)-local unitary is a unitary acting nontrivially on at most \(q\) adjacent logical qubits.
The condition \(q=O(1)\) means that this number is bounded independently of the circuit
width \(\omega\).}.
Let \(\varepsilon>0\) be the desired total approximation error, measured in operator norm.\footnote{The operator norm is defined by
\(\|A\|:=\sup_{\|\psi\|=1}\|A\psi\|\). Thus \(\|U-V\|\leq\varepsilon\) gives a worst-case state-vector error bound over all normalized input states.}.

Assume that every non-native gate is first approximated over the finite gate-set
\(\mathcal G_{\mathrm{SK}}^{(q)}\) using the Solovay--Kitaev algorithm, and that the
resulting gates are then translated with constant overhead into our generator set \(\mathcal G\). Then the resulting brickwork graph
\(\mathcal B_\varepsilon\) has depth
\begin{equation}
D(\mathcal B_\varepsilon)
=
\mathcal O\!\left(
\omega n \log^{\kappa}\!\frac{n}{\varepsilon}
\right),
\end{equation}
where \(\kappa>0\) is the Solovay--Kitaev exponent depending on the particular construction used e.g. for the standard
Solovay--Kitaev construction \(\kappa \approx 3.97\).
\end{theorem}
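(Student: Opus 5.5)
The plan is to reduce to Theorem~\ref{thm:bw_general_alignment} by replacing each non-native gate with a Solovay--Kitaev sequence over \(\mathcal G\), so that \(C\) becomes a circuit in the native gate classes with gate count \(\mathcal O(n\log^{\kappa}(n/\varepsilon))\) and width \(\mathcal O(\omega)\).

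\textbf{Step 1: error budgeting.} The circuit contains at most \(n\) non-native gates \(U_1,\dots,U_m\), with \(m\le n\). Approximate each \(U_j\) by \(\tilde U_j\) with \(\|U_j-\tilde U_j\|\le\varepsilon/n\). Operator-norm errors are subadditive under composition of unitaries, since \(\|AB-\tilde A\tilde B\|\le\|A-\tilde A\|+\|B-\tilde B\|\) for unitary factors. A telescoping argument over the gate sequence then bounds the total error by \(m\varepsilon/n\le\varepsilon\).

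\textbf{Step 2: per-gate compilation.} Each \(U_j\) acts on at most \(q=O(1)\) qubits, so it is an element of \(\mathrm{U}(2^q)\).
\begin{itemize}
\item Applying Solovay--Kitaev over \(\mathcal G_{\mathrm{SK}}^{(q)}\) restricted to those \(q\) wires gives a sequence of length \(\ell=\mathcal O(\log^{\kappa}(n/\varepsilon))\), which approximates \(U_j\) up to a global phase. The hidden constant depends on \(q\) but not on \(\omega\) or \(n\).
\item There is a standard subtlety: Solovay--Kitaev is stated for \(\mathrm{SU}(d)\), so \(U_j\) is first rescaled by a phase. Density of \(\langle H,T,\mathrm{CX}\rangle\) in \(\mathrm{SU}(2^q)\) is assumed as part of "approximately universal".
\item Each element of \(\mathcal G_{\mathrm{SK}}^{(q)}\) has a constant-size implementation over \(\mathcal G\), using the identities for \(T\), \(T^\dagger\) and \(H\) given before the theorem. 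The compiled gate therefore becomes \(\mathcal O(\ell)\) rotations and CX gates.
\item These CX gates may be non-adjacent, since the \(q\) wires need not be contiguous in the brickwork ordering.
\end{itemize}

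\textbf{Step 3: invoking the deterministic bound.} The native gates of \(C\) (at most \(n\)) are kept as they are. The new circuit \(C'\) has width \(\omega\) plus one ancilla and gate count
\[
n' \le n + n\cdot c_q \log^{\kappa}\!\tfrac{n}{\varepsilon} = \mathcal O\!\left(n\log^{\kappa}\!\tfrac{n}{\varepsilon}\right),
\]
with every gate in the classes of Theorem~\ref{thm:bw_general_alignment}. Applying that theorem (equivalently Corollary~\ref{cor:explicit_bw_gate_count}) gives \(D(\mathcal B_\varepsilon)\le 2a_{\mathrm{MCU}}(\omega+1)n'\), which is the claimed \(\mathcal O(\omega n\log^{\kappa}(n/\varepsilon))\).

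\textbf{Main obstacle: routing cost.} I expect the main obstacle to be controlling the routing cost without picking up an extra factor. Charging each non-adjacent CX separately via Lemma~\ref{lem:nonadj_cx_overhead} gives cost \(\mathcal O(W)\) per CX. That yields \(\mathcal O(\omega\ell)\) per compiled gate, which is still within the bound, so the naive charging already suffices.

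A tighter alternative, which I would mention but not need, is the following. First swap the \(q\) wires into a contiguous block at cost \(\mathcal O(q\omega)\). Then run the sequence locally at cost \(\mathcal O(\ell)\), and swap back. This gives \(\mathcal O(\omega+\ell)\) per gate.

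The remaining care points are these:
\begin{itemize}
\item The dependence on \(q\) must be absorbed into constants, which requires \(q=O(1)\).
\item If the approximately universal input gate set contains finitely many fixed gates, each is treated either as native or as non-native; either way it contributes at most \(\mathcal O(\omega\log^{\kappa}(n/\varepsilon))\) depth.
\item The case \(\varepsilon\ge 1\) or \(n/\varepsilon<e\) is trivial after replacing \(\log\) by \(\max\{1,\log\}\).
\end{itemize}
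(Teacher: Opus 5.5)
Your proposal is correct and follows the paper's proof essentially step for step. Both arguments allocate local error $\varepsilon/n$ to each non-native gate, compile each one via Solovay--Kitaev into $\mathcal{O}(\log^{\kappa}(n/\varepsilon))$ gates of $\mathcal G_{\mathrm{SK}}^{(q)}$, translate these into $\mathcal G$ with constant overhead, and apply Theorem~\ref{thm:bw_general_alignment} to the resulting circuit with $n_\varepsilon=\mathcal{O}(n\log^{\kappa}(n/\varepsilon))$ gates. Your extra remarks only spell out points the paper leaves implicit: the telescoping justification of error subadditivity, the global-phase rescaling, and the observation that Theorem~\ref{thm:bw_general_alignment} already absorbs the per-CX routing cost.
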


\begin{proof}
Let the target total approximation error be \(\varepsilon\). Since the input circuit
contains \(n\) gates, it suffices to approximate each non-native gate to accuracy
\begin{equation}
\varepsilon_{\mathrm{loc}}
=
\frac{\varepsilon}{n}.
\end{equation}
Then the accumulated error is bounded by
\begin{equation}
n\varepsilon_{\mathrm{loc}}
=
\varepsilon.
\end{equation}

By the Solovay--Kitaev theorem, each non-native gate can be approximated over the finite
approximately universal gate-set \(\mathcal G_{\mathrm{SK}}^{(q)}\) using at most
\begin{equation}
\mathcal O\!\left(
\log^{\kappa}\!\frac{1}{\varepsilon_{\mathrm{loc}}}
\right)
=
\mathcal O\!\left(
\log^{\kappa}\!\frac{n}{\varepsilon}
\right)
\end{equation}
gates, where \(\kappa>0\) is the Solovay--Kitaev exponent. Since \(q=O(1)\), the constants
hidden in this bound are independent of \(n\), \(\omega\), and \(\varepsilon\).

Let \(C_\varepsilon\) be the circuit obtained after replacing every non-native gate by its
Solovay--Kitaev approximation. The cumulative error satisfies
\begin{equation}
\|C-C_\varepsilon\|
\le
\sum_{j=1}^{n}\varepsilon_{\mathrm{loc}}
=
n\varepsilon_{\mathrm{loc}}
=
\varepsilon.
\end{equation}

Therefore, after this preliminary gate-set translation, the gate complexity of the
approximating circuit \(C_\varepsilon\) is asymptotically bounded above by
\begin{equation}
n_\varepsilon
=
\mathcal O\!\left(
n\log^{\kappa}\!\frac{n}{\varepsilon}
\right).
\end{equation}

Each gate in \(\mathcal G_{\mathrm{SK}}^{(q)}\) is then translated with constant overhead
into our brickwork-native gate family
\[
\mathcal {G}
=
\{R_x(\theta),R_z(\theta),\mathrm{CX}\}.
\]
Hence the same asymptotic bound holds for the circuit \(C_\varepsilon\) expressed in terms of $\mathcal{G}$.

We now apply Theorem~\ref{thm:bw_general_alignment} to the translated circuit
\(C_\varepsilon\). Since \(C_\varepsilon\) is expressed over the gate classes handled by
the brickwork transpiler, namely single-qubit rotations, adjacent \(\mathrm{CX}\) gates,
non-adjacent \(\mathrm{CX}\) gates, and multi-controlled \(\mathrm U\) gates, its brickwork
depth satisfies
\begin{equation}
D(\mathcal B_\varepsilon)
=
\mathcal O(\omega n_\varepsilon).
\end{equation}
Substituting the bound on \(n_\varepsilon\), we obtain
\begin{equation}
D(\mathcal B_\varepsilon)
=
\mathcal O\!\left(
\omega n\log^{\kappa}\!\frac{n}{\varepsilon}
\right).
\end{equation}
Hence the brickwork depth grows linearly in the circuit width \(\omega\), linearly in the
original gate complexity \(n\), and only polylogarithmically in the inverse approximation
accuracy.
\end{proof}

Since the finite set \(\mathcal G_{\mathrm{SK}}^{(q)}\) is approximately universal and is implementable with constant overhead over
\[
\mathcal G=\{R_x(\theta),R_z(\theta),\mathrm{CX}\},
\]
any constant-locality unitary gate can be approximated over the brickwork-native family with Solovay--Kitaev overhead
\[
\log^\kappa\frac{n}{\varepsilon}
\]
after allocating local error \(\varepsilon/n\). Consequently, the brickwork depth remains
\[
\boxed{
D(\mathcal B_\varepsilon)
=
\mathcal O\!\left(
\omega n\log^\kappa\frac{n}{\varepsilon}
\right).
}
\]
    
% \section{General Brickwork Transpilation Cost}
%\subsection{Brickwork depth scaling of quantum learning circuits}
\subsection{Brickwork Depth Scalings of Blind Quantum Learning Protocols}

\noindent

Having established an upper bound on the cost of implementing blind quantum algorithms for a range of gate classes, we now specialize to a selected set of quantum machine learning (QML) circuits. We begin by showing how our transpilation procedure yields an upper bound on the brickwork depth $D$ for a well-known and analytically tractable example, the quantum Fourier transform (QFT), before turning to more general classes of QML algorithms and calculating their blind complexity i.e. the depth of the brickwork graph state needed to implement them blindly. This adds overhead compared to the unblinded circuit complexity for our selected QML circuits, but, as we will show, sometimes smaller one than may be expected based on the classical intuition that privacy (blindness) makes learning difficult.

For the QFT on $\log_2 N$ qubits, corresponding to an $N$-dimensional input state, the gate complexity (i.e. the number of gates) is $\mathcal{O}(\log^2 N)$. Since the circuit width is $\omega=\log_2 N$, Theorem~\ref{thm:bw_general_alignment} implies
\begin{equation}
\label{DepthQFT}
D(\mathcal{B}_{QFT})  = \mathcal{O}\!\big(\omega\log^2 N\big)
   = \mathcal{O}\!\big(\log^3 N\big).
\end{equation}

Thus, for the exact QFT, the brickwork graph depth grows polylogarithmically with the input size $N$. We verify this scaling by transpiling QFT circuits of increasing size and recording the depth of the resulting brickwork graphs. The observed behavior is shown in Fig.~\ref{fig:res_qft}a.

\begin{figure*}[!ht]
    \centering

    \begin{minipage}[t]{0.48\textwidth}
        \centering
        \includegraphics[width=\linewidth]{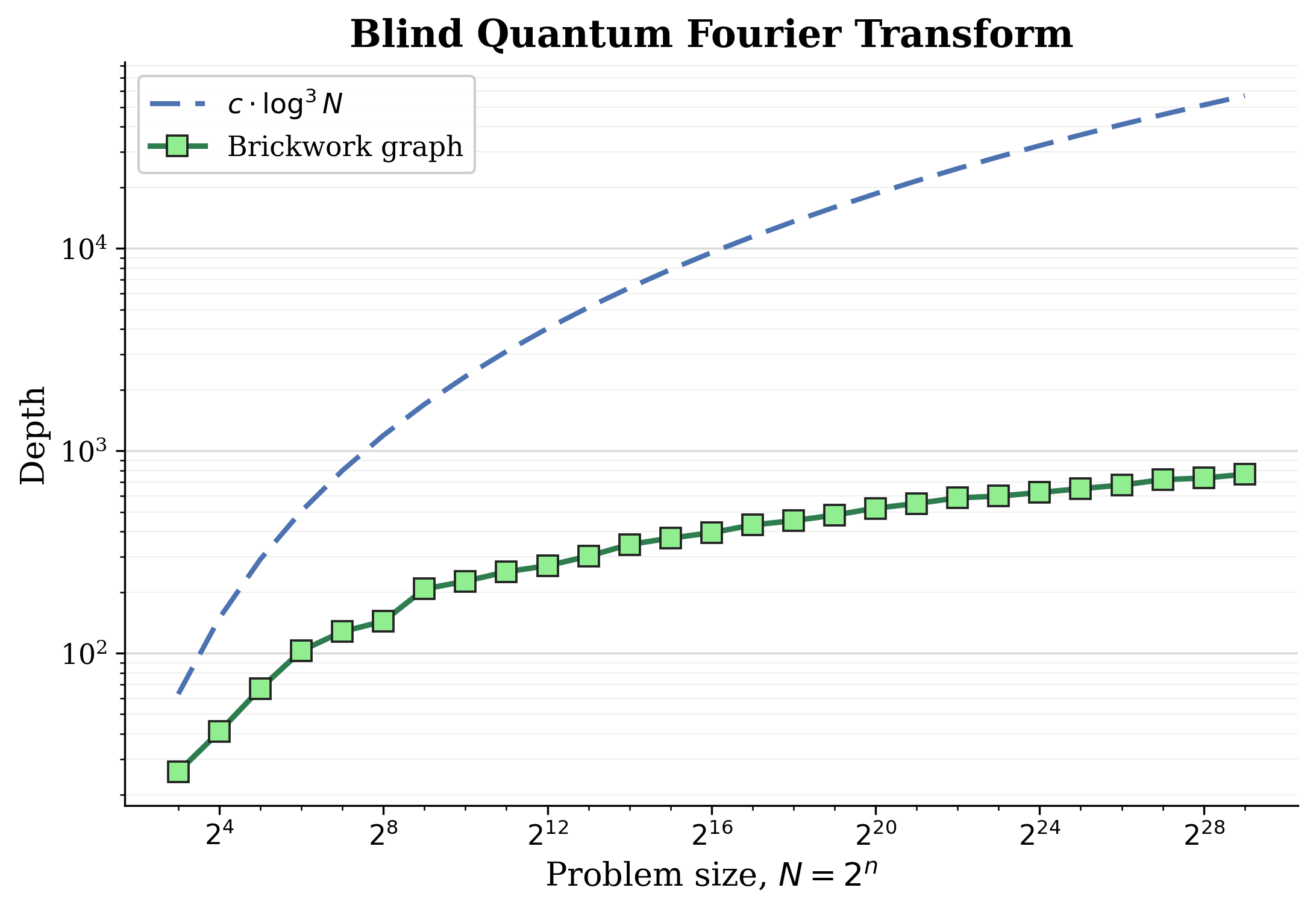}
        \small (a)
    \end{minipage}
    \hfill
    \begin{minipage}[t]{0.48\textwidth}
        \centering
        \includegraphics[width=\linewidth]{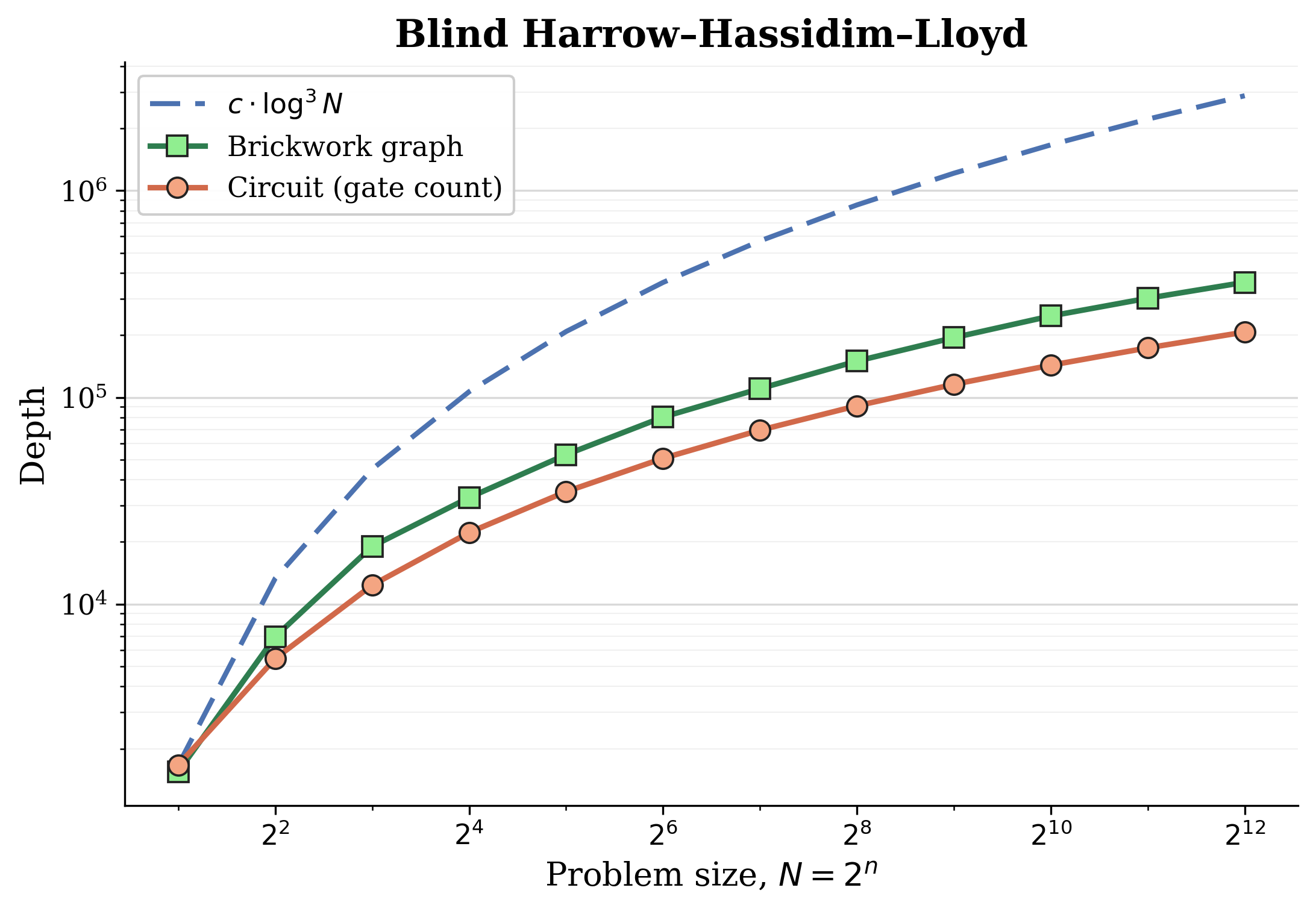}
        \small (b)
    \end{minipage}

\caption{Scaling of brickwork depth $D(\mathcal{B})$ for transpiled blinded QFT and HHL circuits as functions of the problem size $N$. \textbf{(a)} QFT circuits, with growth of the depth consistent with a scaling upper bounded by $\mathcal{O}(\log^3 N)$, with $N$ the dimension of the input state. \textbf{(b)} HHL circuits for the tridiagonal Toeplitz matrix with a phase register of fixed size, with growth of the depth consistent with a scaling upper bounded by $\mathcal{O}(\log^3 N)$. Further details on Blind HHL, including inputs, sparsity, condition number, and error, are given in Appendix~\ref{sec:res_hhl}.}\label{fig:res_qft}
%     \caption{
% Scaling of brickwork depth $D(\mathcal{B})$ for transpiled blinded QFT and HHL circuits as functions of the problem size $N$. \textbf{(a)} QFT circuits, with growth of the depth consistent with a scaling upper bounded by $\mathcal{O}(\log^3 N)$, with $N$ the dimension of the input state. \textbf{(b)} HHL circuits, showing qualitatively polylogarithmic scaling with the problem size $N$, which grows exponentially in the number of qubits $n$. Further details on Blind HHL, including inputs, sparsity, condition number, and error, are given in Appendix~\ref{sec:res_hhl}.
%     }
\end{figure*}

%\todoTz{Order so top legend is brickwork depth upperbound, middle is calculated brickwork depth, bottom is original gate complexity}

%\todoTz{One more plot! Maybe as double to plot above. Use subfigure so it doesn't need full span. Plot Depth of B-HHL vs HHL using fixed gate-set native for MBQC. Use the nice numerical transpiler you made! REX: Great point, I will make plots later after the text has improved! }

% \subsubsection{An Algorithm for Bounding General Transpilation Cost}
This example thus motivates a general strategy, which we introduce here, for deriving upper bounds on the resource overhead required to implement quantum algorithms in blind brickwork form. To our knowledge, no such general procedure has previously been available. Given a quantum circuit, the estimation proceeds in four steps:
\begin{enumerate}[label=(\roman*)]
    \item Verify that the gates used by the circuit admit a linear decomposition into the gate classes used in the proof of Theorem~\ref{thm:bw_general_alignment}. If they do not, then the present procedure does not directly apply, and a different decomposition or an extension of the gate-set is required.
    \item Identify the circuit width $\omega$ (number of qubits) and gate complexity $n$.
    \item Apply Theorem~\ref{thm:bw_general_alignment} to obtain the bound $\mathcal{O}(\omega n)$ on the resource overhead.
    \item Rewrite this bound in terms of the natural input parameter of the algorithm, and simplify the resulting expression.
\end{enumerate}

Applying this procedure yields the detailed analyses of several quantum machine learning algorithms presented in Appendix~\ref{app:qml_analysis}. The resulting bounds form the basis of the catalogue of blind quantum machine learning algorithms summarized in Table~\ref{tab:complexity_comparison}.

Taken together, these examples show that blindness does not generally change the qualitative scaling of efficient quantum machine learning algorithms, but instead introduces a structured overhead governed by circuit width. For algorithms that are already polylogarithmic in the problem size, such as the QFT and HHL with bounded condition number and sparsity, the blind implementation remains polylogarithmic, see Fig. \ref{fig:res_qft}b. For database-driven quantum recommendation systems (QRS), blindness contributes an
additional logarithmic factor, yielding $\mathcal{O}(L^2\log L)$ when data loading is
included (Fig.~\ref{fig:res_qrs}b) and $\mathcal{O}(\sqrt{L}\log L)$ when inference is
performed on a pre-encoded quantum database (Fig.~\ref{fig:res_qrs}a). Here, $L$ denotes the number of entries in the database. Notably, this inference complexity is sublinear in the database size \(L\) and scales more favorably than that of classical recommendation systems under comparable blindness constraints (Appendix~\ref{sec:qrs_no_db}).

Quantum transformer architectures provide another class of quantum learning protocols amenable to blinding. In the quantum vision transformer of~\cite{Cherrat_2024}, an image is divided into \(n\) patches, each represented by a \(d\)-dimensional feature vector, and an attention mechanism learns correlations between these patches. The resulting \(n\times d\) patch--feature matrix is encoded using an \(n\)-qubit patch register and a \(d\)-qubit feature register, giving a total circuit width of \(n+d\) qubits. Applying the same transpilation principle as for the B-QRS therefore yields a width-dependent upper bound: the blind complexity, measured by the depth of the corresponding brickwork graph,
is at most the native circuit depth multiplied by $(n+d)^{2}$, since a circuit of
depth $T$ on $\omega=n+d$ qubits contains at most $\omega T$ gates and
Theorem~\ref{thm:bw_general_alignment} charges one factor of $\omega$ per gate. Overall, Table~\ref{tab:complexity_comparison} shows that in all our studied cases the cost of blindness is systematic and analyzable. Furthermore, for HHL, and the inference part of QRS, the addition of blindness does not disrupt the favorable asymptotic scaling of the underlying quantum algorithm, especially in comparison to their classically blinded counterparts as discussed in Appendix~\ref{app:comparison_scope} and Appendix~\ref{app:qml_analysis}. Of course, we only analyze here a select number of QML circuits as examples; our framework and the MasonQ software package enable further, open-ended study into B-QML protocols, incorporating new QML circuits and strategies as they are developed. 

\begin{figure*}[!ht]
    \centering

    \begin{minipage}[t]{0.48\textwidth}
        \centering
        \includegraphics[width=\linewidth]{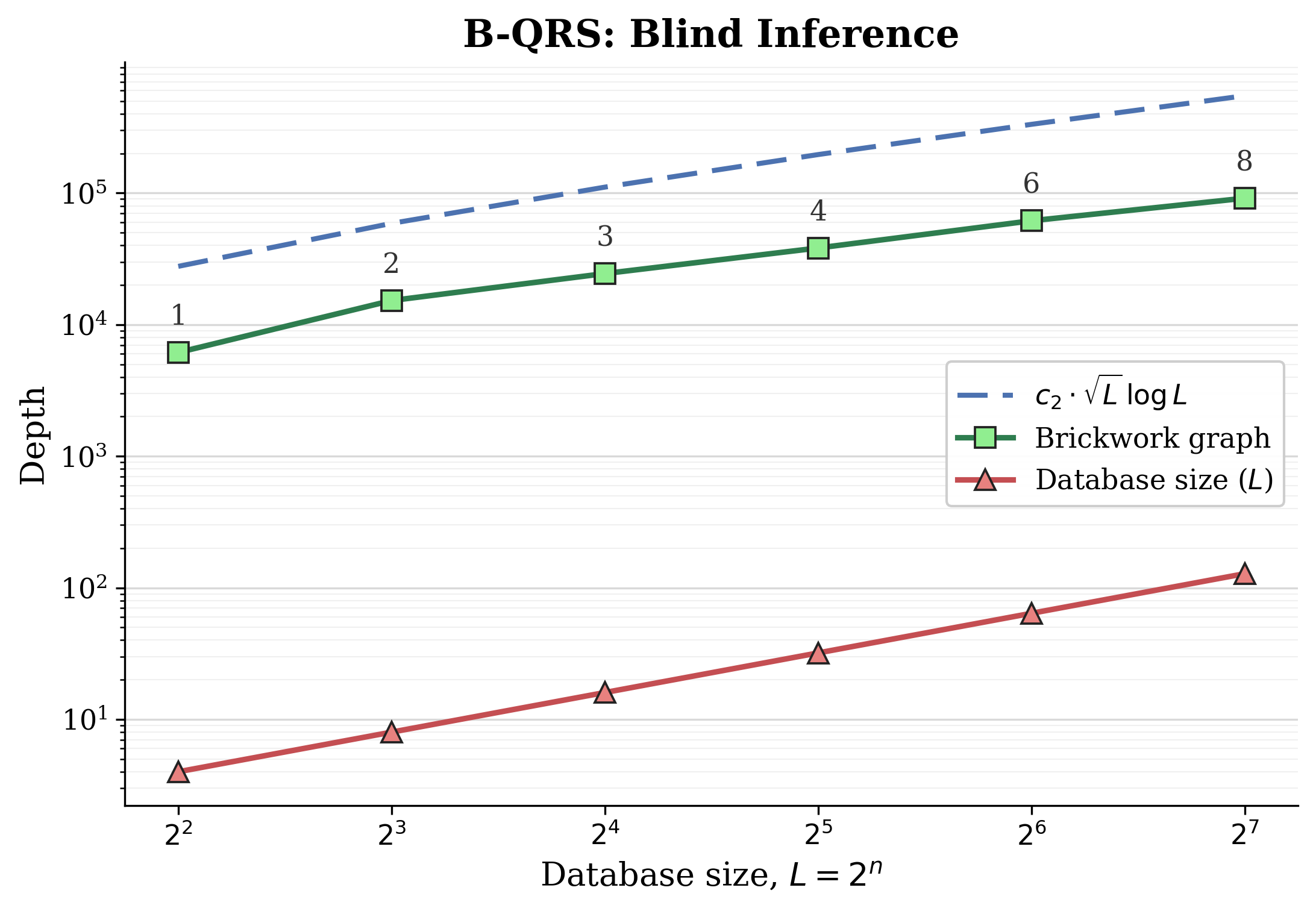}
        \small (a)
    \end{minipage}
    \hfill
    \begin{minipage}[t]{0.48\textwidth}
        \centering
        \includegraphics[width=\linewidth]{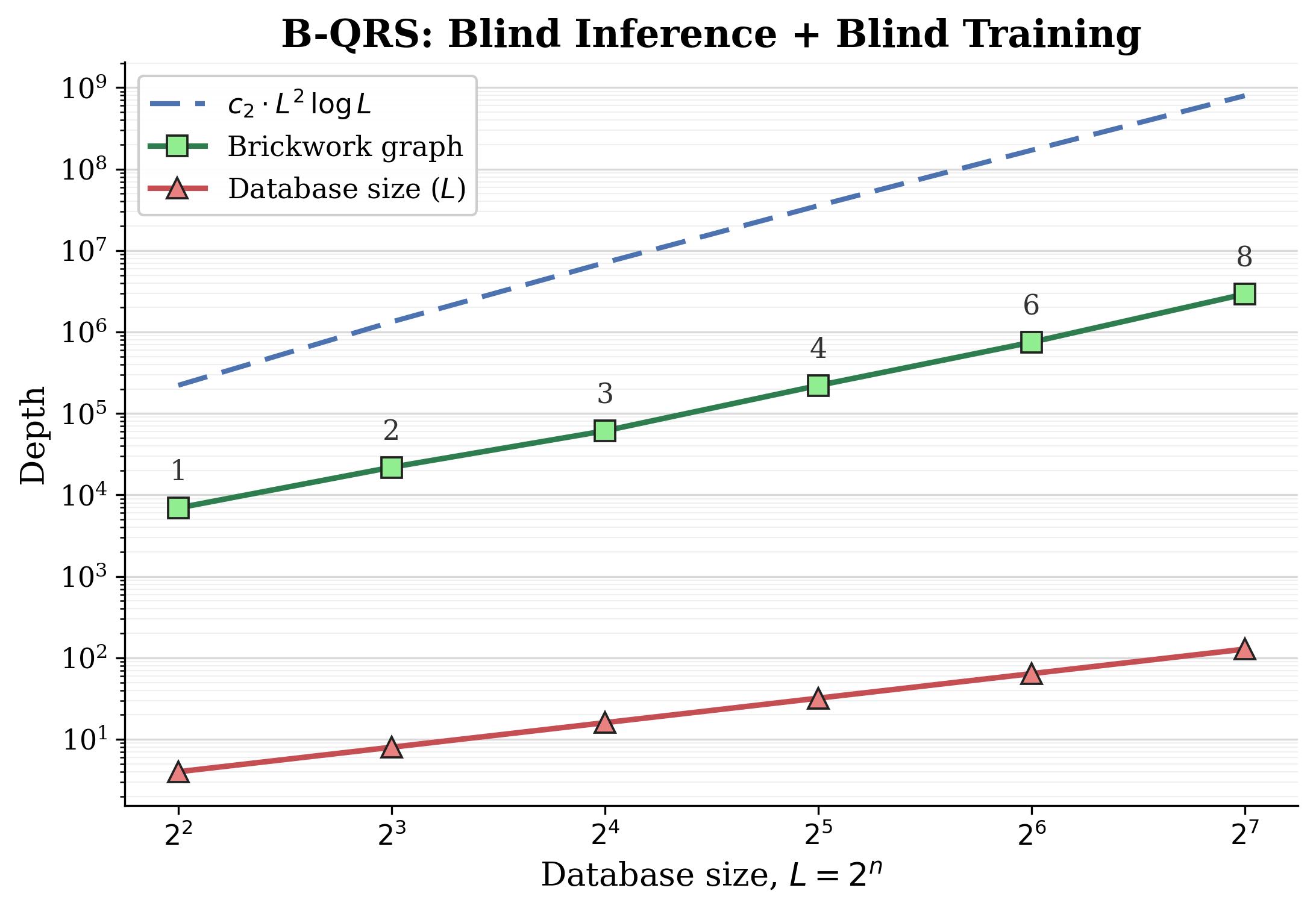}
        \small (b)
    \end{minipage}

\caption{Scaling of brickwork depth $D(\mathcal{B})$ for a transpiled blind quantum recommendation system (B-QRS) is plotted in green, with the number of Grover iterations/amplifications as a number above the line, including (right) and excluding (left) the cost of blindly training the model. Here the output of the algorithm is a single match from the quantum recommendation system, so that running the algorithm $k$ times yields $k$ recommendations; we relax this assumption in Fig. \ref{fig:res_qrs_full_plots}. Here, the horizontal axis is the size of the database, $L=2^n$, with $n$ the number of qubits used to represent the database. We plot in red the database size $L$ for reference. We also include our analytic upper bound (dashed line). 
\textbf{(a)} Blind inference from a pre-trained model, with growth consistent with $\mathcal{O}(\sqrt{L}\,\log L)$.
\textbf{(b)} Blind inference including the cost of blindly training the model with growth consistent with $\mathcal{O}(L^2 \log L)$. Note the large difference in vertical axes; there is a cost to performing the training blindly as well.
Detailed analyses are given in Appendices~\ref{sec:qrs_no_db} and~\ref{app:bqrs_with_db}.
}
    \label{fig:res_qrs}
\end{figure*}

\begin{table*}[t]
\centering
\caption{B-QML algorithm catalogue, showing upper bounds
on unblinded computational resources and brickwork
depth under the assumptions stated in the text.
For QFT, \(N\) is the transform dimension; for HHL,
\(N\) is the matrix dimension, \(\kappa\) its condition
number, \(s\) the maximum number of nonzero entries
per row, and \(\varepsilon\) the target approximation
error. For QRS, \(L\) is the number of database entries,
with fixed feature width; inference-only bounds exclude
database encoding and assume the stated coherent-access
model. For the Quantum Compound Transformer, \(n\) is
the number of input patches and \(d\) the feature
dimension per patch.}
% \caption{The B-QML algorithms catalogue, with upper bounds to both unblinded computational complexity and the blinded complexity derived in this paper.}
\label{tab:complexity_comparison}
\small
\setlength{\tabcolsep}{5pt}
\begin{tabularx}{\textwidth}{
  >{\raggedright\arraybackslash}p{0.28\textwidth}
  >{\raggedright\arraybackslash}X
  >{\raggedright\arraybackslash}X
}
\toprule
\toprule
Algorithm & Upper-bound computational complexity & Upper-bound blind complexity \\
\midrule
\midrule
Quantum Fourier Transform
& $\mathcal{O}(\log^2 N)$
& $\mathcal{O}(\log^3 N)$ \\
\midrule

HHL \cite{Harrow2009}
& $\mathrm{poly}\!\bigl(\kappa,\, s,\, 1/\varepsilon\bigr)\cdot \mathrm{polylog}(N)$
& $\mathrm{poly}\!\bigl(\kappa,\, s,\, 1/\varepsilon\bigr)\cdot \mathrm{polylog}(N)$ \\
\midrule
QRS (QK-NN + Grover) $+$ database encoding \cite{Sawerwain_Wróblewski_2019}
& $\mathcal{O}(L^2)$
& $\mathcal{O}(L^2 \log L)$ \\
\addlinespace
QRS (QK-NN + Grover) inference only
& $\mathcal{O}(\sqrt{L})$
& $\mathcal{O}(\sqrt{L}\log L)$ \\
\midrule
Quantum Compound Transformer~\cite{Cherrat_2024} &
$\mathcal{O}\big(\log n + n\log d + \log(n+d)\big)$ (depth) &
$\mathcal{O}\big((n+d)^{2}\,(\log n + n\log d + \log(n+d))\big)$ \\
% & $\mathcal{O}\!\bigl(\log n + n \log d + \log (n + d)\bigr)$
% & $\mathcal{O}\!\bigl((n + d)(\log n + n \log d + \log (n + d))\bigr)$ \\
\bottomrule
\bottomrule
\end{tabularx}
\end{table*}

\subsection{A Rudimentary B-QML Experiment}

In light of the encouraging scalings summarized in Table~\ref{tab:complexity_comparison}, we propose an %next ask whether these asymptotic results can support a concrete 
experimental demonstration of blind quantum machine learning with near-term resources. This task is important because the practical value of the derived bounds lies not only in classifying asymptotic overheads (i.e. the differences between the columns in Table~\ref{tab:complexity_comparison}), but also in identifying protocols that remain implementable at nontrivial system sizes. Here, %the aim is therefore not only to present asymptotic resource estimates, but to 
we explicitly construct a blind implementation of a nontrivial quantum machine-learning routine and determine its exact overhead for a minimal working instance.

Motivated by this, we propose an experimental implementation of a blind quantum recommendation system (B-QRS), requiring $12$ logical qubits and $750$ qubits prepared via remote state preparation (RSP). %For this smallest nontrivial instance, the blindness overhead is known exactly rather than only asymptotically.In particular, 
Starting from a six-qubit circuit-level implementation, we explicitly transpile the computation into its brickwork graph-state form and obtain a resource count for the brickwork graph. Specifically, we find we need a brickwork graph $6$ logical qubits wide and $125$ qubits deep. This can be implemented on hardware holding $12$ logical qubits in memory using a cylindrical register (i.e. recycling the qubits in a brickwork column after its measurement), provided their memories are sufficiently long to allow for RSP of $750$ qubit states. This gives a concrete benchmark for the cost of blindness in this setting, and a goalpost for experimentalists looking to implement this protocol on near-term hardware. 

We consider the smallest nontrivial instance of the QRS algorithm introduced in \cite{Sawerwain_Wróblewski_2019}. Specifically, we use a database with two entries, each represented by a feature vector of size \(2\). We choose the database entries to be
\[
\mathcal{D}=\{(0,0),(0,1)\},
\]
so that the database can be prepared from the initial computational basis state using a single \(\mathrm{CX}\) gate. The user feature register consists of two qubits encoding
the query feature vector \(t=(0,1)\) in the computational basis. Starting from \(\lvert 00\rangle\), this register is initialized using a single \(X\) gate, completing the data-loading stage. The QKNN routine compares this user vector with the database entries to determine the recommendation. This setting corresponds to blind utilization of the recommendation system: the recommendation is inferred from the encoded database entries, and the encoding required for this inference is performed blindly.

The corresponding circuit is shown in Fig.~\ref{fig:experiment_circs2}a, while the corresponding decomposed version is presented in Fig.~\ref{fig:experiment_circs2}b. The associated brickwork graphs, both the encrypted and the colored representations, are omitted from the appendix and instead are hosted on the project’s GitHub repository \url{https://github.com/Nonsens24/brickwork_transpiler/blob/main/images/graphs/Brickwork%20graph%3A%20Minimal%20QRS.pdf} as their diagrams exceed the size of an A4 page. The experiment could also be extended to support $\varepsilon$-verifiability by interweaving test rounds with trap qubits as detailed elsewhere ~\cite{Fitzsimons_Kashefi_verify_bqc}.

\begin{figure}[htbp]
    \centering

    \begin{minipage}[t]{0.8\textwidth}
        \centering
        \includegraphics[width=\linewidth]{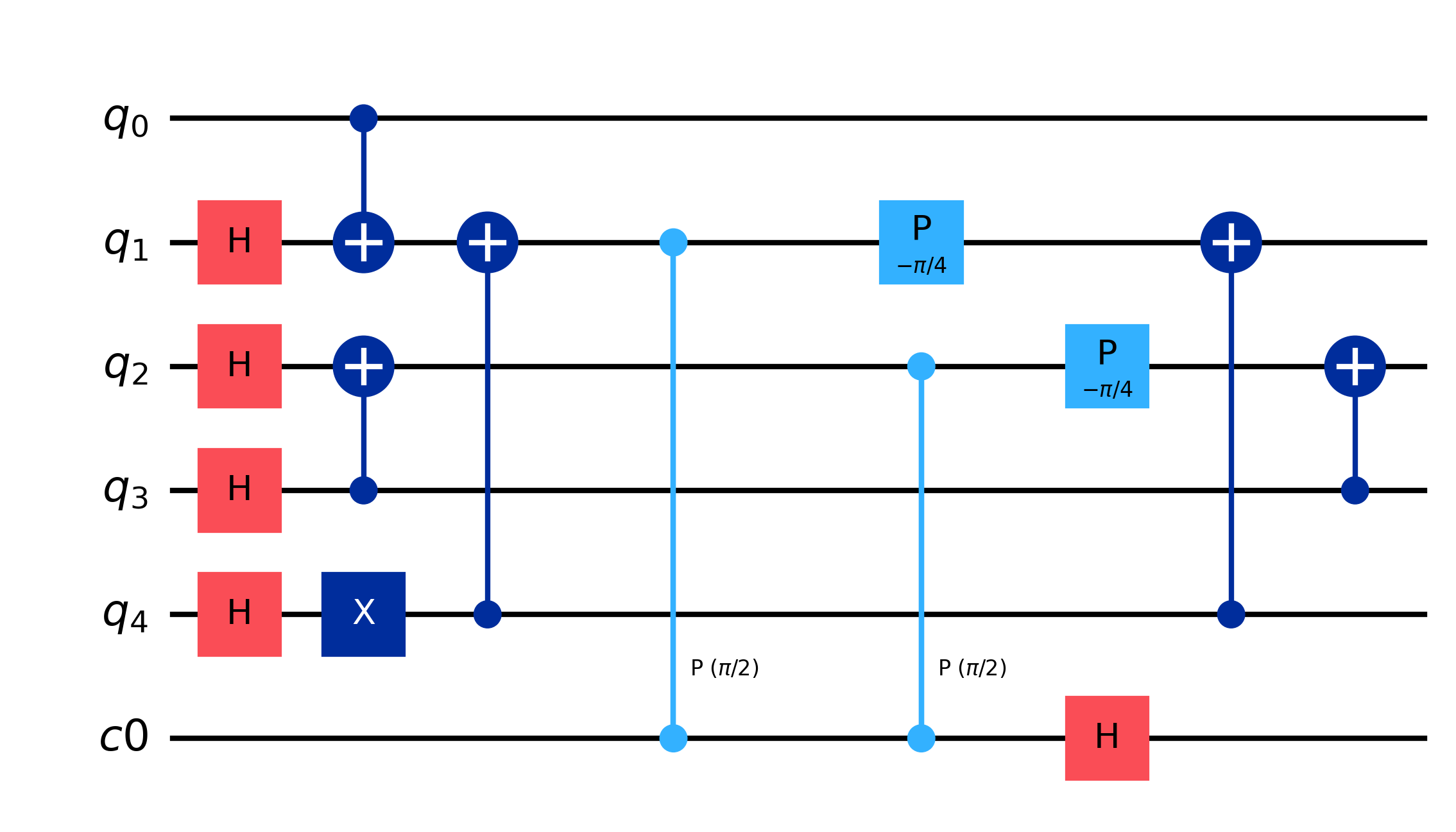}\\[-0.2em]
        \small (a)
    \end{minipage}

    % \vskip0.8em

    \begin{minipage}[t]{\textwidth}
        \centering
        \includegraphics[width=\linewidth]{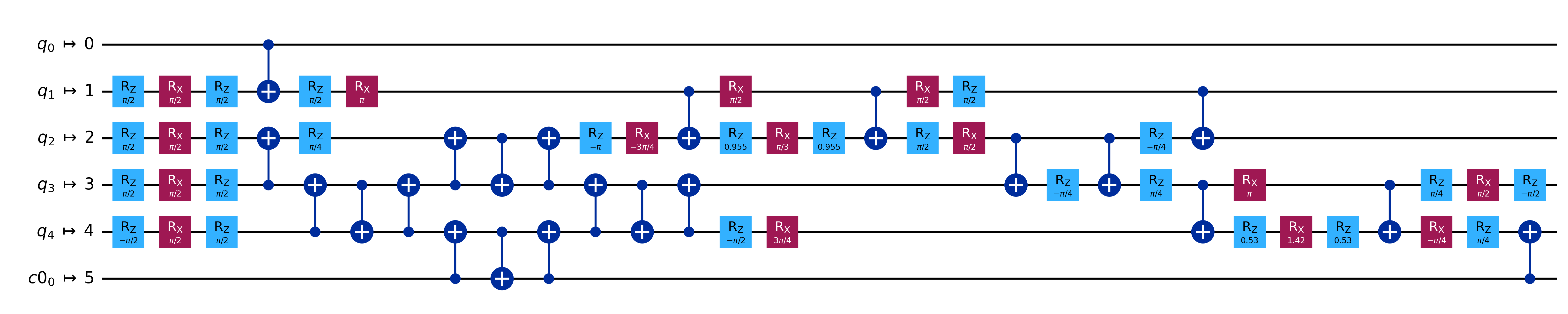}\\[-0.2em]
        \small (b)
    \end{minipage}

    \caption{
    \textbf{(a)} Non-decomposed minimal QRS circuit.
    \textbf{(b)} Decomposed circuit satisfying the gate constraints imposed by the brickwork graph.
    The minimal quantum recommendation system contains only the quantum \(k\)-nearest-neighbor (QKNN) and database-encoding components, based on the quantum recommendation system proposed by Sawerwain and Wróblewski \cite{Sawerwain_Wróblewski_2019}.
    }
    \label{fig:experiment_circs2}
\end{figure}

The second stage implements the quantum \(k\)-nearest-neighbor (QKNN) routine~\cite{qknn_svm,Sawerwain_Wróblewski_2019}. This consists of computing the Hamming distance between the user feature vector and each database entry, followed by a summation of these distances into an auxiliary register. The intermediate distance information is then uncomputed, leaving the database register in a state whose measurement probabilities encode the relative Hamming distances to the user vector. Measuring the database register therefore yields a recommendation biased towards the nearest entry. If desired, this bias can be amplified using Grover amplitude amplification~\cite{Grover1997, Sawerwain_Wróblewski_2019}, at the cost of increased circuit depth, and is therefore omitted in this experiment.

The full circuit requires only six qubits in total, one of which is a single ancilla used for QKNN. The main caveat is that the results must be post-selected on the measurement outcome of the ancilla qubit $c_0$, retaining only those runs where $c_0 = 0$. For the feature vector $\langle 0,1\rangle$, this occurs with probability
\begin{equation}
  \mathbf{P}(c_0=|0\rangle)
  \;=\;\frac{1}{L}\sum_{p}\cos^{2}\!\Big(\frac{\pi}{2l}\,d(t,r^{p})\Big)
  \;\rightarrow\;\frac{3}{4},
\end{equation}
where $L$ is the size of the database, $l=2$ the number of feature bits, and
$d(t,r^{p})$ the Hamming distance between the user feature vector $t$ and the
$p$th database entry $r^{p}$; the two entries have $d=0$ and $d=1$. Conditioned on
$c_0=0$, the database register is measured in the matching entry with probability
$2/3$ (Fig.~\ref{fig:four_plots_results_experiment}b). The probabilistic nature of $c_0$ can be classically mitigated with parallel computing and therefore does not impose restrictions on the complexity~\cite{Sawerwain_Wróblewski_2019}. We performed a numerical experiment using Graphix, verified against noiseless Qiskit Aer simulations. After aligning qubit ordering, the state vectors agreed up to global phase. Simulated histograms from \(4096\) shots after post-selection on \(c_0=0\) and the Graphix state-vector output distributions are presented in Appendix~\ref{app:experiment}, Figs.~\ref{fig:four_plots_from_statevector_gx} and~\ref{fig:four_plots_results_experiment}, respectively.

% Simulated experimental data was obtained using Graphix simulations, and the verified against noiseless state-vector simulations in Qiskit Aer. After aligning the qubit ordering with that of the transpiled circuit, the two state vectors were found to be equivalent up to global phase. Measured histograms were generated from \(4096\) shots after post-selection on \(c_0 = 0\) and are presented in Appendix~\ref{app:experiment},   Fig.~\ref{fig:four_plots_from_statevector_gx}, while the corresponding state-vector output distributions from Graphix are shown in Appendix \ref{app:experiment}, Fig.~\ref{fig:four_plots_results_experiment}, confirming agreement between the two approaches.

The key message of this example is that, for a minimal but nontrivial recommendation task, % we do not only estimate the asymptotic blindness overhead, but explicitly compile the circuit into its blind measurement-based form and determine the exact resource count. That is, a six-qubit circuit-level implementation becomes -- 
a blind implementation requires $12$ logical qubits, $750$ RSP qubits, and a brickwork graph with depth $125$. Although deliberately small, this example provides a concrete benchmark at a scale that is now directly relevant to current logical-qubit experiments, thereby linking the asymptotic resource analysis to experimentally accessible fault-tolerant regimes~\cite{Acharya2025}. 

However, we emphasize that we still require our \(12\) qubits to preserve their coherence long enough to make \(750\) RSP qubit states, \(750\) single-qubit measurements and rotations, and \(900\) CZ gates; even though each qubit is only used for $1/125$th of the protocol before being measured and reset, the effects of decoherence accumulate with depth as in any quantum circuit. Long lived qubits and fast operations, including RSP, will be needed; next-generation RSP of graph states will further decrease the needed resources \cite{Cai26}. 

\section{The Interplay of Learning and Hiding}

A central question throughout this work has been whether
machine learning and blind quantum computation are fundamentally
in tension. At first sight, they appear to be: machine learning
aims to extract structure from data, whereas blindness is designed
to conceal those data and the computation from the server.
These objectives nevertheless concern different parties: the
client should learn from the result, while the server should
learn nothing beyond the prescribed leakage. Classical approaches
such as fully homomorphic encryption
\cite{pyte2023homomorphic}, approximate homomorphic encryption
\cite{Yuan2025AHE}, and garbled circuits \cite{Yao_garb_circs}
also reconcile these objectives, with computational costs that
depend on the task and security requirements.

Our results identify regimes in which the additional cost of
blind quantum computation is limited. At the level of the
measurement-based implementation, blindness is obtained by
local randomization and adaptive measurement instructions on
the brickwork resource state. For the gate classes considered
here, the brickwork-depth bound scales with the original gate
count multiplied by the circuit width. For the QRS encoding,
this width grows only logarithmically with the number of
database entries \(L\) at fixed feature width. As shown in the
two QRS rows of Table~\ref{tab:complexity_comparison}, the
resulting blind depth bounds are
\(\mathcal{O}(L^2\log L)\) including database encoding and
\(\mathcal{O}(\sqrt{L}\log L)\) for inference only.
Thus, under the stated input-access assumptions, the
logarithmic transpilation factor preserves the sublinear
inference-depth bound.

This distinction is especially important in settings where
data preparation and inference can be separated. Protocol 2
of \cite{ubqc}, Universal Blind Quantum Computation with
Quantum Inputs, permits blind delegation on supplied quantum
states. In the QRS setting of
Appendix~\ref{app:bqrs_with_db}, treating the database encoder
as a trusted resource on the client side places preparation
outside the delegated inference stage
(Appendix~\ref{sec:qrs_no_db}). This separates the costs rather
than eliminating them: the trusted side must supply the input
states and perform the local operations required by the
quantum-input protocol. A logarithmic input-register size
does not by itself establish efficient preparation or coherent
database access, including in a QRAM-style architecture
\cite{Jaques2025qramsurveycritique}. The inference bound assumes
that all coherent operations required during inference are
implemented at the stated cost.

Taken together, our results establish a width-dependent upper
bound on the depth of transpiled blind brickwork computations
and identify regimes in which favorable asymptotic scaling
survives compilation. They do not by themselves establish an
advantage over classical privacy-preserving learning. Such a
comparison requires matched tasks, output guarantees,
data-access assumptions, and leakage models, together with
a common resource measure.
Appendix~\ref{app:comparison_scope} discusses these requirements
and distinguishes transpilation bounds from end-to-end
advantage claims.

\section{Conclusions and Future Work} \label{sec:conclusion}
\noindent

In this paper, we introduced a transpilation framework for blind quantum algorithms, and used this to analyze machine learning that combines the cryptographic guarantees of universal blind quantum computation with the computational advantages of quantum learning algorithms. 

Within this framework and using the associated transpiler to assemble bricks into a quantum brickwork state, MasonQ, we derived complexity bounds for blind implementations of several quantum machine learning algorithms and illustrated the construction through a minimal blind quantum recommendation system experiment using $k$-nearest neighbors. Taken together, these results provide a concrete basis for assessing the feasibility of privacy-preserving quantum machine learning across near- to long-term quantum hardware scales. Furthermore, using our toolkit any researcher can analyze the cost of making their circuits blind in the UBQC framework, as unlike prior transpilation tools MasonQ can be used for UBQC transpilation even for large-brickwork depths ($\sim10^7$ in this work).

More specifically, we obtained brickwork-depth scalings of $\mathcal{O}(\log^3 N)$ for the quantum Fourier transform, $\mathcal{O}(\mathrm{poly}\!\bigl(\kappa,\, s,\, 1/\varepsilon\bigr)\mathrm{polylog}(N))$ for HHL, $\mathcal{O}(L^2\log L)$ for a quantum recommendation system, and $\mathcal{O}\!\left[(n+d)^2\bigl(\log n + n\log d + \log(n+d)\bigr)\right]$ for quantum transformers. Here, $N$ denotes the dimension of the register’s state space, $\kappa$ the condition number, $s$ the sparsity, $\varepsilon$ the target precision, $L$ the number of entries in the database, n is the number of image patches, and d is the size of the feature vector associated with each patch, so the image is represented as an n×d array of patch embeddings. To the authors' knowledge, these constitute the first rigorous asymptotic upper bounds on the brickwork depth of the B-QML implementations considered here.

We then compared these explicit QML circuit implementations with our UBQC brickwork-depth upper bound, \(\mathcal{O}(\omega n)\), which applies to circuits composed of single-qubit rotations, adjacent and non-adjacent CX gates, and multi-controlled single-target \(U(2)\) gates, with each multi-controlled gate counted once and one additional clean ancilla available. This comparison shows that, for the circuit families studied here, blindness does not necessarily introduce only a generic worst-case overhead. In several cases, the blind implementations preserve nontrivial asymptotic structure inherited from the original quantum algorithm. In particular, adding blindness to these QML circuits does not always change the qualitative scaling of efficient QML algorithms; For instance, the blind HHL depth bound remains polylogarithmic in \(N\), the dimension of the \(N\times N\) system matrix, for fixed sparsity, condition number, and target accuracy. Therefore, when the quantum speedup of a QML protocol exceeds the security overhead, for which we provide upper bounds, the protocol can in principle achieve a two-fold quantum advantage: privacy and computational efficiency.\footnote{As well as, possibly, an advantage in energy consumption \cite{FellousAsiani2023,Meier2025,Hou2025,Abderrahim2025}, though this is not yet studied in the quantum network context necessary for blind quantum machine learning.}.

Practically, B-QML provides an avenue to explore private delegated quantum learning architectures  where privacy, efficiency, and user control are treated as foundational design requirements from the outset. Furthermore, B-QML brings together two tasks that are in tension in classical settings: learning and hiding. Our analysis shows that, in the quantum domain, this tension need not translate into prohibitive overhead at the level of computational-graph complexity. A useful way to understand this is through the structure of quantum information itself: it is both natively private (due to the sensitivity to unknown phase rotations) and efficient for information encoding and manipulation (due to the superposition principle). In the constructions studied here, the principal overhead associated with blindness arises not from an intrinsic incompatibility between privacy and quantum learning, but largely from architectural constraints, especially the need to realize non-nearest-neighbor interactions through sequences of SWAP gates. This overhead could be further reduced through non-planar graph constructions as proposed in \cite{Ma2024}, partial blindness \cite{Lee2025,poshtvan2025selectivelyblindquantumcomputation_ref1}, and further algorithmic design tailored for blinded quantum learning tasks, which we propose exploring in future works. We expect that these may reduce the asymptotic upper-bounds in complexity derived in the present work.

In practice non-asymptotic optimality is also desirable, as are non-universal implementations of B-QML based on other schemes beyond measurement based quantum computation with potentially reduced resource requirements and amenable to NISQ era devices and networks \cite{Li2024,pasini2026}. It is our intention that MasonQ will continue being developed, expanded beyond planar brickwork graphs, and utilized to support future research into blind quantum computation, both in the asymptotic regime and for the near-term implementation. 

%There is strong motivation at the present stage of quantum science to develop applications that are not only relevant to current NISQ era devices, but can also become practical to everyday people as quantum hardware matures.

%Blind quantum machine learning contributes to this roadmap by showing that privacy can be incorporated at the level of protocol design, rather than being added only as a secondary constraint after algorithmic performance is established. In this sense, B-QML provides not only a model for private delegated quantum learning, but also a route toward quantum learning architectures in which privacy, efficiency, and user control are treated as foundational design requirements from the outset. 

%MasonQ also enables new lines of investigation into brickwork transpilation beyond B-QML; although in this work we focus on blinding quantum machine learning, MasonQ can be used for UBQC transpilation, even for large-brickwork depths ($\sim10^7$ in this work). With MasonQ, researchers can input an (optimal) Qiskit circuit and receive an output that is asymptotically guaranteed to be the optimal measurement pattern in terms of brickwork depth. Of course, in practice non-asymptotic optimality is also desirable, as are non-universal implementations of B-QML based on other schemes beyond measurement based quantum computation \cite{Li2024,pasini2026}. It is our intention that MasonQ will continue being developed, expanded beyond planar brickwork graphs, and utilized to support future research into blind quantum computation. 

\section{Acknowledgments}

We gratefully acknowledge useful advice from Stephanie Wehner. TzBP gratefully acknowledges support from the Quantum Software Consortium Ada Lovelace Fellowship. This work was funded by the European Union's Horizon Europe research and innovation programme under grant agreement No.~101102140 – QIA Phase 1.

\section{Author Contributions}

RF wrote the MasonQ transpiler, derived the theorems associated with the results, and wrote the majority of the manuscript. TzBP conceptualized the project, provided day-to-day supervision, and co-wrote the manuscript with RF. SF provided feedback and guidance throughout on the project and paper.

\begin{appendices}

% An appendix contains supplementary information that is not an essential part of the text itself but which may be helpful in providing a more comprehensive understanding of the research problem or it is information that is too cumbersome to be included in the body of the paper.

%%=============================================%%
%% For submissions to Nature Portfolio Journals %%
%% please use the heading ``Extended Data''.   %%
%%=============================================%%

%%=============================================================%%
%% Sample for another appendix section			       %%
%%=============================================================%%

%% \section{Example of another appendix section}\label{secA2}%
%% Appendices may be used for helpful, supporting or essential material that would otherwise 
%% clutter, break up or be distracting to the text. Appendices can consist of sections, figures, 
%% tables and equations etc.

\bibliography{sn-bibliography}% common bib file

\section{Proofs} \label{app:statapp}

\subsection{Brickwork Alignment Bound}\label{thm:bw_alignment_bound_nor_prac}
\textbf{Proof of Theorem \ref{thm:abstract_bw_bound_2n}}
    Independently aligning bricks of an input matrix of depth \( n \) never increases the depth beyond \( 2n \).

\begin{proof}[Proof by Structural Induction]

\textbf{Base Case (\( n=1 \)):} Consider a brickwork graph containing of one brick. This brick can either already be aligned (i.e., placed correctly according to allowed positions) or require shifting one additional column. Thus, for the base case of \( n = 1 \), the resulting brickwork depth is at most \( 2n = 2 \). See cases 1 and 2 of Fig.~\ref{fig:proof_alignment} for an illustration.

\textbf{Inductive Hypothesis:} Assume that the following statement holds:  for any brickwork graph \( B_k \) consisting of \( k \) brick, the depth $n_k$ after alignment satisfies
\begin{equation}
    n_k \leq 2k.
\end{equation}

\textbf{Inductive Step (\( k \to k+1 \)):} Now consider extending the brickwork graph \( B_k \) by one more brick, forming the (unaligned) graph \( B_{k+1} \). Since the alignment algorithm processes bricks sequentially from left to right, at the point of the \( k+1 \)th brick either it is aligned (so depth $n_{k+1}=2k+1$) or it is not aligned, necessitating a shift by a single brick column (so depth $n_{k+1}=2k+2$). Thus, adding brick can shift the alignment horizontally by at most one additional brick column, and thus increases the depth by at most two. Consequently, the depth of the graph \( B_{k+1} \) satisfies:
\begin{equation}
    n_{k+1} \leq 2(k + 1).
\end{equation}

By the principle of mathematical induction, this establishes that shifting bricks for alignment purposes given an input matrix with depth \( n \) yields a depth of at most \( 2n \).
\end{proof}

\begin{figure}[htbp]
\centering
\includegraphics[width=0.6\textwidth]{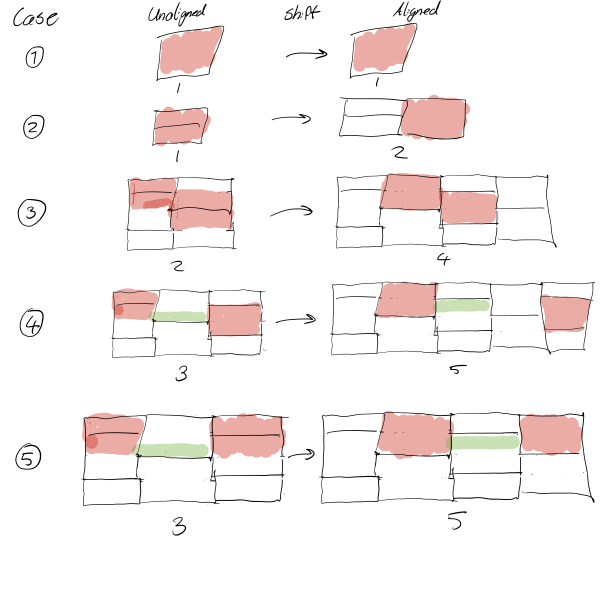}
\caption{Sketch providing the intuition on the upper bound of brick alignment.}
\label{fig:proof_alignment}
\end{figure}

\subsection{Non-adjacent CX Cost bound}\label{app:nonadj_cx_overhead_app}

\textbf{Proof of Lemma \ref{lem:nonadj_cx_overhead}}
Consider a quantum circuit mapped to a one-dimensional nearest-neighbor
architecture, and let $\omega$ denote an upper bound on the number of intermediate
qubits that must be crossed in order to realize any non-adjacent two-qubit
interaction in the circuit. Then any non-adjacent CX gate can be implemented
using at most
\[
6\omega+1
\]
nearest-neighbor $\mathrm{C}X$ gates. In particular, the nearest-neighbor routing overhead of a non-adjacent $\mathrm{C}X$ gate is $\mathcal{O}(\omega)$.
% Let a quantum circuit be mapped to a one-dimensional nearest-neighbor architecture. Suppose \(\omega\) is an upper bound on the number of intermediate qubits separating the two qubits involved in any non-adjacent logical $\mathrm{CX}$ gate. Then the total number of nearest-neighbor $\mathrm{CX}$ gates required to realize such a non-adjacent interaction is bounded by
% \begin{equation}
%     n_{\mathrm{CX}} \;\le\; 6\omega+1.
% \end{equation}
% Thus the overhead grows linearly with \(\omega\).

\begin{proof}
Let the logical $\mathrm{CX}$ act on two qubits that are not adjacent in the one-dimensional layout, and suppose that at most \(\omega\) intermediate qubits lie between them.

To realize this interaction on a nearest-neighbor architecture, one first repeatedly swaps one of the qubits towards the other until the two qubits become adjacent. Since there are at most \(\omega\) intermediate qubits, this requires at most \(\omega\) \textsc{swap} operations. After the local interaction has been applied, restoring the original qubit ordering requires at most another \(\omega\) \textsc{swap} operations. Hence the routing stage uses at most
\begin{equation}
    2\omega
\end{equation}
\textsc{swap} gates in total.

Now each adjacent \textsc{swap} admits the standard three-$\mathrm{CX}$ decomposition
\begin{equation}
    \textsc{swap}_{(i,j)}
    \;=\;
    \mathrm{CX}_{i, j}\,
    \mathrm{CX}_{j, i}\,
    \mathrm{CX}_{i, j},
\end{equation}
so the \(2\omega\) swaps contribute
\begin{equation}
    3(2\omega)=6\omega
\end{equation}
nearest-neighbor $\mathrm{CX}$ gates.

Once the two qubits are adjacent, the desired logical CX is applied using one
nearest-neighbor CX gate. Therefore the total number of nearest-neighbor CX
gates required is
\begin{equation}
n_{\mathrm{CX}} \leq 6\omega + 1.
\end{equation}

% Once the two qubits are adjacent, the remaining local implementation contributes only a constant number of nearest-neighbor $\mathrm{CX}$ gates. In the brickwork topology used by the transpiler, this contribution is bounded by \(2\). Therefore the total number of nearest-neighbor $\mathrm{CX}$ gates required is
% \begin{equation}
%     n_{\mathrm{CX}}
%     \;\le\;
%     6\omega+2.
% \end{equation}

Therefore the non-adjacent $\mathrm{CX}$ overhead is linear in \(\omega\), i.e. $\mathcal{O}(\omega)$.
\end{proof}

\subsection{Alignment Procedure Used by the Transpiler}\label{proof:bw_alignment_gaps}

This section analyzes the results of the rotation-aware alignment procedure
described in Appendix~\ref{app:outlineAlignmentAlgo}. They are not required for the
abstract scaling bound of
Theorem~\ref{thm:abstract_bw_bound_2n}, but establish a
factor-two depth bound for the placement and
rotation-insertion rules specified below.
% These results describe the alignment procedure used by the transpiler. They are
% not required for the abstract scaling bound of Theorem 1, but show that the
% implemented scheduling policy satisfies analogous locality and parity
% constraints.

% Macros assumed from the preamble: \ocol \place \gap \depth \newgap \top
%   and \floor{·} = \lfloor·\rfloor. The placement rule below is the one
%   implemented by align_cx_matrix.
\newcommand{\base}{\mathrm{base}}

\begin{definition}[Bricks and original column]\label{def:brick}
A \emph{brick} \(B\) is the unordered pair of matrix entries encoding a single
adjacent \(\mathrm{CX}\) gate; its set of rows is
\(\mathrm{rows}(B)\subset\mathbb{Z}\) with \(|\mathrm{rows}(B)|=2\). Its
\emph{original column} in the unshifted (that is, before alignment) instruction matrix is defined
\(\ocol(B):=\min\{c\mid B\text{ has an instruction in column }c\}\),
and we write \(\top(B):=\min\mathrm{rows}(B)\) for the top row.
\end{definition}

\begin{definition}[Row-isolated CX-matrix]\label{def:row-isolated}
A CX-matrix \(M\) is \emph{row-isolated} if no two bricks of \(M\) share both a row and a
column. Equivalently, if bricks \(A\neq B\) share a row and
\(\ocol(A)\le\ocol(B)\) then \(\gap(A,B):=\ocol(B)-\ocol(A)\ge 1\).
\end{definition}

\begin{definition}[Column extent and depth]\label{def:depth}
The original instruction matrix \(M\) has \(d\geq1\)
columns, indexed from \(0\) to \(d-1\). Each CX brick
has both endpoints in the same original column, and
each single-qubit cell is realisable by one brick.
The aligned CX extent is
\begin{equation}
D_{\mathrm{CX}}(M')
:=
\begin{cases}
1+\max_B\place(B), & \text{if CX bricks are present},\\
0, & \text{otherwise}.
\end{cases}
\end{equation}
The full depth \(D(M')\) is one plus the largest occupied
column index after single-qubit cells have been reinserted,
or zero for an empty output. Both quantities are measured
from the same column origin.
\end{definition}
%FIX A
% \begin{definition}[Row and global depth]\label{def:depth}
% Fix a row \(\rho\) and list the bricks meeting \(\rho\) from left to right as
% \(C_1,\dots,C_t\). The \emph{row depth} is
% \(\depth_\rho(M):=1+\sum_{i=1}^{t-1}\gap_\rho(C_i,C_{i+1})\), and the
% \emph{global depth} is \(\depth(M):=\max_\rho\depth_\rho(M)\).
% \end{definition}

\paragraph{Parity constraint.}
The brickwork embedding admits a brick with top row \(\top(B)\) only at columns
congruent to \(\top(B)\pmod 2\). Define the parity-aligned base column
\begin{equation}\label{eq:base}
  \base(B):=
  \begin{cases}
  \ocol(B), & \ocol(B)\equiv\top(B)\pmod 2,\\
  \ocol(B)+1, & \text{otherwise.}
  \end{cases}
\end{equation}

\paragraph{The algorithm.}
Bricks are processed in order of \(\ocol\), with ties broken by prioritizing the top brick, so
that when a brick \(B\) is placed, the bricks already placed on a row
\(r\in\mathrm{rows}(B)\) are exactly those lying to its left along \(r\). Let
\(L_r(B)\) denote the nearest such left neighbour on row \(r\), if one exists,
and set
\[
  \chi_r(B):=
  \begin{cases}
  1 & \text{a rotation occupies a column strictly between }\ocol(L_r(B))\text{ and }\ocol(B)\text{ on }r,\\
  0 & \text{otherwise.}
  \end{cases}
\]
The placement keeps each brick separated from its left neighbours by one column,
plus a second column when a rotation must be threaded between them. It computes
the \emph{floor}
\begin{equation}\label{eq:floor}
  \floor{B}:=\max_{\substack{r\in\mathrm{rows}(B)\\ L_r(B)\text{ exists}}}
            \bigl(\place(L_r(B))+1+\chi_r(B)\bigr),
\end{equation}
with the convention that the empty maximum is \(-\infty\), and selects the
smallest admissible column
\begin{equation}\label{eq:place}
  \place(B):=\min\bigl\{\,c\;:\;c\ge\max(\base(B),\floor{B}),\ \ c\equiv\top(B)\!\!\!\pmod 2\,\bigr\}.
\end{equation}
Unlike a gap-preserving rule, \eqref{eq:floor} reserves a column only where a
rotation is actually interleaved; an empty original gap is free to close.

\begin{lemma}[Locality of shifts]\label{lem:locality}
The placement of \(B\) is determined by its base column
and the placements of its immediate left neighbours on
the two rows it touches, together with the corresponding
rotation indicators. Bricks on disjoint rows impose no
direct constraint on one another, although delays may
propagate through intermediate bricks. No placement is
ever revised.
\end{lemma}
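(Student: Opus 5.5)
The plan is to read the lemma directly off the definitions in \eqref{eq:base}, \eqref{eq:floor} and \eqref{eq:place}, and to use one induction over the processing order to justify the ``no revision'' clause. I will treat the three claims separately: (i) \(\place(B)\) is a function only of \(\base(B)\), \(\top(B)\), the values \(\place(L_r(B))\) for \(r\in\mathrm{rows}(B)\), and the indicators \(\chi_r(B)\); (ii) bricks on disjoint rows do not constrain each other directly; (iii) placements are final.

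For (i), I will unfold the definitions. By \eqref{eq:base}, \(\base(B)\) depends only on \(\ocol(B)\) and \(\top(B)\), which are intrinsic to \(B\) in the unshifted matrix \(M\). By \eqref{eq:floor}, \(\floor{B}\) is a maximum over at most two terms, one for each \(r\in\mathrm{rows}(B)\). Each term involves only \(\place(L_r(B))\) and \(\chi_r(B)\). The indicator \(\chi_r(B)\) is in turn determined by \(M\) restricted to row \(r\) between \(\ocol(L_r(B))\) and \(\ocol(B)\). The minimum in \eqref{eq:place} is then fixed by \(\max(\base(B),\floor{B})\) and the parity of \(\top(B)\). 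In fact it equals that maximum, or the maximum plus one if the parity is wrong. This gives the first sentence of the lemma. The ordering rule (sort by \(\ocol\), ties broken by top brick) is what makes \(L_r(B)\) well defined and already placed when \(B\) is processed. Here I will invoke row-isolation (Definition~\ref{def:row-isolated}): two bricks sharing a row have distinct \(\ocol\), so the left neighbour on each row is unambiguous.

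For (ii), suppose \(\mathrm{rows}(A)\cap\mathrm{rows}(B)=\emptyset\). Then \(A\) is never \(L_r(B)\), and \(B\) is never \(L_r(A)\), so neither placement appears in the other's formula. \(A\) can influence \(B\) only through a chain \(A=C_0,C_1,\dots,C_k=B\) in which each \(C_{i}\) is a left neighbour \(L_r(C_{i+1})\) on some shared row. I will phrase this as the dependency relation being the transitive closure of the left-neighbour relation, which accounts for propagated delays.

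For (iii), I will induct on the processing index. When \(B\) is processed, every quantity it reads is either intrinsic to \(M\) or is \(\place\) of an earlier brick, which by hypothesis is already fixed. Assigning \(\place(B)\) writes only \(B\)'s own entry, and nothing later ever reassigns it.

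The main obstacle I expect is the reinsertion of rotations, since \(\chi_r\) must be computable without knowing later placements. I will check that \(\chi_r(B)\) is defined purely from the original columns of \(M\) rather than from aligned positions. If that holds, \(\chi_r(B)\) depends on no placements at all, and the induction goes through.
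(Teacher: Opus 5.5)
Your proposal is correct and follows the paper's route. The paper's proof reads off from \eqref{eq:floor} and \eqref{eq:place} that \(\place(B)\) depends only on \(\mathrm{base}(B)\) and on \(\place(L_r(B))\) and \(\chi_r(B)\) for \(r\in\mathrm{rows}(B)\), and notes that placements are assigned once and never changed; your use of row-isolation, the transitive closure of the left-neighbour relation, and induction over the processing order simply expand that argument more carefully (and the top-row parity you list as an extra input is redundant, since \(\mathrm{base}(B)\equiv\top(B)\pmod 2\) by construction).
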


\begin{proof}
By \eqref{eq:floor}, \(\floor{B}\) reads only \(\place(L_r(B))\) and
\(\chi_r(B)\) for \(r\in\mathrm{rows}(B)\); no other brick enters, and
\(\base(B)\) depends only on \(B\). Placements are assigned once and never
changed.
\end{proof}

\begin{lemma}[Separation embodied by the floor]\label{lem:floor-lb}
Any placement that preserves per-row order, avoids
collisions, and reserves a free column whenever single-qubit
operations are interleaved between consecutive CX bricks
must satisfy
\begin{equation}
\place(B)\geq\place(L_r(B))+1+\chi_r(B)
\end{equation}
on every shared row \(r\) with an existing left neighbour.
The floor in \eqref{eq:floor} is the maximum of these
row-wise lower bounds and therefore enforces all such
separation requirements. The base-column and parity
constraints are imposed separately.
\end{lemma}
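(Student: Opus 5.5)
The plan is to prove Lemma~\ref{lem:floor-lb} row by row. For each row in which \(B\) has a left neighbour, I will derive a lower bound on \(\place(B)\). Then I will take the maximum of these bounds and observe that it is exactly \eqref{eq:floor}. Fix \(r\in\mathrm{rows}(B)\) with \(L:=L_r(B)\) existing. The processing order stated in the algorithm is by \(\ocol\), with top-brick tie-breaking. By that order and the definition of \(L_r(B)\), we have \(\ocol(L)\le\ocol(B)\), and \(L\) is the immediately preceding CX brick on row \(r\). Row-isolation (Definition~\ref{def:row-isolated}) gives \(\ocol(L)<\ocol(B)\) strictly. Per-row order preservation then says that any admissible placement must satisfy \(\place(L)<\place(B)\), i.e.\ \(\place(B)\ge\place(L)+1\). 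This is the \(\chi_r(B)=0\) case. Collision avoidance strengthens the non-strict inequality to a strict one, since \(L\) occupies column \(\place(L)\) on row \(r\).

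Next, I will handle \(\chi_r(B)=1\). In this case there is a single-qubit operation on row \(r\) strictly between \(\ocol(L)\) and \(\ocol(B)\). The hypothesis requires that a free column be reserved for it on row \(r\), lying strictly between the two bricks. Such a column is a cell that is neither \(\place(L)\) nor \(\place(B)\) but sits in between. Hence \(\place(B)-\place(L)\ge 2=1+\chi_r(B)\). Because every interleaved run of single-qubit gates can be merged into one \(\mathrm{SU}(2)\) brick, as in the single-qubit gates subsection, one reserved column suffices. The bound therefore does not grow with the number of interleaved rotations, and \(\chi_r\) is correctly \(\{0,1\}\)-valued.

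Combining the rows, every admissible placement satisfies \(\place(B)\ge\place(L_r(B))+1+\chi_r(B)\) for each \(r\in\mathrm{rows}(B)\) with \(L_r(B)\) defined. Therefore \(\place(B)\) is at least the maximum over these rows, which is exactly \(\floor{B}\) in \eqref{eq:floor}; the empty maximum \(-\infty\) imposes no constraint. Conversely, taking \(c\ge\floor{B}\) in \eqref{eq:place} satisfies every row-wise separation simultaneously. So the floor enforces all the requirements and no others. The base-column and parity conditions appear only in \(\max(\base(B),\cdot)\) and in the congruence in \eqref{eq:place}, which is why they are imposed separately.

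The main obstacle is justifying that \(\place(L_r(B))\) is already fixed and final when \(B\) is placed. I will settle this with Lemma~\ref{lem:locality}: placements are never revised, and the processing order places every left neighbour first. A secondary subtlety is ensuring that the two rows of \(B\) cannot share the same left neighbour in a way that double-counts. This does not matter, because a maximum, rather than a sum, is taken over rows.
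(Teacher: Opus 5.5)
Your proposal is correct and follows essentially the same route as the paper: on each shared row, order preservation and collision avoidance force a one-column separation from $L_r(B)$. The reserved free column for an interleaved rotation adds a second column, and the floor is the maximum of these row-wise bounds over the rows of $B$. Your extra remarks are not needed for the necessity direction, but they are consistent with how the paper later uses the lemma in Lemma~\ref{lem:efp}. These remarks are: row-isolation to get strict ordering of original columns, merging interleaved rotations into one $\mathrm{SU}(2)$ brick so that $\chi_r\in\{0,1\}$ suffices, and finality of earlier placements via Lemma~\ref{lem:locality}.
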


\begin{proof}
On row \(r\), brick \(B\) lies strictly to the right of \(L_r(B)\), so
\(\place(B)\ge\place(L_r(B))+1\). If \(\chi_r(B)=1\) a rotation must occupy a
column strictly between them, which forces \(\place(B)\ge\place(L_r(B))+2\).
Taking the maximum over the rows of \(B\) gives \eqref{eq:floor}.
\end{proof}
\begin{lemma}[Column bound]\label{lem:delay-bound}
Every CX brick satisfies
\begin{equation}
\place(B)\leq2\ocol(B)+1.
\end{equation}
\end{lemma}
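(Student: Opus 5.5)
We argue by strong induction on the order in which the bricks are processed, that is, by increasing \(\ocol\), with ties broken top-first. The induction hypothesis is that \(\place(L)\leq 2\ocol(L)+1\) for every brick \(L\) placed before \(B\). By Lemma~\ref{lem:locality}, \(\place(B)\) depends only on three things: \(\base(B)\), the placements \(\place(L_r(B))\) of its immediate left neighbours, and the indicators \(\chi_r(B)\). So it suffices to bound the quantity \(M:=\max(\base(B),\floor{B})\) and then account for the final parity correction in \eqref{eq:place}.

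First we record two facts about left neighbours, both taken from Definition~\ref{def:row-isolated}.
\begin{itemize}
\item Since \(M\) is row-isolated, any left neighbour \(L=L_r(B)\) shares row \(r\) with \(B\) and satisfies \(\gap(L,B)\geq1\). Hence \(\ocol(L)\leq\ocol(B)-1\).
\item If \(\chi_r(B)=1\), then some rotation occupies a column strictly between \(\ocol(L)\) and \(\ocol(B)\) on row \(r\). Hence \(\ocol(L)\leq\ocol(B)-2\).
\end{itemize}
Combining these with the induction hypothesis bounds each term of \eqref{eq:floor}:
\begin{equation*}
\place(L)+1+\chi_r(B)\leq
\begin{cases}
2(\ocol(B)-1)+1+1=2\ocol(B), & \chi_r(B)=0,\\
2(\ocol(B)-2)+1+2=2\ocol(B)-1, & \chi_r(B)=1.
\end{cases}
\end{equation*}
Therefore \(\floor{B}\leq2\ocol(B)\) whenever at least one left neighbour exists.

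Next we split on \(\ocol(B)\). If \(\ocol(B)=0\), then \(B\) has no left neighbours, because they would need \(\ocol\leq-1\). So \(\floor{B}=-\infty\) and \(M=\base(B)\). If instead \(\ocol(B)\geq1\), then \(\base(B)\leq\ocol(B)+1\leq2\ocol(B)\), and so \(M\leq2\ocol(B)\) in this case too.

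The last step is the parity correction. The minimum in \eqref{eq:place} exceeds \(M\) by at most one, since one of \(M\) and \(M+1\) has the parity of \(\top(B)\). This gives \(\place(B)\leq M+1\leq2\ocol(B)+1\) whenever \(\ocol(B)\geq1\). In the case \(\ocol(B)=0\) we have \(M=\base(B)\), which by \eqref{eq:base} already has the required parity, so \(\place(B)=\base(B)\leq1\). The base of the induction is contained in this case: the first brick processed has no left neighbours, so the same argument applies to it.

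The step most likely to need care is the claim that every left neighbour has strictly smaller \(\ocol\). This depends on the tie-breaking convention together with row-isolation: two bricks with equal \(\ocol\) cannot share a row, so same-column bricks never constrain each other. A second point to check is that a rotation strictly between the two bricks really forces an original-column gap of at least two. This should follow because rotations occupy their own columns in the unshifted instruction matrix.
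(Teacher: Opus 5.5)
Your proof is correct and follows essentially the paper's argument. Both proceed by induction over the processing order and use $\ocol(L_r(B))\leq\ocol(B)-1-\chi_r(B)$ to bound each floor term by $2\ocol(B)-\chi_r(B)\leq 2\ocol(B)$. Both then treat $\ocol(B)=0$ separately, where $\place(B)=\mathrm{base}(B)\leq 1$, use $\mathrm{base}(B)\leq\ocol(B)+1\leq 2\ocol(B)$ otherwise, and add at most one column for parity rounding. Your explicit appeal to row-isolation, to show that left neighbours have strictly smaller original columns, is a step the paper leaves implicit.
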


\begin{proof}
We proceed by induction over the processing order.
Write \(t=\ocol(B)\). For any left neighbour
\(A=L_r(B)\),
\begin{equation}
\ocol(A)\leq t-1-\chi_r(B),
\end{equation}
because \(\chi_r(B)=1\) requires an original column
strictly between the two bricks. By the induction
hypothesis,
\begin{equation}
\begin{aligned}
\place(A)+1+\chi_r(B)
&\leq2\ocol(A)+2+\chi_r(B)\\
&\leq2t-\chi_r(B)
\leq2t.
\end{aligned}
\end{equation}
Taking the maximum over both rows gives
\(\floor{B}\leq2t\) whenever a predecessor exists.
For \(t=0\), no predecessor exists and
\(\place(B)=\base(B)\leq1\).
For \(t\geq1\), we also have
\(\base(B)\leq t+1\leq2t\).
Equation~\ref{eq:place} rounds the resulting lower bound
to the required parity, adding at most one column.
Thus \(\place(B)\leq2t+1\).
\end{proof}

\paragraph{Rotation insertion.}
For the full-depth bound, we use the following insertion
rule. Consider a single-qubit cell originally in column
\(j\), and let \(P\) and \(R\) be its nearest preceding
and following CX bricks on the same row, when present.
If \(P\) occupies original column \(j-1\), place the cell
at \(\place(P)+1\). Otherwise, if \(R\) occupies original
column \(j+1\), place it at \(\place(R)-1\).
In all remaining cases, place it at \(\place(P)+1\)
if \(P\) exists, or at column \(0\) otherwise.
Cells assigned to the same position are composed in
their original execution order and realised by one
single-qubit brick. Existing CX columns are not shifted.

\begin{lemma}[Rotation insertion]\label{lem:rotation-insertion}
Under the insertion rule above, every single-qubit cell
is placed without overlapping a CX endpoint on its row,
and the order of operations is preserved. Its destination
column is at most \(2d-2\).
\end{lemma}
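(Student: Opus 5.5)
We check the three claims of Lemma~\ref{lem:rotation-insertion} in turn: no overlap with a CX endpoint, preservation of order, and the column bound $2d-2$. The tools are Lemma~\ref{lem:delay-bound}, which gives $\place(B)\le 2\ocol(B)+1$, and the floor separation of Lemma~\ref{lem:floor-lb}. The key structural fact is that $\chi_r$ reserves a free column exactly when a rotation sits strictly between two consecutive bricks on a row.

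Fix a single-qubit cell in original column $j$ on row $r$, with nearest CX neighbours $P$ (preceding) and $R$ (following) on that row. We split by the branch of the insertion rule that fires.

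\emph{Case 1: $\ocol(P)=j-1$.} The cell goes to $\place(P)+1$. If $R$ exists, then $\ocol(R)>j$, so the cell lies strictly between $P$ and $R$ in original columns, and hence $\chi_r(R)=1$. Lemma~\ref{lem:floor-lb} then gives $\place(R)\ge\place(P)+2$. So column $\place(P)+1$ is free of CX endpoints on $r$, and the cell lies strictly between $P$ and $R$, which preserves order.

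\emph{Case 2: $\ocol(R)=j+1$ and $P$ not at $j-1$.} The cell goes to $\place(R)-1$. If $P$ exists, then $\ocol(P)<j$ and again $\chi_r(R)=1$, so $\place(P)<\place(R)-1<\place(R)$. If $P$ does not exist, we still need $\place(R)-1\ge0$. Here $\ocol(R)\ge1$, so $\base(R)\ge1$, and the parity rounding in \eqref{eq:place} never decreases a column. Hence $\place(R)\ge1$.

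\emph{Case 3: remaining situations.} The argument is identical to Case 1 when $P$ exists. When $P$ does not exist, the cell goes to column $0$, and the same bound $\place(R)\ge\base(R)\ge\ocol(R)\ge j+1\ge1$ shows that column $0$ precedes $R$.

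Cells mapped to the same position are adjacent in execution order on $r$, since no CX separates them. Composing them into one $\mathrm{SU}(2)$ brick is therefore legitimate, and order is preserved overall.

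For the column bound, Cases 1 and 3 give destination $\place(P)+1\le 2\ocol(P)+2\le 2(j-1)+2=2j\le 2(d-1)=2d-2$, using $j\le d-1$. Column $0$ is trivially within the bound. In Case 2, $\ocol(R)=j+1\le d-1$, so $j\le d-2$, and $\place(R)-1\le 2(j+1)+1-1=2j+2\le 2d-2$.

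The main obstacle is Case 2 together with the parity rounding. One must confirm that $\place(R)-1$ neither collides with a CX endpoint on $r$ nor undercuts a preceding brick, and that the bound $2\ocol+1$ is tight enough. I would handle this through $\chi_r(R)=1$ as above. If that bound turns out too weak, I would sharpen Lemma~\ref{lem:delay-bound} to $\place(B)\le 2\ocol(B)$ whenever $B$ is parity-aligned and has no predecessor.
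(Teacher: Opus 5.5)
\textbf{Verdict: correct in substance, same route as the paper, with one incomplete step.} You use the paper's ingredients in the same roles:
\begin{itemize}
\item $\chi_r(R)=1$ together with Lemma~\ref{lem:floor-lb}, giving $\place(R)\ge\place(P)+2$;
\item the chain $\place(R)\ge\mathrm{base}(R)\ge\ocol(R)\ge1$, for nonnegativity of the destination;
\item Lemma~\ref{lem:delay-bound}, for the bound $2d-2$.
\end{itemize}
Your non-overlap and column-bound computations are correct. Organizing the cases by which branch of the insertion rule fires, rather than by which of $P,R$ exist, is only cosmetic.

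\textbf{The incomplete step is order preservation among the single-qubit cells themselves.} Showing that each cell lands strictly between $\place(P)$ and $\place(R)$ fixes its order relative to the CX bricks. However, ``no CX separates them'' does not imply two further things:
\begin{itemize}
\item that cells sharing a destination are consecutive in execution order;
\item that two cells of the same gap with different destinations keep their relative order.
\end{itemize}
A third cell sent to a different column could in principle sit between them.

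\textbf{The fix is the monotonicity observation the paper makes.} Within one original gap, the branch $\place(R)-1$ can fire only for the cell in column $\ocol(R)-1$, which is the last cell of the gap. Every other cell goes to $\place(P)+1$, or to $0$ when $P$ is absent. Since $\place(R)-1\ge\place(P)+1$ (respectively $\ge 0$), destinations are non-decreasing in execution order. Hence cells sharing a column form a contiguous block, and composing them in their original order is legitimate.

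\textbf{Your closing hedge is unnecessary.} Lemma~\ref{lem:delay-bound} as stated already gives $\place(R)-1\le 2\ocol(R)\le 2d-2$. No sharpening for parity-aligned bricks without a predecessor is needed.
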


\begin{proof}
Suppose first that both neighbouring CX bricks \(P\)
and \(R\) exist. Since a single-qubit cell lies between
them, \(\chi_r(R)=1\), and
Lemma~\ref{lem:floor-lb} gives
\begin{equation}
\place(R)\geq\place(P)+2.
\end{equation}
Both possible destinations, \(\place(P)+1\) and
\(\place(R)-1\), therefore lie strictly between the
two aligned bricks and contain no CX endpoint on that row.

If only \(R\) exists, the destination is either
\(\place(R)-1\) or \(0\). Both are nonnegative and
precede \(R\), since
\(\place(R)\geq\ocol(R)\geq1\).
If only \(P\) exists, the destination is
\(\place(P)+1\), after the final CX on that row.
If neither exists, the destination is \(0\).

Within an original gap, all cells are placed immediately
after \(P\), or at column \(0\) when \(P\) is absent,
except possibly the last cell, which may be placed
immediately before \(R\). Their destinations are
therefore non-decreasing in original execution order.
Composing cells in that order preserves the operation
on the row.

Finally, whenever \(P\) exists,
\begin{equation}
\place(P)+1
\leq2\ocol(P)+2
\leq2j
\leq2d-2.
\end{equation}
Whenever \(R\) exists,
\begin{equation}
\place(R)-1
\leq2\ocol(R)
\leq2d-2.
\end{equation}
The remaining destination, \(0\), satisfies the same
bound because \(d\geq1\).
\end{proof}

\begin{theorem}[Global depth bound]
\label{thm:global_depth_bound_impl}
Let \(M\) be a row-isolated instruction matrix with
\(d\geq1\) columns. Under the placement and
rotation-insertion rules specified above, the aligned
matrix satisfies
\begin{equation}
D(M')\leq2d.
\end{equation}
\end{theorem}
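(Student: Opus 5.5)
The plan is to read $D(M')$ as one plus the largest occupied column index after alignment and rotation insertion. It then suffices to show that every occupied column index is at most $2d-1$. There are two kinds of occupied cells, CX bricks placed by \eqref{eq:place} and single-qubit cells placed by the insertion rule, and I will bound each kind separately using the lemmas already established.

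\textbf{CX bricks.} Every original column index satisfies $\ocol(B)\le d-1$. Lemma~\ref{lem:delay-bound} then gives
\[
\place(B)\le 2\ocol(B)+1\le 2d-1.
\]
So the CX extent obeys $D_{\mathrm{CX}}(M')\le 2d$.

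\textbf{Single-qubit cells.} Lemma~\ref{lem:rotation-insertion} bounds every destination column by $2d-2$. It also guarantees that the insertion rule does not shift existing CX columns. Hence the single-qubit cells contribute at most $2d-1$ to the depth, and they never enlarge the CX extent.

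Taking the maximum of the two bounds shows that the largest occupied column index is at most $2d-1$, so $D(M')\le 2d$. The case where no CX bricks are present is covered too: every cell then goes to column $0$ and $D(M')=1\le 2d$.

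\textbf{Consistency check.} This is where I expect the real work to lie. I must confirm that the combined layout is a valid aligned matrix, so that the quantity counted by $D(M')$ is really just the maximum index.

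First, Lemma~\ref{lem:rotation-insertion} must be applicable. Its hypothesis $\place(R)\ge\place(P)+2$ comes from Lemma~\ref{lem:floor-lb}, and that lemma requires row-isolation of $M$, which is assumed. Row-isolation also ensures that two CX bricks never collide.

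Second, the parity constraint must hold, and it does by construction of \eqref{eq:place}. Single-qubit cells carry no parity constraint, and cells that share a slot are merged into one brick.

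Third, Lemma~\ref{lem:locality} guarantees that no placement is revised later. Therefore the bound on each brick, established at the moment it is processed, remains valid for the final matrix.

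Since every occupied index lies in $\{0,\dots,2d-1\}$ and the layout is valid, the theorem follows.
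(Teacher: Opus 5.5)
Your proof is correct and follows the paper's argument. Lemma~\ref{lem:delay-bound} with $\ocol(B)\le d-1$ places every CX brick in a column at most $2d-1$, and Lemma~\ref{lem:rotation-insertion} places every single-qubit cell in a column at most $2d-2$, so $D(M')\le 2d$.

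The closing consistency check is unnecessary, because $D(M')$ is defined purely as one plus the largest occupied index; validity of the layout is the separate content of Theorem~\ref{thm:sem_correctness_in_app}. Also, $\place(R)\ge\place(P)+2$ holds directly by construction of the floor. Row-isolation is actually needed in Lemma~\ref{lem:delay-bound}, where it guarantees $\ocol(L_r(B))\le\ocol(B)-1$.
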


\begin{proof}
Lemma~\ref{lem:delay-bound} places every CX brick in
a column at most \(2d-1\).
Lemma~\ref{lem:rotation-insertion} places every
single-qubit cell in a column at most \(2d-2\).
The full column extent is therefore at most \(2d\).
\end{proof}

% \begin{remark}[Tightness]\label{rem:tight}
% The bound is attained for every \(d\geq1\).
% Take \(d\) CX bricks in consecutive original columns
% on the same pair of rows, with an odd top-row index.
% The first brick is placed in column \(1\), and the
% separation and parity constraints place each subsequent
% brick two columns later. The final occupied column is
% \(2d-1\), giving \(D(M')=2d\).
% \end{remark}
\begin{remark}[Tightness]\label{rem:tight}
The bound is attained for every \(d\geq1\).
Take \(d\) CX bricks in consecutive original columns
on the same pair of rows, with an odd top-row index.
The first brick is placed in column \(1\), and the
separation and parity constraints of \eqref{eq:place} place each subsequent
brick two columns later. The final occupied column is
\(2d-1\), giving \(D(M')=2d\) while \(D(M)=d\); the factor of two in
Theorem~\ref{thm:global_depth_bound_impl} is therefore unavoidable.
\end{remark}

% \begin{remark}[Tightness]\label{rem:tight}
% The factor of two is unavoidable. Take a single row of \(t\) unit-spaced bricks
% sharing one top-row parity. Their original columns alternate in parity, so the
% correction in \eqref{eq:place} adds a column at every step and the placed columns
% are \(0,2,4,\dots,2(t-1)\). Then \(\depth_\rho(M)=t\) while
% \(\depth_\rho(M')=2t-1\).
% \end{remark}

% \subsection{Optimality and Correctness of the Alignment Procedure}\label{proof:correctness_opt_bw_alignment}
% Optimality is meant in the sense of \emph{earliest-feasible} placement under the
% locality policy. These results describe the alignment procedure used by the transpiler. They are
% not required for the abstract scaling bound of Theorem 1, but show that the
% implemented scheduling policy satisfies analogous locality and parity
% constraints.
\begin{theorem}[Optimality under locality]
\label{app:thm:optimality}
For any feasible placement \(p\) satisfying
Definition~\ref{def:feasible},
\begin{equation}
\place(B)\leq p(B)
\end{equation}
for every CX brick \(B\).
Consequently, the algorithm minimizes the CX extent
and, under the fixed rotation-insertion rule of
Lemma~\ref{lem:rotation-insertion}, the full depth.
\end{theorem}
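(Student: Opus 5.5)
We argue by induction over the processing order used by the algorithm: bricks are handled in increasing $\ocol$, with ties broken towards the top brick. The induction hypothesis is that $\place(A)\leq p(A)$ for every brick $A$ already processed. The notion of feasibility in Definition~\ref{def:feasible} is not reproduced above. We therefore take it to mean a placement $p$ with three properties: it preserves the left-to-right order of bricks on each row; it respects the base column, $p(B)\geq\base(B)$ (bricks are never moved earlier than their parity-corrected original column); and it obeys the parity constraint $p(B)\equiv\top(B)\pmod 2$ together with the separation requirements of Lemma~\ref{lem:floor-lb}. The argument must match whatever the definition actually lists, so the first step is to check that each of these conditions is included in it.

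\textbf{Inductive step.} Fix a brick $B$. For each row $r\in\mathrm{rows}(B)$ with an existing left neighbour $L_r(B)$, Lemma~\ref{lem:floor-lb} applied to the feasible placement $p$ gives
\[
p(B)\geq p(L_r(B))+1+\chi_r(B).
\]
Because $L_r(B)$ precedes $B$ in processing order, the induction hypothesis gives $p(L_r(B))\geq\place(L_r(B))$, so
\[
p(B)\geq\place(L_r(B))+1+\chi_r(B).
\]
Taking the maximum over the rows of $B$ yields $p(B)\geq\floor{B}$. This step is valid by Lemma~\ref{lem:locality}: the floor depends only on these left neighbours and on the indicators $\chi_r(B)$, which are fixed by the original matrix and not by the placement. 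Adding the base constraint gives $p(B)\geq\max(\base(B),\floor{B})$. Since $p(B)$ also satisfies the parity congruence, it lies in the set over which \eqref{eq:place} takes its minimum. Hence $\place(B)\leq p(B)$.

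\textbf{Base case.} For a brick with no left neighbour on either row, the floor is $-\infty$ and the claim follows from the base and parity constraints alone.

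\textbf{Consequences.} The pointwise inequality gives
\[
D_{\mathrm{CX}}(M')=1+\max_B\place(B)\leq1+\max_B p(B),
\]
so the algorithm minimizes the CX extent. The rotation-insertion rule of Lemma~\ref{lem:rotation-insertion} places every single-qubit cell at $\place(P)+1$, at $\place(R)-1$, or at $0$. These positions are monotone in the CX placements, so a pointwise-smaller CX placement yields rotation columns that are no larger. Applying the same fixed rule to $p$ therefore gives a full depth no smaller than that of the algorithm's output.

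\textbf{Main obstacle.} The delicate point is that the induction requires the left neighbours $L_r(B)$ under $p$ to coincide with those under the algorithm. This holds only because feasibility preserves per-row order, so the relative order of bricks on each row is fixed by the original matrix. A second subtlety concerns ties in $\ocol$ among bricks that share rows. Row-isolation (Definition~\ref{def:row-isolated}) ensures that bricks with equal $\ocol$ share no row, so for those bricks the tie-breaking order is irrelevant.
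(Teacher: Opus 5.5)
Your proof is correct and follows the paper's argument. Both proceed by induction over the processing order: the left-neighbour separation inequality combined with the induction hypothesis gives $p(B)\geq\floor{B}$, the base and parity constraints then make $p(B)$ admissible in \eqref{eq:place}, and monotonicity of the insertion destinations $\place(P)+1$, $\place(R)-1$, $0$ gives the full-depth claim. In Definition~\ref{def:feasible} the separation inequality is condition (2) itself, and $L_r(B)$ is fixed by the processing order rather than by the placement. Your appeal to Lemma~\ref{lem:floor-lb} and your concern that left neighbours must coincide under $p$ are therefore unnecessary, though harmless.
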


\begin{proof}
We proceed by induction over the processing order.
Suppose \(\place(A)\leq p(A)\) for all previously
processed bricks. For each left neighbor \(A=L_r(B)\),
feasibility gives
\begin{equation}
p(B)\geq p(A)+1+\chi_r(B)
     \geq\place(A)+1+\chi_r(B).
\end{equation}
Thus \(p(B)\geq\floor{B}\). Feasibility also requires
\(p(B)\geq\base(B)\) and the same parity as \(\top(B)\).
Hence \(p(B)\) is admissible in
Equation~\ref{eq:place}, whose least admissible value
is \(\place(B)\).

The insertion destination of each single-qubit cell
is fixed by its original neighbours and is either
\(\place(P)+1\), \(\place(R)-1\), or \(0\).
Each is a non-decreasing function of the relevant
CX placement. Pointwise minimality therefore also
minimizes the largest occupied column after insertion.
\end{proof}

\begin{theorem}[Semantic correctness]
\label{thm:sem_correctness_in_app}
Under the specified placement and rotation-insertion
rules, the aligned matrix \(M'\) preserves the order
of operations on every row, introduces no CX collisions,
respects the brickwork parity constraint, and preserves
the circuit operation up to global phase.
\end{theorem}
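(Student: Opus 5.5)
The statement bundles four properties: per-row order, no CX collisions, the parity constraint, and semantic preservation up to global phase. I would check them one at a time and then combine them, using only the placement rule \eqref{eq:place}, the floor \eqref{eq:floor}, and the rotation-insertion rule.

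\textbf{Parity and base column.} These hold by construction: \eqref{eq:place} only selects columns $c\equiv\top(B)\pmod 2$ with $c\geq\base(B)$.

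\textbf{CX order and collisions on a row.} Bricks are processed in order of $\ocol$. So when $B$ is placed, its left neighbour $L_r(B)$ on each row $r\in\mathrm{rows}(B)$ is exactly the CX immediately preceding $B$ on $r$ in the original matrix. Lemma~\ref{lem:floor-lb}, through the floor, gives
\begin{equation}
\place(B)\geq\place(L_r(B))+1 .
\end{equation}
By induction along each row, the CX bricks meeting $r$ are therefore placed at strictly increasing columns, in their original order. This immediately excludes two CX bricks sharing a row and a column. Bricks on disjoint rows cannot collide, since a collision needs a shared row.

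\textbf{Single-qubit cells.} Lemma~\ref{lem:rotation-insertion} already states that every cell avoids CX endpoints on its row and that cells stay in their original order. What remains is to check the interleaving of cells relative to CX bricks on the same row. Each cell goes to $\place(P)+1$ or $\place(R)-1$. By the strict separation $\place(R)\geq\place(P)+2$ when $\chi_r(R)=1$, both destinations lie strictly between $P$ and $R$. So the row sequence CX–cells–CX is reproduced exactly.

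\textbf{Semantic correctness.} This is the step I expect to be the main obstacle. Order on each row is not obviously enough, because gates on different rows are reordered relative to each other. I would argue through the standard circuit-equivalence fact: two linearizations of the same gates define the same unitary whenever, for every pair of gates sharing a qubit, their relative order agrees. The justification is that the unitary is determined by the DAG whose edges are per-wire precedences, and commuting gates on disjoint wires can be interchanged freely. The aligned matrix read column by column is such a linearization.

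Two further points need care here. First, cells merged into one position are composed in their original execution order into a single $\mathrm{SU}(2)$ element, which is realised by one single-row brick (Section~\ref{sec:graph_comp}). Second, that realisation, together with the identity bricks filling empty slots (Appendix~\ref{app:id_brick}) and the brick-level translations of $R_x$, $R_z$ and CX, each contribute at most a global phase. Composing these gives equality up to global phase.

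Finally, I would invoke composability of measurement patterns (Theorem~1 of \cite{The_measurement_calc}) to conclude that the concatenated brickwork pattern implements the aligned circuit.
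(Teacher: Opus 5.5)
Your proposal is correct and follows essentially the same route as the paper's proof. Parity holds by construction of the placement rule, and per-row CX order and collision-freeness follow from the floor separation of Lemma~\ref{lem:floor-lb}. Lemma~\ref{lem:rotation-insertion} places the single-qubit cells in order, strictly between their neighbouring bricks, and merged cells are composed in execution order into one single-qubit brick; preserving per-qubit order then preserves the circuit, since operations on disjoint qubits commute. Your final pattern-level steps (identity bricks and composability of measurement patterns) are not needed, because the statement concerns the aligned instruction matrix itself and the paper argues entirely at that level.
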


\begin{proof}
Lemma~\ref{lem:floor-lb} separates consecutive CX bricks
on each row by at least one column, preserving their
order and preventing collisions. Parity holds by
construction in Equation~\ref{eq:place}.
When single-qubit operations are interleaved, the floor
reserves an additional column.
Lemma~\ref{lem:rotation-insertion} places those operations
in their original order relative to neighbouring CX
bricks. Cells sharing a destination are composed in
execution order and represented by an equivalent
single-qubit brick. Thus all operations sharing a
qubit retain their order; changes in the order of
operations on disjoint qubits do not affect the circuit
operation.
\end{proof}

\begin{definition}[Feasible alignments under locality]\label{def:feasible}
A placement \(\place(\cdot)\) is \emph{feasible} if for every brick \(B\):
\begin{enumerate}
\item \(\place(B)\equiv\top(B)\pmod 2\) and \(\place(B)\ge\base(B)\);
\item on each shared row \(r\) with an existing left
neighbour,
\(\place(B)\ge\place(L_r(B))+1+\chi_r(B)\);
\item earlier placements are not changed when \(B\) is placed.
\end{enumerate}
It is \emph{earliest-feasible} if \(\place(B)\) is the least integer satisfying
(1)–(2) for each \(B\) in processing order.
\end{definition}

Constraining each brick only against its immediate left neighbours loses nothing:
the separations in (2) compose along a row, so they imply the corresponding
bounds against every earlier brick.

\begin{lemma}[The algorithm is earliest-feasible]\label{lem:efp}
For every brick \(B\), the column \(\place(B)\) from
\eqref{eq:floor}–\eqref{eq:place} is the least \(c\) meeting (1)–(2).
\end{lemma}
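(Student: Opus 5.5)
We have to show that, for every brick \(B\), the value \(\place(B)\) defined by \eqref{eq:floor}--\eqref{eq:place} is the least integer \(c\) satisfying conditions (1)--(2) of Definition~\ref{def:feasible}. Here \(\place(L_r(B))\) denotes the placements already fixed by the algorithm. The argument is a direct unfolding of the definitions, done in two parts.

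\textbf{Step 1: rewrite (1)--(2) as a single feasible set.} Condition (2) requires
\[
c \ge \place(L_r(B))+1+\chi_r(B)
\]
for each \(r\in\mathrm{rows}(B)\) that has a left neighbour. The conjunction of these inequalities is equivalent to \(c\ge\floor{B}\), since \(\floor{B}\) is by definition their maximum. When no left neighbour exists, the convention \(\floor{B}=-\infty\) makes both sides vacuous. Condition (1) says \(c\ge\base(B)\) and \(c\equiv\top(B)\pmod 2\). Together, the set of integers meeting (1)--(2) is
\[
S(B)=\{c:\ c\ge\max(\base(B),\floor{B}),\ c\equiv\top(B)\pmod 2\}.
\]

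\textbf{Step 2: identify the minimum.} The set \(S(B)\) is exactly the set minimised in \eqref{eq:place}. It is nonempty and bounded below, because \(\base(B)\) is finite and one of any two consecutive integers has the required parity. Hence its least element exists and equals \(\place(B)\).

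\textbf{Step 3: condition (3).} Placements are assigned once and never revised, by Lemma~\ref{lem:locality}. So, by induction over the processing order, the already-placed left neighbours used in (2) are exactly those used in \eqref{eq:floor}. This also confirms that the earliest-feasible notion is well defined in the same processing order.

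The only point needing care is the matching of reference placements. Condition (2), read in the earliest-feasible sense, refers to the neighbours' algorithmic placements, which are the same quantities \eqref{eq:floor} reads. Once that is fixed, the lemma follows immediately from the definitions.
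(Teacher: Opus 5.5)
Your proof is correct and follows the paper's own argument. Condition (2) is identified with \(c\ge\floor{B}\): the paper cites Lemma~\ref{lem:floor-lb} for this, while you read it directly off the definition of the floor as a maximum. Condition (1) is identified with the base and parity constraints, so \eqref{eq:place} is by definition the least element of the feasible set. Your additional remarks, that this set is nonempty and that the left neighbours' placements are the algorithm's own fixed values, are harmless clarifications and do not change the route.
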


\begin{proof}
Condition (2) is exactly \(c\ge\floor{B}\) by Lemma~\ref{lem:floor-lb}, and (1)
is \(c\ge\base(B)\) with \(c\equiv\top(B)\pmod2\). Equation~\eqref{eq:place}
returns the least such \(c\).
\end{proof}

\section{Brickwork Graphs}

\subsection{Circuit Optimizations}\label{app:circuit_optimisations}

During decomposition, the transpiler considers many optimizations designed to make the circuit as short as possible. This section considers two of these strategies to reduce the final brickwork depth: initial mapping and routing. Consider the circuit in Fig. \ref{fig:opt_example}a consisting of one non-adjacent $\mathrm{CX}$ gate. Without these optimizations, the circuit would transpile to the graph shown in Fig. \ref{fig:opt_example}d. Applying the routing method, the depth of the circuit is reduced by half, swapping states to each other as can be seen in Fig. \ref{fig:opt_example}c. If we then also implement initial mapping in addition to our routing strategy, we change the order of the qubits such that the qubits of interest are adjacent, making the graph one brick deep, as shown in Fig. \ref{fig:opt_example}b.

\begin{figure*}[t]
    \centering

    % Top row
    \begin{minipage}[t]{0.48\textwidth}
        \centering
        \includegraphics[width=\linewidth,height=5cm,keepaspectratio]{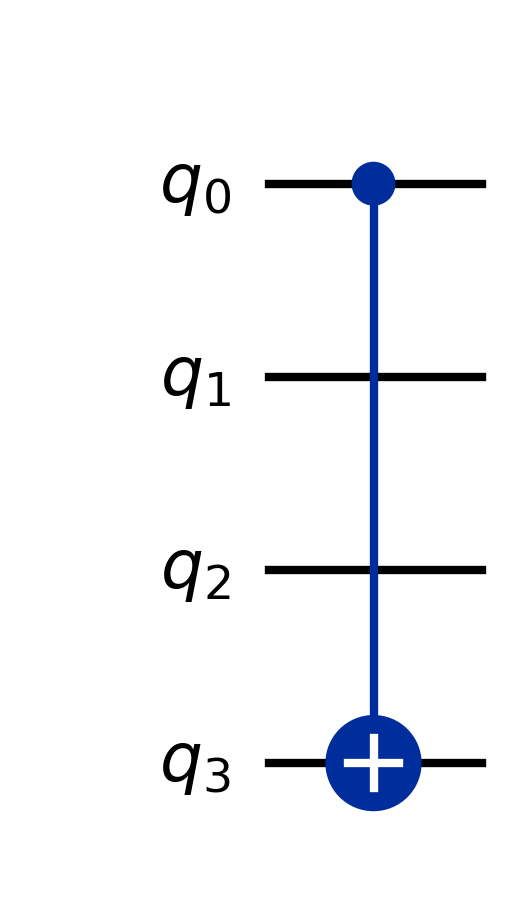}\\[-0.2em]
        \small (a)
    \end{minipage}
    \hfill
    \begin{minipage}[t]{0.48\textwidth}
        \centering
        \includegraphics[width=\linewidth,height=5cm,keepaspectratio]{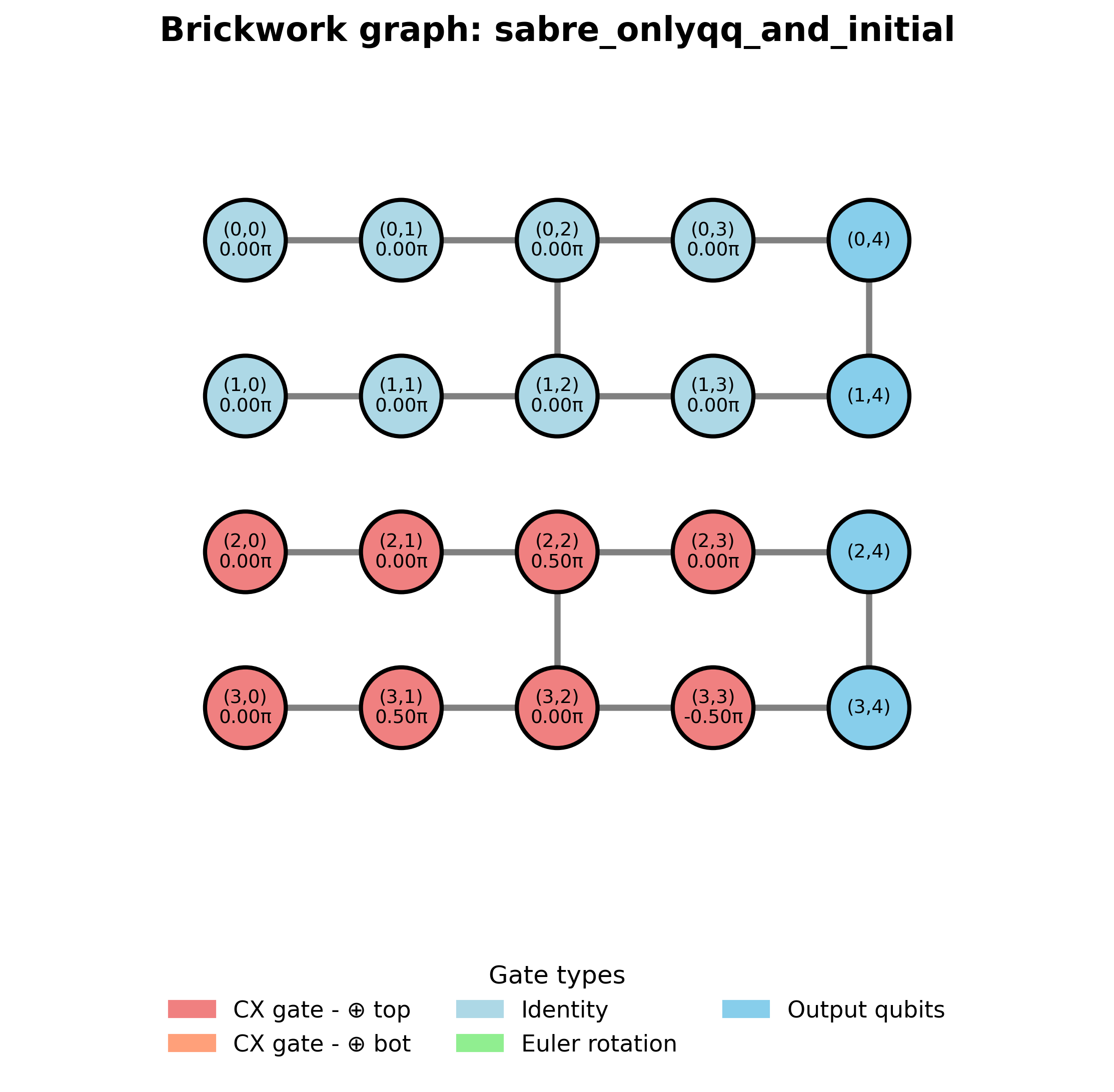}\\[-0.2em]
        \small (b)
    \end{minipage}

    % \vskip0.8em

    % Middle row
    \begin{minipage}[t]{\textwidth}
        \centering
        \includegraphics[width=\linewidth]{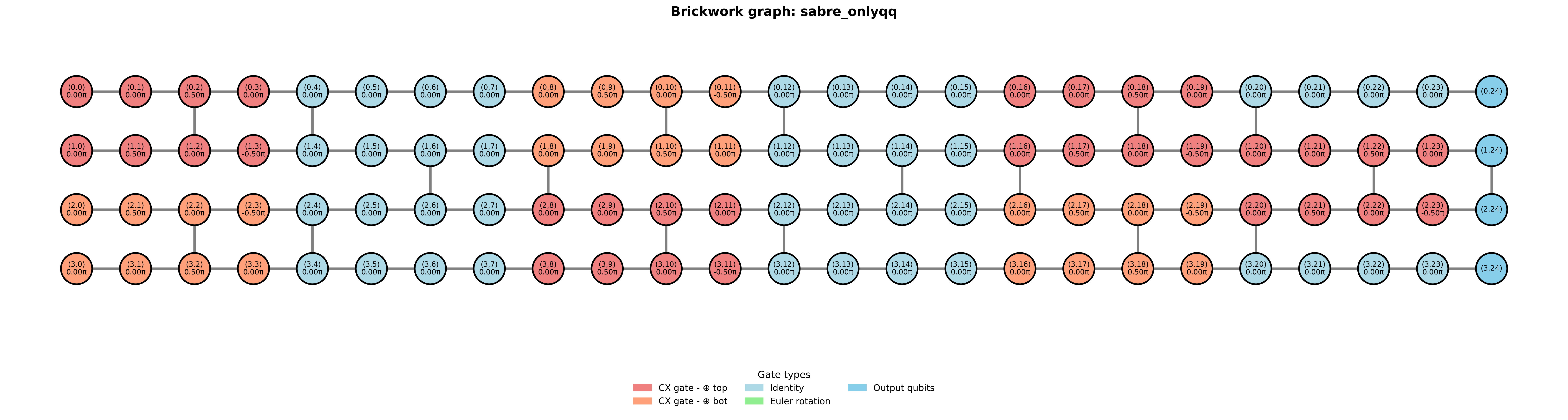}\\[-0.2em]
        \small (c)
    \end{minipage}

    % \vskip0.8em

    % Bottom row
    \begin{minipage}[t]{\textwidth}
        \centering
        \includegraphics[width=\linewidth]{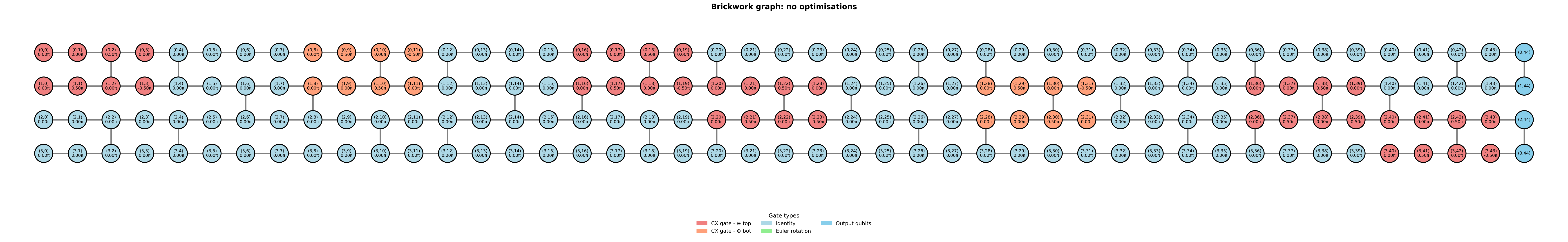}\\[-0.2em]
        \small (d)
    \end{minipage}

    \caption{
    Comparison of an example circuit and its corresponding brickwork graphs under different optimization strategies.
    \textbf{(a)} Example quantum circuit consisting of one non-adjacent $\mathrm{CX}$ gate.
    \textbf{(b)} Brickwork graph of the exemplary circuit optimized with initial mapping.
    \textbf{(c)} Brickwork graph of the exemplary circuit optimized with SABRE routing.
    \textbf{(d)} Brickwork graph of the exemplary circuit without routing or mapping optimizations.
    }
    \label{fig:opt_example}
\end{figure*}

\subsection{The Identity Brick}\label{app:id_brick}

From \cite{parsimonious_qgates} follows that measuring a qubit in the brickwork graph with angle \(0\) implements a Hadamard gate \(H\) in our framework. It follows that an identity brick, that is, a subgraph that teleports the logical state further along the brickwork graph unmodified, can be implemented by applying four such measurements in a single row, corresponding to one logical gate; see Fig.~\ref{fig:id_brick}. The resulting operation is
\begin{equation}
H_4 H_3 H_2 H_1 = H^4 = (H^2)^2 = I ,
\end{equation}
and hence the brick acts as the identity on the encoded state.
\begin{figure}[ht!]
\centering
\includegraphics[width=0.75\linewidth]{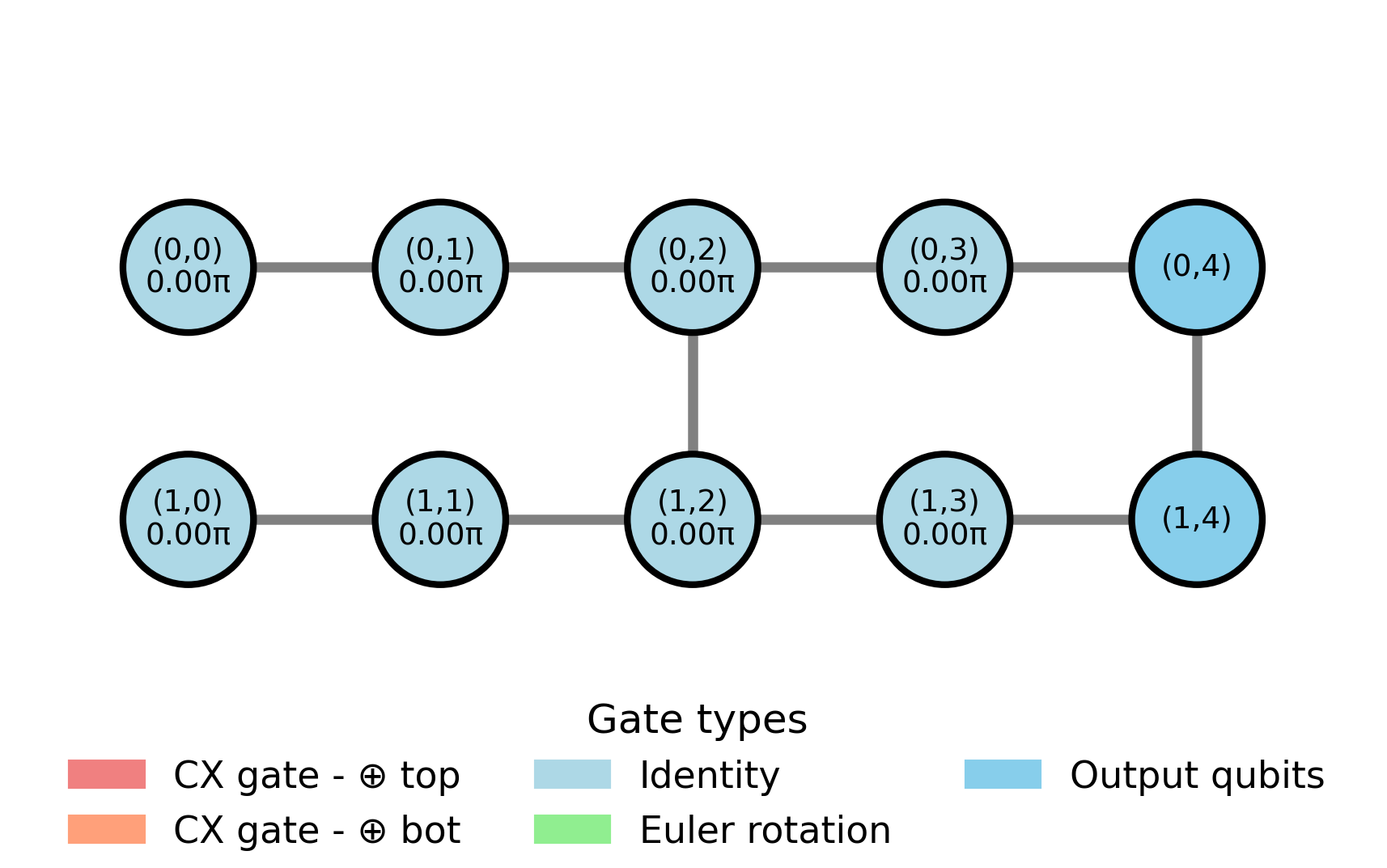}
\caption{Identity brick in the brickwork graph. Measuring the four qubits in each row with angle \(0\) implements four Hadamard gates, whose product is the identity, thereby teleporting the logical state forward without changing it.}
\label{fig:id_brick}
\end{figure}

\section{The Brickwork Transpiler} \label{app:bw_transpiler}
This section serves as an extension to the high level overview discussed in Section \ref{subsec:high_level_transpiler}.

\subsection{Instruction Matrix}\label{sec:instr_mat}
The instruction matrix encodes the essential information of a quantum circuit,
including the qubits each gate acts upon, the corresponding gate parameters, and
their execution dependencies. Moreover, the instruction matrix matches the
topology of a brickwork graph up to alignment, functioning as a primitive for
constructing the measurement graph.
An instruction matrix \(\mathcal{M}\) is generated by first constructing a
directed acyclic graph (DAG) of gate dependencies for the decomposed circuit.
Conceptually, if this dependency graph is rotated so that dependency order runs
from left to right, its layers define matrix columns: rows correspond to logical
qubits and columns correspond to sets of instructions that can be executed
contiguously. This representation matches the topology of the brickwork graph up
to the subsequent alignment of \(\mathrm{CX}\) bricks. The dependency-graph
construction also provides the scheduling structure used in the local optimality
argument below.
Figure~\ref{fig:instr_matrix_dag} shows the dependency graph generated for the
example circuit in Fig.~\ref{fig:instr_matrix_example_circuit}, rotated and
reflected to visually match the direction of time. Fig.~\ref{fig:instr_matrix_example_decomposed} and Fig.~\ref{fig:instr_matrix_example_graph}
show the decomposed circuit and the corresponding
brickwork graph generated by the transpiler respectively. The associated instruction matrices
are shown in Tables~\ref{tab:unaligned_instr_matrix}--\ref{tab:full_instr_matrix}.
\begin{figure}[ht]
    \centering
    \reflectbox{
    \rotatebox{270}{
        \includegraphics[width=0.30\textwidth]{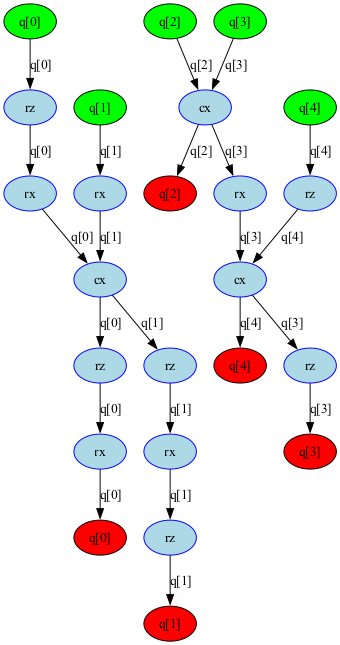}
    }}
    \caption{Dependency graph generated for the circuit in
    Fig.~\ref{fig:instr_matrix_example_circuit}, rotated by \(90^\circ\) and
    reflected over the vertical axis to visually match the direction of time.}
    \label{fig:instr_matrix_dag}
\end{figure}
\begin{figure}[t]
    \centering
    \includegraphics[
        width=0.85\linewidth,
        height=4.4cm,
        keepaspectratio
    ]{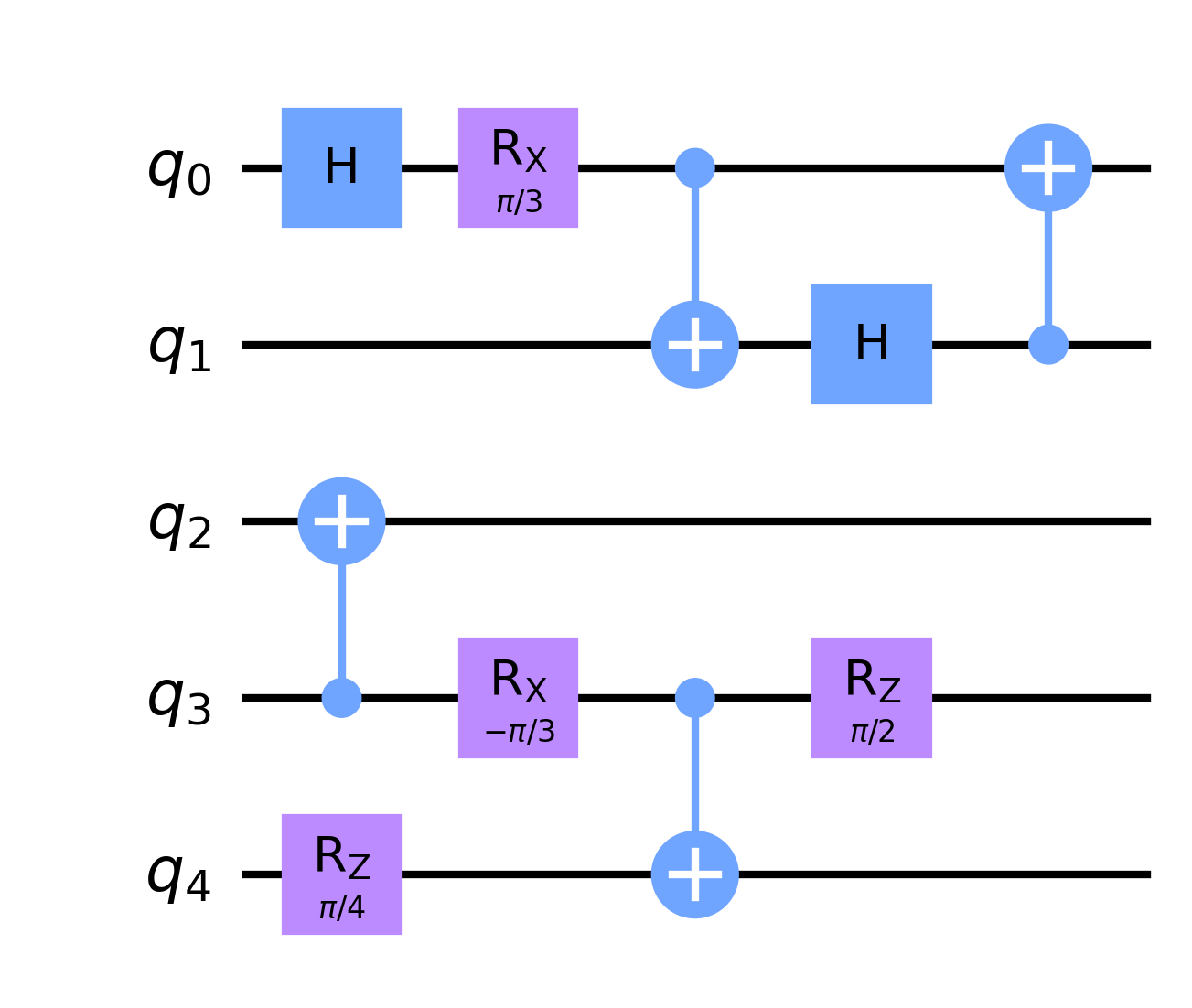}
    \caption{Example quantum circuit defined in Qiskit.}
    \label{fig:instr_matrix_example_circuit}
\end{figure}

\begin{figure}[t]
    \centering
    \includegraphics[
        width=0.85\linewidth,
        height=4.4cm,
        keepaspectratio
    ]{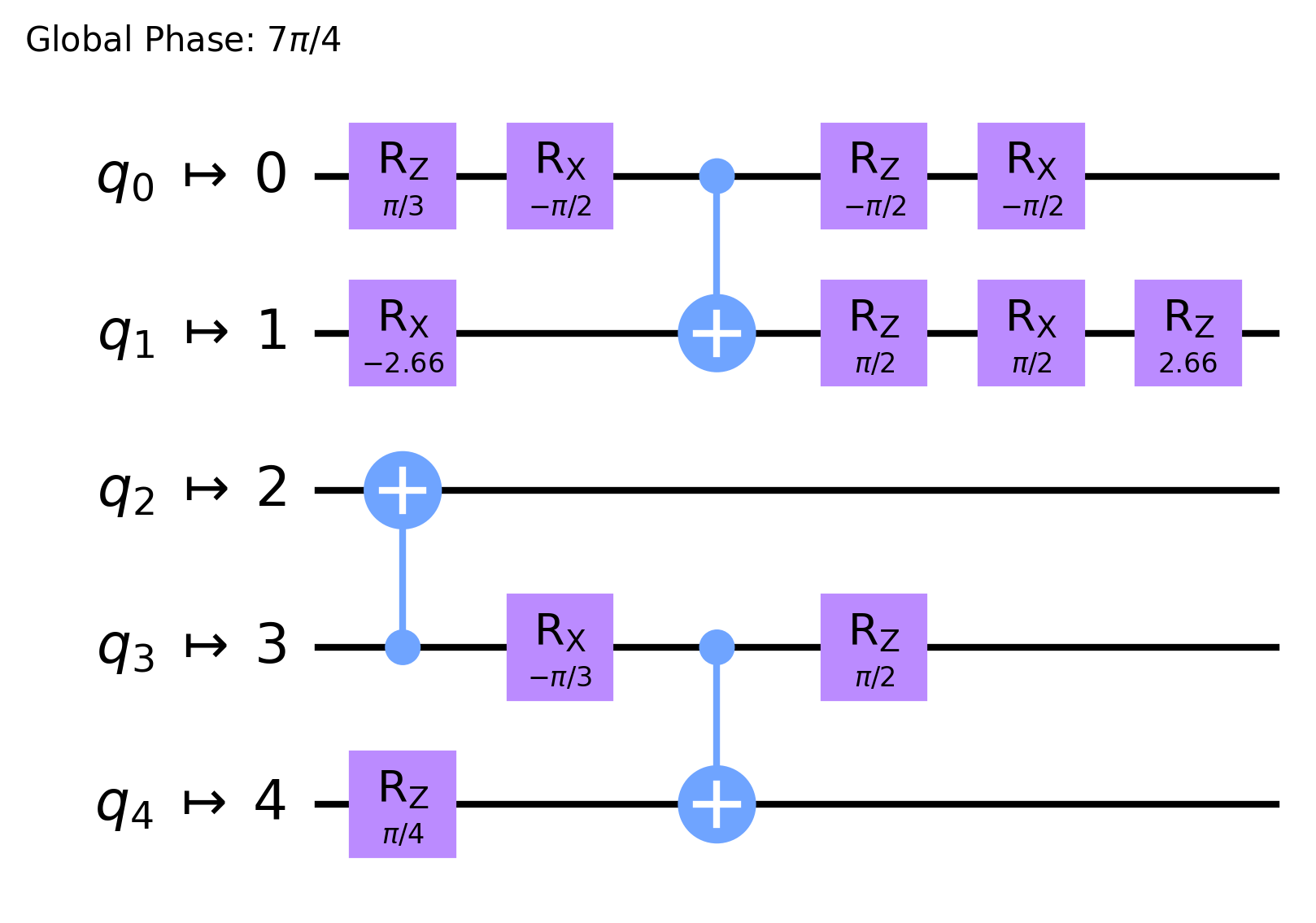}
    \caption{Decomposed quantum circuit used for the instruction-matrix construction.}
    \label{fig:instr_matrix_example_decomposed}
\end{figure}

\begin{figure}[t]
    \centering
    \includegraphics[
        width=\linewidth,
        keepaspectratio
    ]{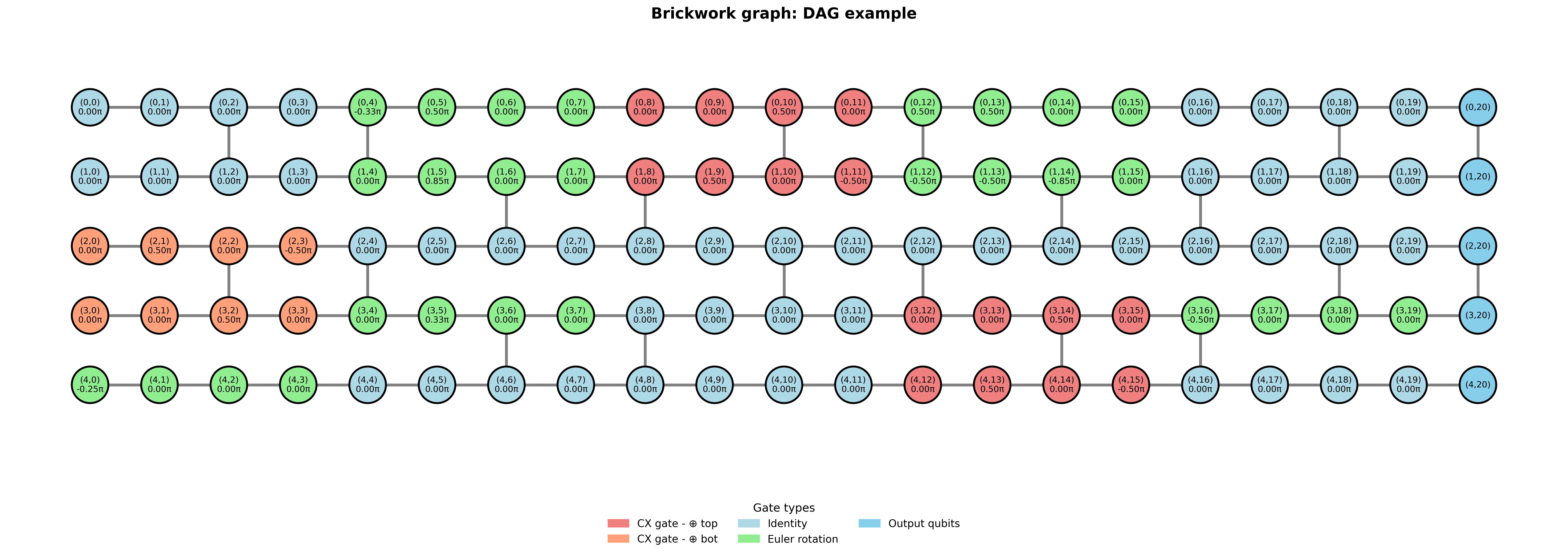}
    \caption{Brickwork graph generated by the transpiler from the decomposed circuit.}
    \label{fig:instr_matrix_example_graph}
\end{figure}
\begin{table}[ht]
\caption{Unaligned instruction matrix. The \(\mathrm{CX}\) gates carry an
integer identifier distinguishing the gate instance, together with a suffix
\texttt{c} or \texttt{t} indicating control or target. These identifiers are
used during alignment.}
\centering
\ttfamily
\begin{tabular}{l|c|c|c|c}
\toprule
\shortstack{Qubit - }{Brickwork Column} & Group 0 & Group 1 & Group 2 & Group 3 \\
\midrule
Qubit 0 & rz, rx & cx1c & rz, rx & - \\
Qubit 1 & rx & cx1t & rz, rx, rz & - \\
Qubit 2 & cx0t & - & - & - \\
Qubit 3 & cx0c & rx & cx2c & rz \\
Qubit 4 & rz & - & cx2t & - \\
\bottomrule
\end{tabular}
\label{tab:unaligned_instr_matrix}
\end{table}
\begin{table}[ht]
\caption{Aligned \(\mathrm{CX}\) matrix before reinserting single-qubit
rotations. When the single-qubit gates are inserted, an additional column may be
appended.}
\centering
\ttfamily
\begin{tabular}{l|c|c|c|c}
\toprule
\shortstack{Qubit - }{Brockwork Column} & Group 0 & Group 1 & Group 2 & Group 3 \\
\midrule
Qubit 0 & - & - & cx1c & - \\
Qubit 1 & - & - & cx1t & - \\
Qubit 2 & cx0t & - & - & - \\
Qubit 3 & cx0c & - & - & cx2c \\
Qubit 4 & - & - & - & cx2t \\
\bottomrule
\end{tabular}
\label{tab:aligned_instr_matrix}
\end{table}
\begin{table}[ht]
\caption{Fully aligned instruction matrix after reinserting single-qubit
rotations, corresponding to the layout of the graph in
Fig.~\ref{fig:instr_matrix_example_graph}.}
\centering
\ttfamily
\begin{tabular}{l|c|c|c|c|c}
\toprule
\shortstack{Qubit - }{Brickwork Column} & Group 0 & Group 1 & Group 2 & Group 3 & Group 4 \\
\midrule
Qubit 0 & rz, rx & - & cx1c & rz, rx & - \\
Qubit 1 & rx & - & cx1t & rz, rx, rz & - \\
Qubit 2 & cx0t & - & - & - & - \\
Qubit 3 & cx0c & rx & - & cx2c & rz \\
Qubit 4 & rz & - & - & cx2t & - \\
\bottomrule
\end{tabular}
\label{tab:full_instr_matrix}
\end{table}

\subsection{Optimality of the Brickwork Graph}\label{app:opt_graph}

The use of the dependency graph, which we denote $D$, sets up a proof of optimality (in terms of final brickwork depth) for the scheduling of bricks given the generator set and supported gates. The idea is as follows: every gate is inserted into the graph as soon as all the gates it depends on have been executed. Note that this proof only proves optimality from the starting point of a Qiskit circuit that is already optimally decomposed (in terms of circuit depth) into our native gateset $\mathcal{G}$; if the initial decomposition is sub-optimal, so will be our transpilation.

Let $D=(V,E)$ be a directed acyclic graph where each node $v \in V$ 
represents a gate with unit execution time, and each edge $(u,v)\in E$ 
represents the precedence constraint that gate $u$ must finish before 
gate $v$ can start, that is the dependency graph. We emphasize that the single-qubit rotations within the dependency graph presented in Fig. \ref{fig:instr_matrix_dag} are themselves separate vertices, but that in the dependency graph $D$ they are merged into a single brick that implements the combined single-qubit rotation $R_zR_xR_z$ to reduce the final brickwork depth of the algorithm they are embedded within.

\paragraph{Definitions}

\begin{itemize}
    \item For each $v\in V$, define its \emph{level} (or \emph{longest-path distance})
    \begin{equation}
        \ell(v) := \max_{P \leadsto v} |P|,
    \end{equation}
    where the maximum is taken over all directed paths $P$ through the dependency graph $D$ starting at its leftmost column and ending at column $v$,
    and $|P|$ is the number of nodes in $P$.
    \item The \emph{critical path length} is
    \begin{equation}
        L := \max_{v \in V} \ell(v)
    \end{equation} giving the path corresponding to the shortest scheduling time (i.e. where all dependencies have been evaluated), hence affecting the calculation of the minimum graph depth. 
    \item The \emph{Greedy schedule} $A$ executes, at each timestep $t=1,2,\dots$,
    all gates $v$ whose predecessors have completed by time $t-1$.
\end{itemize}

\paragraph{Key properties}

\begin{lemma}[Feasibility of Levels]\label{lemma:feasible_levels}
Any feasible schedule must execute gate $v$ no earlier than time $\ell(v)$.
\end{lemma}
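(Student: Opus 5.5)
I will prove the statement by strong induction on the level \(\ell(v)\), using only the precedence constraints encoded by the edges of \(D\) and the unit execution time of each gate. A feasible schedule is an assignment \(t(v)\in\{1,2,\dots\}\) of start times such that for every edge \((u,v)\in E\) we have \(t(v)\ge t(u)+1\): gate \(u\) occupies timestep \(t(u)\), and \(v\) may start only after \(u\) has finished. The claim to establish is \(t(v)\ge \ell(v)\) for every \(v\in V\).

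\textbf{Base case.} Consider vertices with \(\ell(v)=1\). These are the sources of \(D\), where the only path ending at \(v\) is the trivial path \(\{v\}\). Since every schedule starts at time \(1\), we have \(t(v)\ge 1=\ell(v)\).

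\textbf{Inductive step.} Suppose \(\ell(v)=k>1\), and let \(P=(v_1,\dots,v_k=v)\) be a path attaining the maximum in the definition of \(\ell(v)\). The prefix \((v_1,\dots,v_{k-1})\) is a path ending at \(u:=v_{k-1}\), so \(\ell(u)\ge k-1\). Because \(\ell(u)<\ell(v)\), the induction hypothesis gives \(t(u)\ge \ell(u)\ge k-1\). The edge \((u,v)\in E\) then forces
\begin{equation}
t(v)\ \ge\ t(u)+1\ \ge\ k\ =\ \ell(v).
\end{equation}

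An equivalent and perhaps cleaner route avoids induction on levels. Take any path \(v_1\to\cdots\to v_k=v\). The edge constraints give the strictly increasing chain \(t(v_1)<t(v_2)<\cdots<t(v_k)\) of positive integers, so \(t(v_k)\ge k\). Maximizing over all paths ending at \(v\) yields \(t(v)\ge\ell(v)\).

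\textbf{Main obstacle.} The delicate point is not the inequality but the modelling. The definition of \(\ell(v)\) in the text speaks of paths starting at the leftmost column, and the merging of consecutive single-qubit rotations into one brick alters the DAG. I will state explicitly that paths start at any source of the merged dependency graph \(D\), with merged rotation bricks counted as single nodes. Each node takes unit time, and no resource constraints beyond precedence are needed for this lower bound. Adding further constraints, such as brickwork parity alignment, can only shrink the feasible set, so the lower bound remains valid for them as well.
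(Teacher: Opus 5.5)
Your proof is correct and matches the paper's argument: the paper simply notes that some path with $\ell(v)$ nodes ends at $v$ and that gates along a path must run sequentially, which is exactly your second, direct route. Your induction on levels is the same chain argument unrolled; it needs only the one-line observation that $\ell(u)\le\ell(v)-1$ for every edge $(u,v)$, since any path ending at $u$ extends through $v$, and your explicit model of feasible schedules (with merged rotation bricks as single nodes) makes precise what the paper leaves implicit.
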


\begin{proof}
By definition of $\ell(v)$, there exists a path of length $\ell(v)$ ending at $v$.
Since gates on a path must be executed sequentially, $v$ cannot be scheduled 
before time $\ell(v)$. 
\end{proof}

\begin{lemma}[Greedy Completes Levels On Time]\label{lemma:greedy_on_time}
For all $t \ge 1$, after timestep $t$ the Greedy schedule has executed 
exactly the set 
\[
    \{ v \in V : \ell(v) \le t \}.
\]
\end{lemma}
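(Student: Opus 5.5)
I would prove Lemma~\ref{lemma:greedy_on_time} by induction on the timestep $t$, with the levels $\ell(v)$ as the invariant. Write $S_t$ for the set of gates executed by the Greedy schedule $A$ up to and including timestep $t$, and $V_{\le t}:=\{v\in V:\ell(v)\le t\}$. The claim is $S_t=V_{\le t}$ for all $t\ge1$. Throughout I use $S_0=V_{\le 0}=\emptyset$, since every node has level at least $1$ (the path consisting of $v$ alone has one node).

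First I record one structural fact about levels. For every edge $(u,v)\in E$ we have $\ell(u)\le \ell(v)-1$, because any path ending at $u$ extends by $v$ to a longer path ending at $v$. Conversely, if $\ell(v)=k>1$, then on a maximal path to $v$ the predecessor $u$ of $v$ satisfies $\ell(u)=k-1$. If $\ell(u)$ were larger, the path to $u$ could be extended to a path to $v$ longer than $k$. If it were smaller, the prefix of the maximal path ending at $u$ would already have $k-1$ nodes, a contradiction. Hence $\ell(v)=1+\max_{(u,v)\in E}\ell(u)$, with the convention that the empty maximum is $0$.

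For the inductive step, assume $S_{t-1}=V_{\le t-1}$. Greedy executes at step $t$ exactly those not-yet-executed $v$ all of whose predecessors lie in $S_{t-1}$. There are two inclusions to check.
\begin{itemize}
\item[($\supseteq$)] If $\ell(v)\le t$, the recursion gives that all predecessors $u$ have $\ell(u)\le t-1$. So they lie in $S_{t-1}$, and $v$ is either already executed or becomes available and is executed at step $t$.
\item[($\subseteq$)] If $v$ is executed at step $t$ or earlier, then either $v\in S_{t-1}=V_{\le t-1}$, or all its predecessors lie in $V_{\le t-1}$, in which case $\ell(v)=1+\max\ell(u)\le t$.
\end{itemize}
Together these give $S_t=V_{\le t}$.

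The main subtlety is the implicit resource assumption that Greedy may execute arbitrarily many ready gates in one timestep. In the instruction-matrix setting this holds because gates that are simultaneously ready act on disjoint qubits; if two shared a qubit, one would be a dependency of the other in the DAG. I would state this explicitly, since the argument otherwise uses only the DAG structure. Combined with Lemma~\ref{lemma:feasible_levels}, the result shows that Greedy finishes at time $L$, which is optimal.
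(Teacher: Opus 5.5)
Your proof is correct and follows the same induction on $t$ as the paper's proof. The only difference is that you explicitly derive the recursion $\ell(v)=1+\max_{(u,v)\in E}\ell(u)$ and start from $S_0=\emptyset$, which makes rigorous the two level facts the paper simply asserts: that a gate with $\ell(v)\le t$ has all predecessors at level at most $t-1$, and that a gate with $\ell(v)>t$ has a predecessor at level at least $t$.
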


\begin{proof}[Proof by induction.]

\emph{Base case ($t=1$):} $\ell(v)=1$ iff $v$ has no predecessors, so Greedy
executes precisely those gates at time $1$.

\emph{Induction step:} Assume the claim holds for timestep $t-1$.
Consider timestep $t$. Any gate with $\ell(v)\le t$ has all predecessors 
satisfying $\ell(u)\le t-1$, hence by induction they are completed by time $t-1$. 
Thus Greedy executes all such $v$ at time $t$. Conversely, if $\ell(v) > t$, 
then some predecessor requires $\ge t$ steps, so $v$ cannot yet be executed. 
Hence the claim holds.
\end{proof}

\begin{theorem}[Optimality of Greedy]
The Greedy schedule completes all gates in exactly $L$ timesteps, which is optimal.
\end{theorem}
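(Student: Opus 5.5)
The plan is to combine the two preceding lemmas into matching upper and lower bounds on the completion time of the Greedy schedule $A$, both equal to the critical path length $L$.

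\emph{Upper bound.} Every gate $v\in V$ satisfies $\ell(v)\le L$ by the definition of $L$ as a maximum over $V$. Applying Lemma~\ref{lemma:greedy_on_time} at $t=L$, after timestep $L$ Greedy has executed exactly $\{v\in V:\ell(v)\le L\}=V$. So every gate is done within $L$ timesteps.

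\emph{Exactness.} Pick a vertex $v^\ast$ with $\ell(v^\ast)=L$. By Lemma~\ref{lemma:greedy_on_time} at $t=L-1$, the executed set after step $L-1$ is $\{v:\ell(v)\le L-1\}$, which does not contain $v^\ast$. Hence Greedy is not finished before step $L$, and it finishes in exactly $L$ timesteps.

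\emph{Lower bound for any schedule.} Let $S$ be any feasible schedule respecting the precedence edges of $D$ with unit execution times. By Lemma~\ref{lemma:feasible_levels}, $S$ cannot execute $v^\ast$ before time $\ell(v^\ast)=L$. Therefore $S$ needs at least $L$ timesteps, and Greedy's completion time $L$ is minimal.

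The step needing the most care is not the combinatorics but fixing the model these lemmas assume. The argument requires unit execution times and no resource bound on how many gates run in parallel per timestep. Both hold here: gates on disjoint qubits never conflict, and merged single-qubit rotation runs count as one node of $D$. I would state these assumptions explicitly before the three steps.

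I would also note what optimality means here. It is optimality of the depth of the unaligned instruction matrix relative to the given decomposed circuit. The subsequent parity alignment is a separate issue, covered by Theorem~\ref{app:thm:optimality} and Theorem~\ref{thm:abstract_bw_bound_2n}.
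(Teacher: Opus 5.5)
Your proof is correct and matches the paper's own argument: Lemma~\ref{lemma:greedy_on_time} at $t=L$ shows Greedy finishes by time $L$, and Lemma~\ref{lemma:feasible_levels} applied to a vertex of level $L$ gives the matching lower bound for every feasible schedule. Your separate exactness step is harmless but redundant, since Greedy is itself a feasible schedule and the lower bound already rules out its finishing before $L$.
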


\begin{proof}
We establish optimality using a ``greedy stays-ahead'' argument.
By Lemma \ref{lemma:greedy_on_time}, after timestep $L$ all gates with $\ell(v) \le L$ 
are completed, which means all $v \in V$ are done since $\ell(v) \le L$ 
for all $v$. Thus Greedy finishes in $L$ steps.

By Lemma \ref{lemma:feasible_levels}, no schedule can complete any gate earlier than its 
level, so in particular no schedule can finish the DAG in fewer than $L$ steps.
Therefore Greedy stays ahead and uses the minimum possible timesteps to schedule the graph.
\end{proof}

For circuits with different gate-sets than our set $\mathcal{G}$ defined in Section \ref{sec:graph_comp}, the additional cost to translate to our specified gate-set is at most logarithmic in both depth and accuracy of approximation, as implemented by the {Solovay-Kitaev} algorithm \cite{kitaev2002classical,dawson2006solovay}. Thus, up to the standard Solovay--Kitaev polylogarithmic approximation overhead, the transpilation procedure preserves polynomial-size efficiency for arbitrary polynomial-size input circuits over finite universal gate-sets.

% Thus, in terms of asymptotic scaling, our transpilation algorithm is optimal for all quantum circuits.

\paragraph{Alignment to the brickwork topology}

We now add the brickwork constraint. Let $\mathcal{C}$ denote the set of
schedules that (i) respect precedence in the dependency graph $D$ and (ii) conform to the brickwork
topology (e.g., parity/lattice placement constraints). Let
$\tau(b)$ denote the \emph{earliest feasible timestep} for brick $b$
under these two constraints given the already scheduled predecessors.

For now we assume that the alignment procedure places each brick $b$ at time $\tau(b)$, i.e., the
minimum timestep consistent with both the dependency constraints and the
brickwork topology. That the transpiler enforces this is proved in Section \ref{sec:practical_alignment_depth_bound}, Lemma \ref{lem:floor-lb}.\\

\begin{corollary}[Depth minimality under brickwork]\label{cor:brickwork}
The aligned schedule has the smallest possible depth over $\mathcal{C}$.
\end{corollary}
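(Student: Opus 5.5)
The plan is to run the greedy "stays-ahead" argument again, now over the constrained class $\mathcal{C}$. The earliest-feasible timestep $\tau(b)$ plays the role that the level $\ell(v)$ played in the unconstrained case. The target is the pointwise statement
\[
\sigma(b)\;\geq\;\tau(b)\qquad\text{for every brick } b \text{ and every } \sigma\in\mathcal{C}.
\]
Minimality of the depth then follows at once. The depth of any schedule is one plus the largest occupied column, so pointwise domination of every brick's timestep by the aligned schedule gives $\max_b\tau(b)\leq\max_b\sigma(b)$.

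To prove the pointwise inequality I will induct over the processing order of bricks, which is the topological order of $D$ with ties broken by top row, exactly as in the alignment algorithm. Fix $\sigma\in\mathcal{C}$ and a brick $b$, and suppose $\sigma(a)\geq\tau(a)$ for every predecessor $a$ of $b$.

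\begin{itemize}
\item \emph{Precedence.} For each immediate left neighbour $a=L_r(b)$ on a row $r\in\mathrm{rows}(b)$, feasibility of $\sigma$ forces $\sigma(b)\geq\sigma(a)+1+\chi_r(b)$. By Lemma~\ref{lem:floor-lb} this is a genuine necessary separation. Using the induction hypothesis, $\sigma(b)\geq\tau(a)+1+\chi_r(b)$, and maximising over the rows gives $\sigma(b)\geq\floor{b}$.
\item \emph{Topology.} Membership in $\mathcal{C}$ imposes the brickwork constraints directly: $\sigma(b)\geq\base(b)$ and $\sigma(b)\equiv\top(b)\pmod 2$.
\item \emph{Conclusion of the step.} Together these show that $\sigma(b)$ is admissible in \eqref{eq:place}. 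Lemma~\ref{lem:efp} identifies $\tau(b)=\place(b)$ as the least admissible value, so $\tau(b)\leq\sigma(b)$.
\end{itemize}

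In the base case $b$ has no predecessor, so the floor is $-\infty$ and only the base and parity constraints remain. The same minimality argument applies. This is essentially the argument of Theorem~\ref{app:thm:optimality}, and I will cite it for the CX bricks.

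It remains to handle single-qubit bricks, since depth counts every occupied column. These are inserted by the fixed rule of Lemma~\ref{lem:rotation-insertion} at $\place(P)+1$, $\place(R)-1$, or $0$. Each of these is monotone non-decreasing in the CX placements, so the aligned schedule also minimises the largest occupied column once rotations are included. The $\chi_r$ term already reserves room for rotations in any feasible schedule, which means insertion never adds columns beyond what every schedule in $\mathcal{C}$ must also pay.

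I expect the main obstacle to be making precise that $\mathcal{C}$ coincides with the feasible placements of Definition~\ref{def:feasible}. In particular, any schedule in $\mathcal{C}$ must respect the row-wise separation including $\chi_r$, and the single-qubit cells are bound to the fixed insertion rule. I would settle this by defining $\mathcal{C}$ to mean exactly these constraints, with brickwork conformance meaning parity plus base column. I would then note that the corollary claims optimality relative to this class only, not against arbitrary re-decompositions of the circuit.
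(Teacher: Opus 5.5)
Your proposal is correct and is essentially the paper's argument: the paper's proof is a one-line version of your pointwise bound $\sigma(b)\geq\tau(b)$. It argues that placing any brick before $\tau(b)$ would violate the separation of Lemma~\ref{lem:floor-lb}. Your explicit induction over the processing order (needed because $\tau(b)$ depends on where the aligned schedule put the predecessors), the base/parity step, and the monotonicity argument for reinserted rotations only make precise what the paper leaves implicit or defers to Theorem~\ref{app:thm:optimality}, under the same identification of $\mathcal{C}$ with the feasible placements of Definition~\ref{def:feasible} plus the fixed insertion rule.
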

\begin{proof}
Let $S^\star\in\mathcal{C}$ be any schedule. If it places some brick $b$ earlier
than the aligned schedule, then it places $b$ before $\tau(b)$, contradicting
Lemma~\ref{lem:floor-lb}. Hence no schedule in $\mathcal{C}$ finishes earlier.
\end{proof}

\subsection{Outline of the Alignment Algorithm}\label{app:outlineAlignmentAlgo}

Our algorithm begins by performing a pass (that is, a sweep) of the instruction matrix. Specifically, the $\mathrm{CX}$ brick alignment pass arranges two-qubit operations
(``$\mathrm{CX}$ bricks'') within the instruction matrix into a compact,
collision-free layout that respects the brickwork parity constraint and
preserves the logical order of execution on each row. Each brick corresponds to
the two tagged endpoints of a single $\mathrm{CX}$ instruction, which share a
group identifier, occupy two adjacent rows after routing, and originate from a
single column of the input instruction matrix. Bricks are collected, ordered by
their original column, and placed sequentially from left to right, with stable
tie-breaking by prioritizing placement of top bricks before bottom bricks.
\paragraph{Core idea.}
The alignment pass maintains row-local placement information describing the
already occupied portion of each row. When a new brick is processed, only
previously placed bricks sharing one of its rows can constrain its placement.
The algorithm therefore computes the earliest brick column that simultaneously
satisfies the brickwork parity constraint, preserves per-row order, avoids
collisions, and, when required, leaves space for single-qubit rotations that
must be inserted between consecutive $\mathrm{CX}$ bricks.
\paragraph{Parity-compatible base column.}
For a brick $B$, let $\mathrm{ocol}(B)$ denote its original column and let
$\top(B)$ denote its top row. The brickwork pattern admits a $\mathrm{CX}$ brick
with top row $\top(B)$ only in columns with matching parity. The parity-aligned
base column is therefore
\begin{equation}
\mathrm{base}(B)=
\begin{cases}
\mathrm{ocol}(B), & \mathrm{ocol}(B)\equiv \top(B) \pmod 2,\\
\mathrm{ocol}(B)+1, & \text{otherwise}.
\end{cases}
\end{equation}
\paragraph{Placement floor.}
The placement floor is the smallest column index at which a brick can be placed
without violating the row-local constraints imposed by previously placed
bricks. Different but closely related choices of this floor lead to different
alignment policies.
A conservative gap-preserving policy maintains, for each row, an occupancy
frontier that records how far the rightmost placed brick on that row has been
shifted relative to its original column. This policy preserves original
separations between overlapping bricks and therefore guarantees that no
originally available column is lost. It is simple and robust, but may introduce
unnecessary empty columns when the original gap contains no operation that must
be preserved.
The implementation used for the resource estimates in this work uses a tighter
rotation-aware policy. For each row $r$ touched by a brick $B$, let $L_r(B)$
denote the most recently placed brick on row $r$, if it exists. If no such brick
exists, row $r$ imposes no lower bound beyond the parity-compatible base
column. If $L_r(B)$ exists, then $B$ must be placed at least one column to its
right in order to preserve per-row order. Moreover, if the original instruction
matrix contains a single-qubit rotation on row $r$ in a column strictly between
$\mathrm{ocol}(L_r(B))$ and $\mathrm{ocol}(B)$, then an additional free column
must be reserved so that the rotation can later be inserted without colliding
with a $\mathrm{CX}$ brick.
Equivalently, define
\begin{equation}
    \chi_r(B) =
    \begin{cases}
        1, & \text{if row \(r\) of \(B\) admits an interleaved rotation},\\
        0, & \text{otherwise}.
    \end{cases}
\end{equation}
The rotation-aware row-wise constraint is
\begin{equation}
\mathrm{place}(B)\geq \mathrm{place}(L_r(B)) + 1 + \chi_r(B).
\end{equation}
Taking the maximum of these constraints over the rows of $B$ gives the
rotation-aware placement floor. The selected placement column is the smallest
column that is at least both this floor and $\mathrm{base}(B)$, and whose parity
matches $\top(B)$.
\paragraph{Rotation insertion.}
After all $\mathrm{CX}$ bricks have been aligned, the single-qubit rotations
from the original instruction matrix are reinserted next to their neighbouring
$\mathrm{CX}$ bricks. If a rotation originally had a $\mathrm{CX}$ neighbour on
its left, the algorithm places it immediately to the right of the aligned
$\mathrm{CX}$ when that cell is free. Otherwise, if it had a $\mathrm{CX}$
neighbour on its right, the algorithm places it immediately to the left of that
aligned $\mathrm{CX}$ when possible. Rotations with no neighbouring
$\mathrm{CX}$ are placed in the leading rotation column. The rotation-aware
floor ensures that a free column is available whenever a rotation must be
threaded between consecutive $\mathrm{CX}$ bricks on the same row.
\paragraph{Collision avoidance and efficiency.}
Because every brick is placed strictly to the right of the most recently placed
brick on each row it touches, the aligned matrix preserves per-row program
order and introduces no $\mathrm{CX}$ collisions. Under the conservative
gap-preserving policy, original gaps are preserved. Under the rotation-aware
policy used in the present implementation, empty original gaps are allowed to
close unless they are needed for an interleaved rotation. This gives a tighter
layout while preserving the same parity and ordering constraints. The algorithm
is deterministic for the fixed processing order. Its cost is linear in the
number of brick--row incidences, apart from the lookup used to detect
interleaved rotations, which can be implemented with a sorted per-row list of
rotation columns.

 \subsection{Alignment Algorithm Depth Bound}\label{sec:practical_alignment_depth_bound}

 Theorem~\ref{thm:abstract_bw_bound_2n} gives the abstract factor-two alignment
bound used in the scaling analysis. Here we summarize the constructive alignment
procedure used by the transpiler and relate it to the detailed proofs in
Appendix~\ref{proof:bw_alignment_gaps}. The implemented procedure follows a
local earliest-feasible placement policy: for a fixed dependency-ordered
instruction matrix, each $\mathrm{CX}$ brick is placed as early as possible
subject to the brickwork parity constraint, per-row ordering, and the reservation
of a free, unoccupied column whenever an interleaved single-qubit rotation must later be
inserted. 
\paragraph{Locality policy.}
The instruction matrix is scanned in non-decreasing order of the bricks'
original columns, with stable tie-breaking by prioritizing placement of top bricks over bottom bricks. When the current
brick \(B\) is placed, earlier placements are not revised. The placement of
\(B\) is constrained only by the most recently placed brick on each row touched
by \(B\), together with the parity constraint imposed by the brickwork graph.
Empty gaps are allowed to close; an additional column is reserved only
when the original instruction matrix contains a single-qubit rotation strictly
between two consecutive $\mathrm{CX}$ bricks on the same row.
\paragraph{Identifiers and order.}
Each two-qubit \(\mathrm{CX}\) gate appears on two rows and carries a global
identifier \texttt{gid}, tying the two endpoints together and fixing the
control/target roles. We write \(\tau\in\mathbb{N}\) for the original column of
the brick and \(\top(B)\) for its top row. Bricks are processed in
non-decreasing \(\tau\), with stable tie-breaking. The statements below
summarize the formal results proved in Appendix~\ref{proof:bw_alignment_gaps}.
\paragraph{Locality of shifts.}
When placing a brick \(B\), the selected column \(\place(B)\) depends only on
the already placed left-neighbour bricks that share a row with \(B\). Bricks on
disjoint rows are independent, and no placement is ever revised.
\emph{Intuition.} The placement floor for \(B\) is computed row-wise based on the
nearest already placed brick on each row touched by \(B\), together with whether
a single-qubit rotation brick must be inserted between the two bricks on that row.
\paragraph{Earliest-feasible placement.}
For each brick \(B\), the algorithm chooses the earliest (that is, leftmost) column that
simultaneously (i) satisfies the brickwork parity constraint, (ii) preserves
per-row order with respect to the immediate left neighbours of \(B\), (iii)
leaves a free column for each interleaved rotation, and (iv) leaves all earlier
placements unchanged. Consequently, among placements obeying this locality
policy, the algorithm minimizes every \(\place(B)\) and hence minimizes the
aligned depth.
\emph{Intuition.} The row-wise constraints induce a necessary lower bound
(``floor''); the algorithm takes the smallest parity-compatible column above
that floor.

% \paragraph{Gap expansion bound.}
% For consecutive bricks \(A,B\) on a row, with
% \(\ocol(A)<\ocol(B)\),
% \begin{equation}
% 1\ \le\ \newgap(A,B)\ \le\ \gap(A,B)+1.
% \end{equation}
% Thus the aligned gap is always positive, so row order is preserved, while the
% gap can exceed the original gap by at most one column.
% \emph{Intuition.} Empty original gaps may close, but parity rounding can add at
% most one column. If a rotation lies in the original gap, the floor reserves an
% additional column for it.
% \paragraph{Row-depth blow-up.}
% For every row \(\rho\),
% \begin{equation}
% \depth_\rho(M')\ \le\ 2\,\depth_\rho(M)-1,
% \end{equation}
% where \(M'\) is the aligned matrix returned by the algorithm.
% \emph{Intuition.} Each original gap contributes at most itself plus one additional gap (which will need to be filled in with identity bricks), and
% there are at most \(\depth_\rho(M)-1\) gaps on row \(\rho\).
% \paragraph{Global depth bound.}
% If \(M\) is row-isolated and \(\depth(M)=n\), then the aligned matrix \(M'\)
% satisfies
% \begin{equation}
% \depth(M')\ \le\ 2n-1\ \le\ 2n.
% \end{equation}
% \emph{Intuition.} Apply the row-depth bound to each row and then maximize over
% rows.

\paragraph{Gap bound.}
For consecutive bricks \(A,B\) on a row \(r\), with
\(\ocol(A)<\ocol(B)\),
\begin{equation}
\newgap(A,B)\ \ge\ 1+\chi_r(B).
\end{equation}
Thus the aligned gap is always positive, so row order is preserved, and a free
column is kept whenever a rotation must be inserted between \(A\) and \(B\).
The aligned gap is not bounded above by \(\gap(A,B)+1\): the placement of \(B\)
is governed by the larger of the floors of its two rows, so on the row with the
smaller floor the gap can grow by more than one column. For bricks on rows
\((0,1)\) at column \(0\), on rows \((2,3)\) at columns \(0\)--\(3\), and on rows
\((1,2)\) at column \(4\), the last brick is placed at column \(7\), so on row
\(1\) the gap grows from \(4\) to \(7\). The depth is bounded instead through the
column bound below.
\emph{Intuition.} Empty original gaps may close, parity rounding adds at most one
column, and delay accumulated on the other row of \(B\) is inherited.
\paragraph{Column bound.}
Every CX brick satisfies
\begin{equation}
\place(B)\ \le\ 2\,\ocol(B)+1,
\end{equation}
and every reinserted single-qubit brick is placed at a column at most \(2d-2\)
(Lemmas~\ref{lem:delay-bound} and~\ref{lem:rotation-insertion}).
\emph{Intuition.} By induction over the processing order: each left neighbor
\(L_r(B)\) has \(\ocol(L_r(B))\le\ocol(B)-1-\chi_r(B)\), so the floor is at most
\(2\,\ocol(B)\); parity rounding adds at most one column.
\paragraph{Global depth bound.}
If \(M\) is row-isolated and \(\depth(M)=n\), then the aligned matrix \(M'\)
satisfies
\begin{equation}
\depth(M')\ \le\ 2n,
\end{equation}
and for every row \(\rho\), \(\depth_\rho(M')\le 2\,\depth_\rho(M)\). The bound
is attained (Remark~\ref{rem:tight}).
\emph{Intuition.} Apply the column bound to the last brick of each row and then
maximize over rows.

\paragraph{Semantic correctness.}
The aligned matrix preserves per-row program order, introduces no
\(\mathrm{CX}\)-brick collisions, respects the brickwork parity constraint, and
leaves a free column between consecutive bricks whenever a rotation is
interleaved.
\emph{Intuition.} Consecutive new gaps on any row are at least one, so there is
no reorder or collision. Parity is enforced by construction, and the
rotation-aware floor reserves the required intermediate column.
\paragraph{Depth bounded.}
Combining the global depth bound with the earliest-feasible placement result
shows that, for a fixed row-isolated instruction matrix and under the stated
locality policy, the transpiler enforces the alignment depth bound by
construction:
\begin{equation}
\depth(M')\ \le\ 2\,\depth(M).
\end{equation}
The complete implementation is available on our Github page:
\url{https://github.com/Nonsens24/brickwork_transpiler}.

\subsection{UBQC Graph Encryption}\label{app:encryption}
This subsection describes the transpiler's implementation of angle randomization in the BFK protocol~\cite{ubqc}, distinguishing its static representation from the adaptive execution required for blindness. The resource-scaling results in this work depend only on the
dimensions of the brickwork graph and on the measurement pattern prior to the UBQC
encryption layer. Blinding in the BFK protocol randomizes measurement angles without
altering the graph or the structure of the pattern, so the depth and qubit-count
bounds of Section~\ref{sec:results} are unchanged by its application.
To this end, the transpiler provides a function \texttt{bfk\_encoder}, which takes as
input a brickwork \texttt{Pattern} and returns both a new brickwork
\texttt{Pattern} carrying randomized displayed angles and a log of Alice's
randomization data. An optional argument allows the removal of byproduct commands
from the returned \texttt{Pattern}, thereby better reflecting the opaque nature of the
protocol.
The measurement basis angle $\delta_{x,y}$ is defined as
\begin{equation}
\delta_{x,y} = \phi'_{x,y} + \theta_{x,y} + \pi r_{x,y},
\end{equation}
where
\begin{equation}
\phi'_{x,y} = (-1)^{s_{x,y}^X} \phi_{x,y} + s^Z_{x,y}\pi,
\end{equation}
with $s^X$ and $s^Z$ denoting the dependency terms on prior measurement outcomes:
\begin{equation}
s^X_{x,y} = \bigoplus_{i \in D_{x,y}} s_i, 
\qquad 
s^Z_{x,y} = \bigoplus_{i \in D'_{x,y}} s_i.
\end{equation}
Here $\phi_{x,y}$ is the unencrypted adaptive measurement angle, $\theta_{x,y}$ is
Alice's random preparation angle drawn from $\{0,\pi/4,\dots,7\pi/4\}$, and
$r_{x,y}\in\{0,1\}$ is Alice's random bit. The dependency corrections $s^X_{x,y}$
and $s^Z_{x,y}$ are computed dynamically during execution from prior measurement
outcomes.
The usage of the function is illustrated below:
\begin{lstlisting}
encoded_pattern, log_alice = bfk_encoder(bw_pattern, remove_dependencies=False)
\end{lstlisting}
The list \texttt{log\_alice} contains Alice-side randomization data and dependency
information in a structured format, for example:
\begin{lstlisting}
{'node': (1, 1), 'phi': 0.5, 'theta': 0.75, 'r': 0, 'x_byproducts': {(1, 2)},
 'z_byproducts': {(0, 2), (1, 3)}, 's_domain': set(), 't_domain': set()},
\end{lstlisting}
The dependency domains are inherited from the underlying Graphix pattern and need
not be empty for general brickwork patterns. The current static-pattern utility
stores randomized displayed angles and retains the Graphix dependency domains. In
the full BFK protocol, however, the quantity $\delta_{x,y}$ must be evaluated
adaptively at measurement time after the relevant prior outcomes are known; the
randomization terms $\theta_{x,y}$ and $\pi r_{x,y}$ should therefore be treated as
Alice-side classical data rather than as a complete replacement for adaptive
protocol execution. Accordingly, the encryption layer is used in this work only for
illustration, and the format above records the information needed to inspect the
generated encrypted schedule.
In addition, a visualization scheme was implemented to represent encoded graphs.
Unlike the noise visualization (Appendix~\ref{app:noise_elab}, Fig.~\ref{fig:noise}), this scheme removes color indicators corresponding
to gate types and instead labels the nodes with either symbolic angles
$\theta_{x,y}$ or their explicit numerical values. An example is provided in
Fig.~\ref{fig:encoded_delta_graph}.
\begin{figure}[ht]
\centering
\includegraphics[width=1\linewidth]{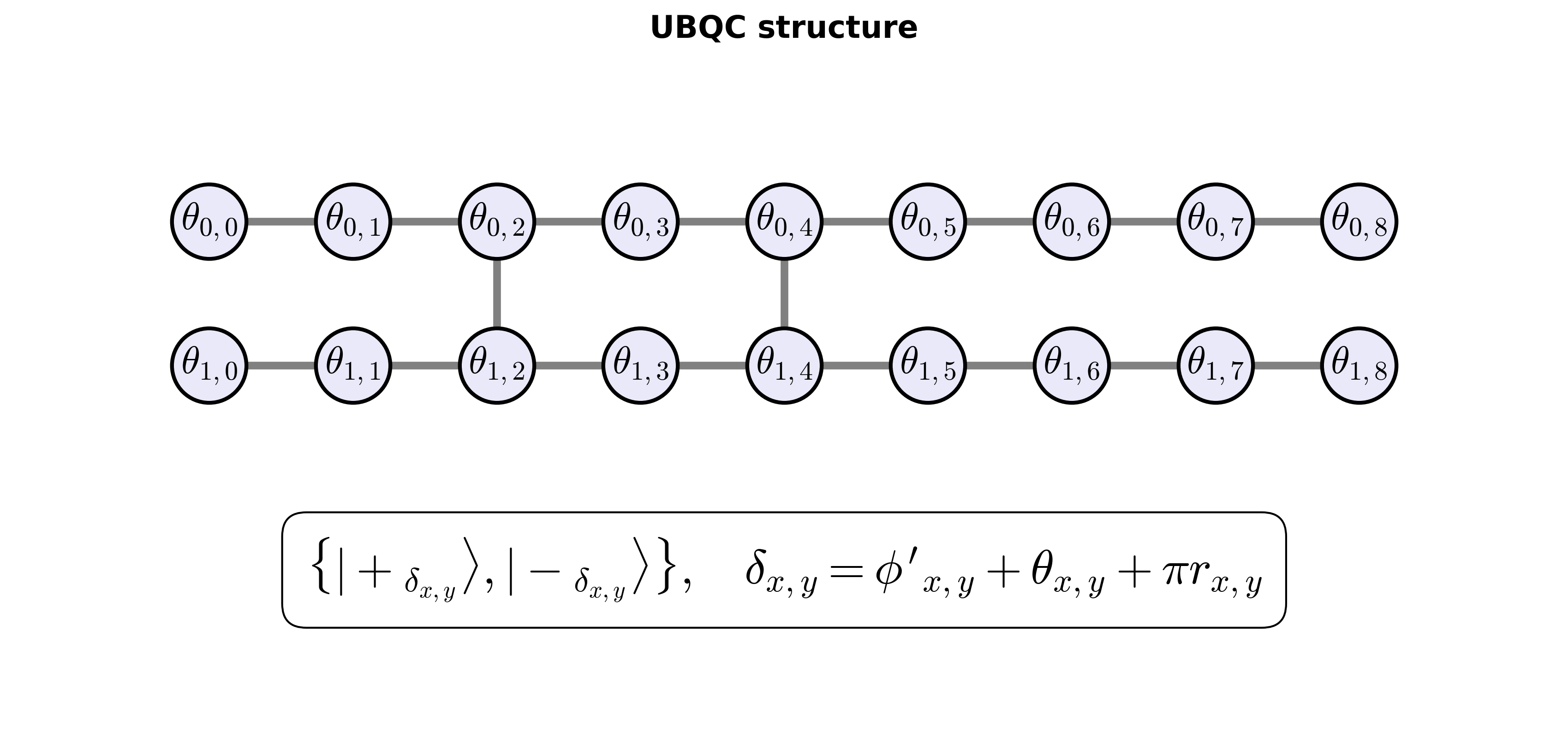}
\caption{An encrypted brickwork graph visualized without color-coding to conceal gate information.}
\label{fig:encoded_delta_graph}
\end{figure}

\subsection{Noise Injection}\label{app:noise_elab}
Although a comprehensive analysis of noise processes is beyond the scope of this
paper, the transpiler is designed to interface with noise models supported by the
underlying Graphix simulation stack. In particular, depolarizing-noise simulation
is supported through Graphix's noise-model interface rather than through a
MasonQ-specific physical noise channel. The MasonQ noise module re-exports the
relevant Graphix noise classes and adds helpers for noisy simulation,
density-matrix extraction, and fidelity evaluation.
For visualization and exploratory studies, one may also consider phenomenological
perturbations of measurement angles. Such perturbations are useful for debugging
and for illustrating sensitivity to measurement-basis errors, but they should
not be identified with a physical depolarizing channel. A depolarizing channel
acts on the quantum state, whereas a measurement-angle perturbation changes the
measurement basis.
The transpiler additionally provides a rendering routine that color-codes graph
nodes by their measurement angle on a continuous heat map, making it easy to see
where angles deviate from their nominal values without inspecting the underlying
parameters. Figure~\ref{fig:noise} illustrates this mode. The routine is a
renderer: it displays whatever angles the supplied pattern carries and does not
itself generate noise. This functionality is intended as a diagnostic and
extensibility feature rather than as a noise model entering the complexity
analysis.
\begin{figure}[ht]
\centering
\includegraphics[width=1\linewidth]{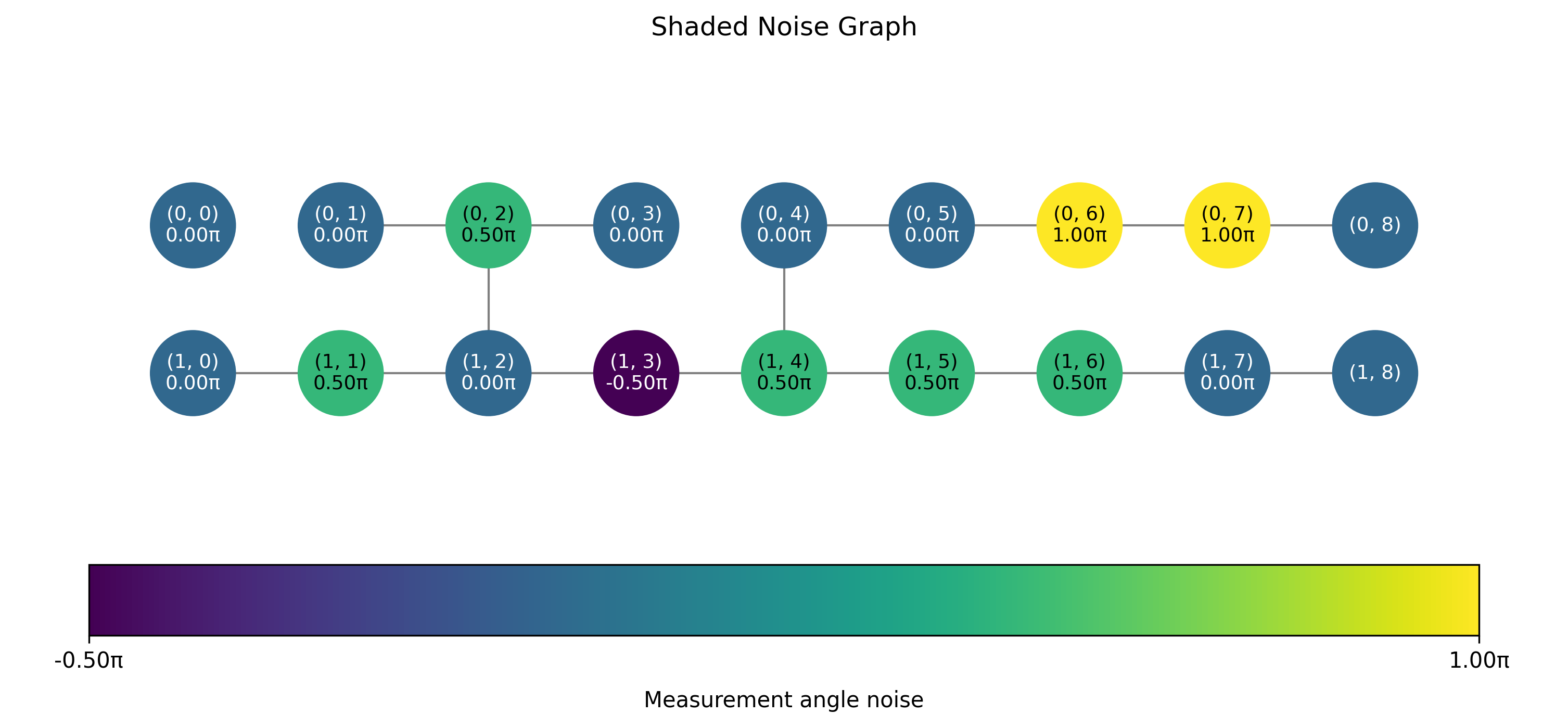}
\caption{A brickwork graph visualized with a color-coded heat map based on the
measurement angles carried by the supplied pattern. The renderer displays angle
values for diagnostic visualization and does not itself define a physical noise
model.}
\label{fig:noise}
\end{figure}

\section{Experiment supplementary material}\label{app:experiment}

Figure~\ref{fig:four_plots_results_experiment} shows the simulated recommendation counts for all four two-bit user feature vectors, using the database entries \(00\) and \(01\). Figure~\ref{fig:four_plots_from_statevector_gx} shows the corresponding recommendation counts after SABRE layout and routing.

\begin{figure*}[!ht]
    \centering

    \begin{minipage}[t]{0.48\textwidth}
        \centering
        \includegraphics[width=\linewidth]{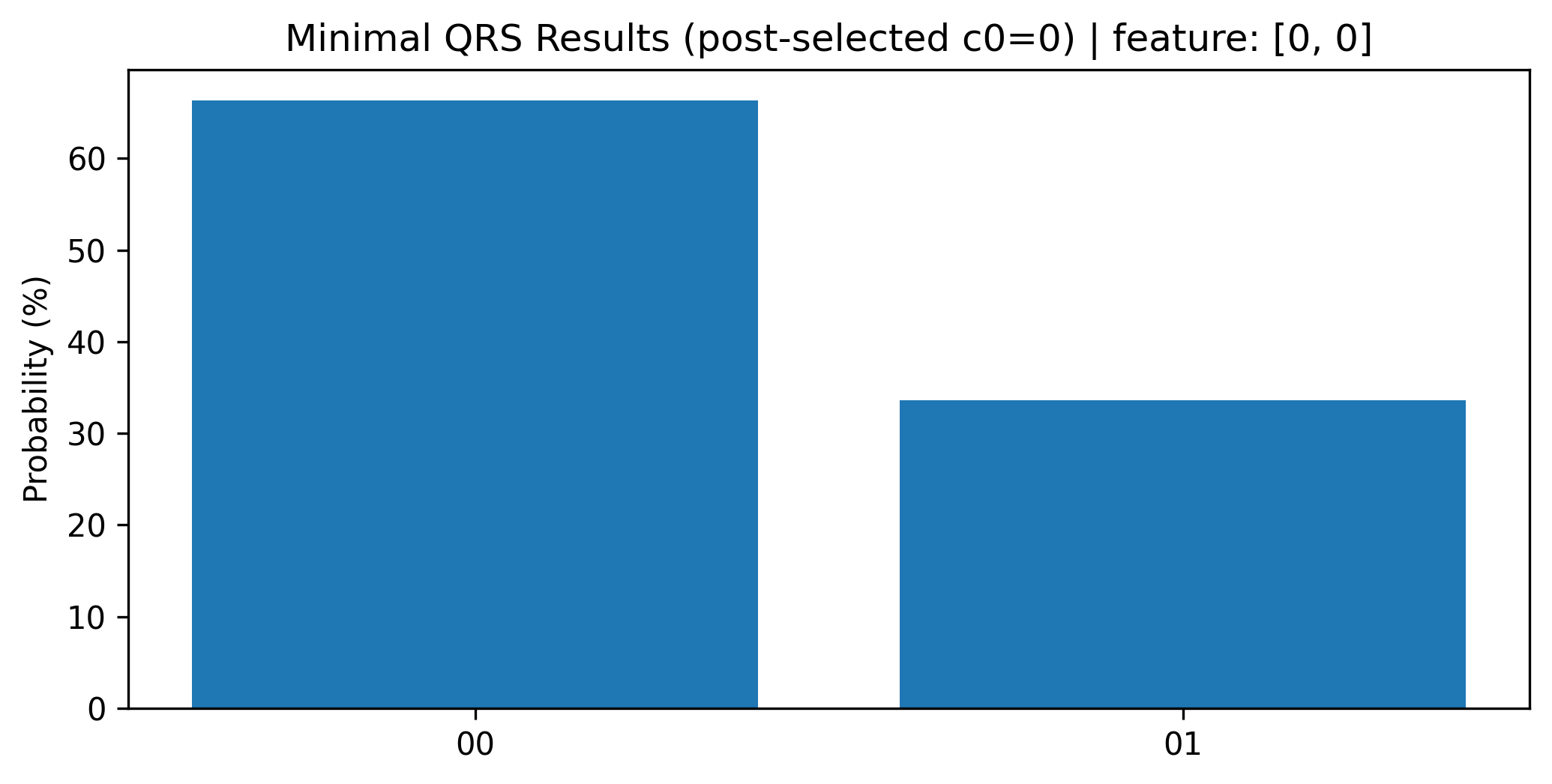}\\[-0.2em]
        \small (a)
    \end{minipage}
    \hfill
    \begin{minipage}[t]{0.48\textwidth}
        \centering
        \includegraphics[width=\linewidth]{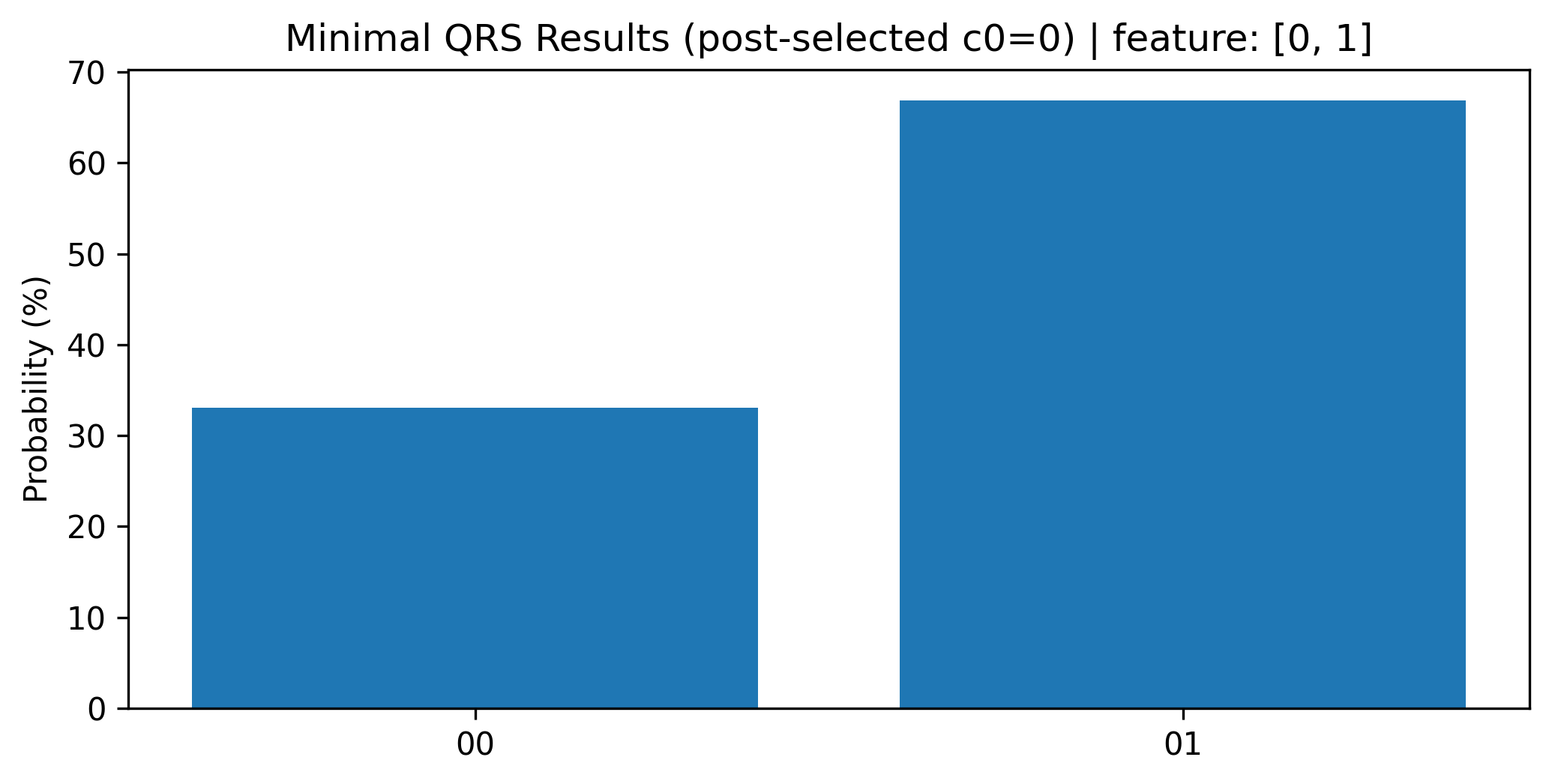}\\[-0.2em]
        \small (b)
    \end{minipage}

    \vskip0.8em

    \begin{minipage}[t]{0.48\textwidth}
        \centering
        \includegraphics[width=\linewidth]{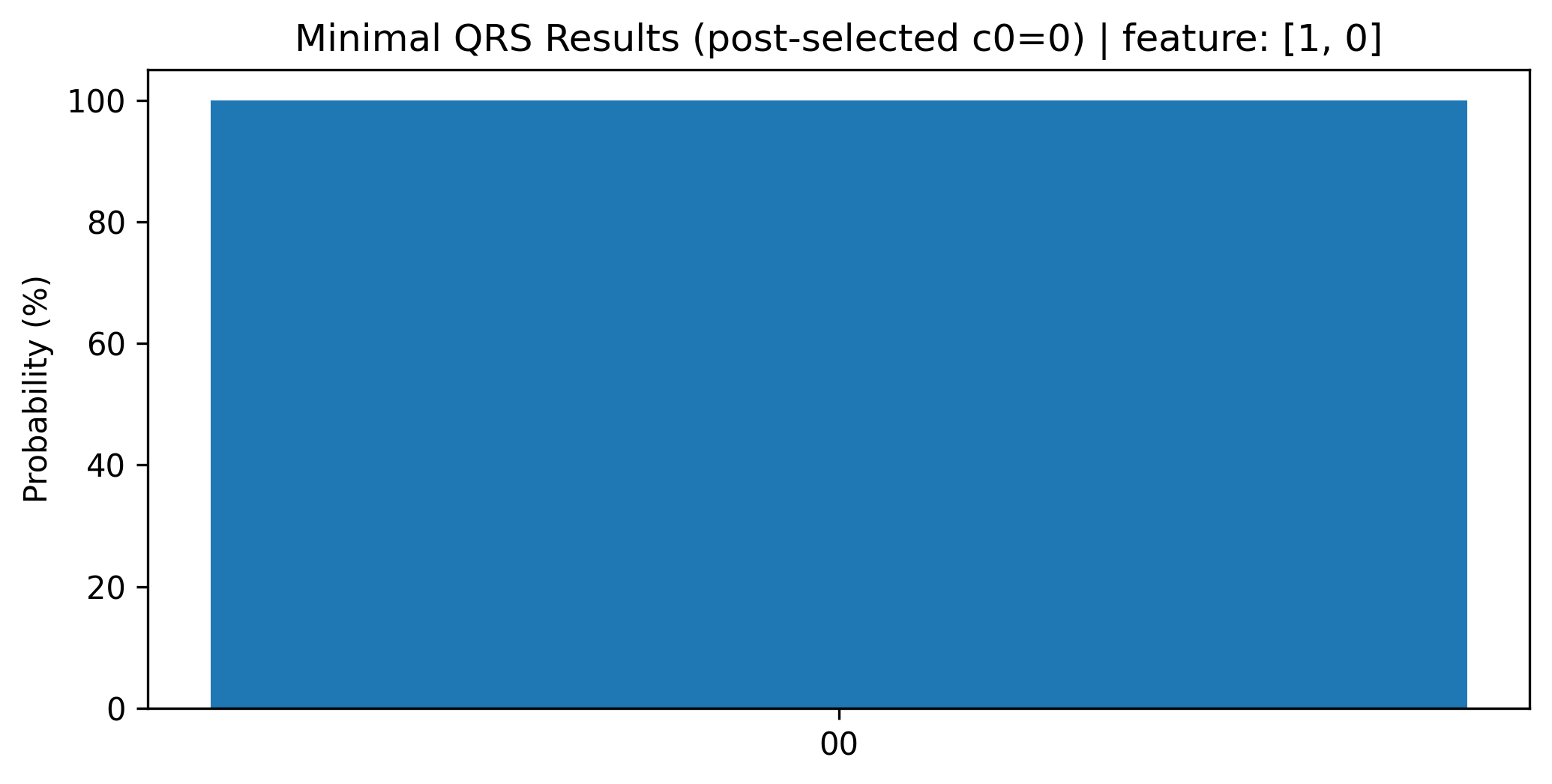}\\[-0.2em]
        \small (c)
    \end{minipage}
    \hfill
    \begin{minipage}[t]{0.48\textwidth}
        \centering
        \includegraphics[width=\linewidth]{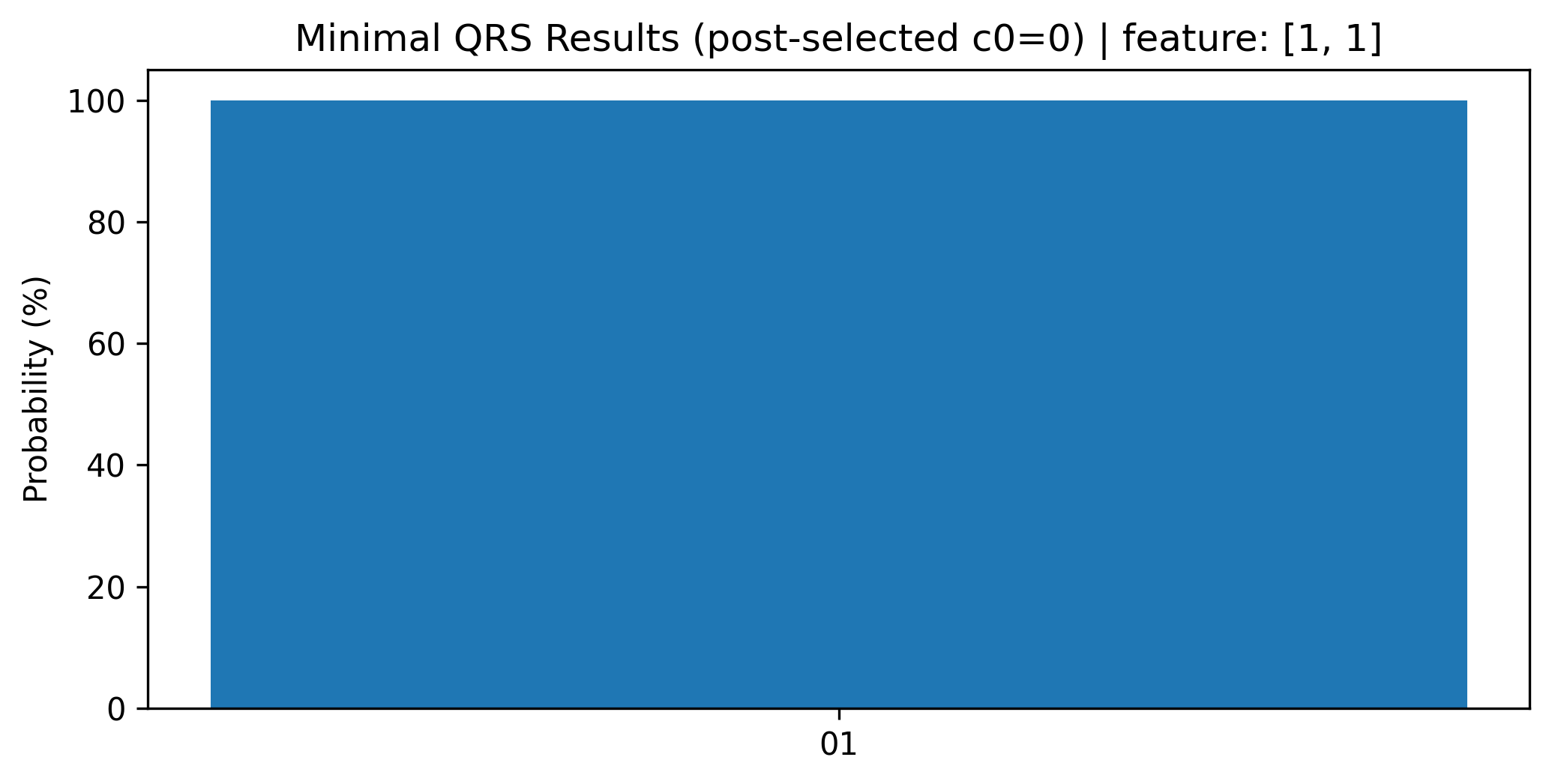}\\[-0.2em]
        \small (d)
    \end{minipage}

    \caption{
    Output counts for four feature-vector recommendations based on the same database encodings, \(00\) and \(01\).
    \textbf{(a)} Feature vector \(\langle 0,0\rangle\).
    \textbf{(b)} Feature vector \(\langle 0,1\rangle\).
    \textbf{(c)} Feature vector \(\langle 1,0\rangle\).
    \textbf{(d)} Feature vector \(\langle 1,1\rangle\).
    }
    \label{fig:four_plots_results_experiment}
\end{figure*}

% \begin{figure*}[htbp]
%     \centering

%     \begin{minipage}[t]{\textwidth}
%         \centering
%         \includegraphics[width=\linewidth,height=5cm,keepaspectratio]{figures/experiment/minimal_recommendation_circuit.png}\\[-0.2em]
%         \small (a)
%     \end{minipage}

%     \vskip0.8em

%     \begin{minipage}[t]{\textwidth}
%         \centering
%         \includegraphics[width=\linewidth]{figures/experiment/Decomposed_minimal_recommendation_circuit.png}\\[-0.2em]
%         \small (b)
%     \end{minipage}

%     \caption{
%     The proposed minimal quantum recommendation system, containing only the QKNN and database-encoding components, based on the quantum recommendation system presented by Sawerwain and Wróblewski \cite{Sawerwain_Wróblewski_2019}.
%     \textbf{(a)} The non-decomposed version of the minimal QRS circuit.
%     \textbf{(b)} The decomposed circuit obeying the gate constraints imposed by the brickwork graph.
%     }
%     \label{fig:experiment_circs}
% \end{figure*}
\begin{figure*}[t]
    \centering

    \begin{minipage}[t]{0.48\textwidth}
        \centering
        \includegraphics[width=\linewidth]{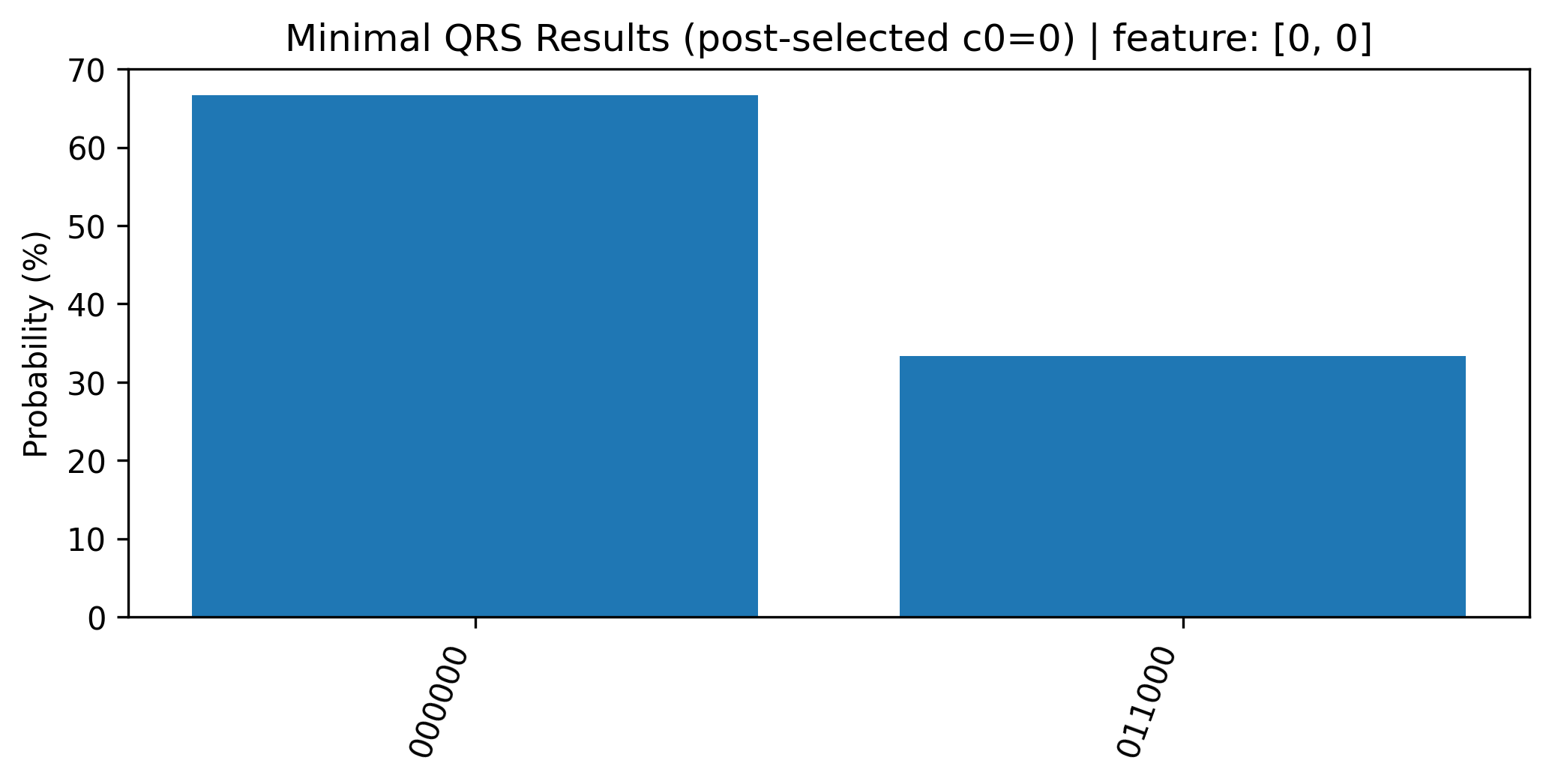}\\[-0.2em]
        \small (a)
    \end{minipage}
    \hfill
    \begin{minipage}[t]{0.48\textwidth}
        \centering
        \includegraphics[width=\linewidth]{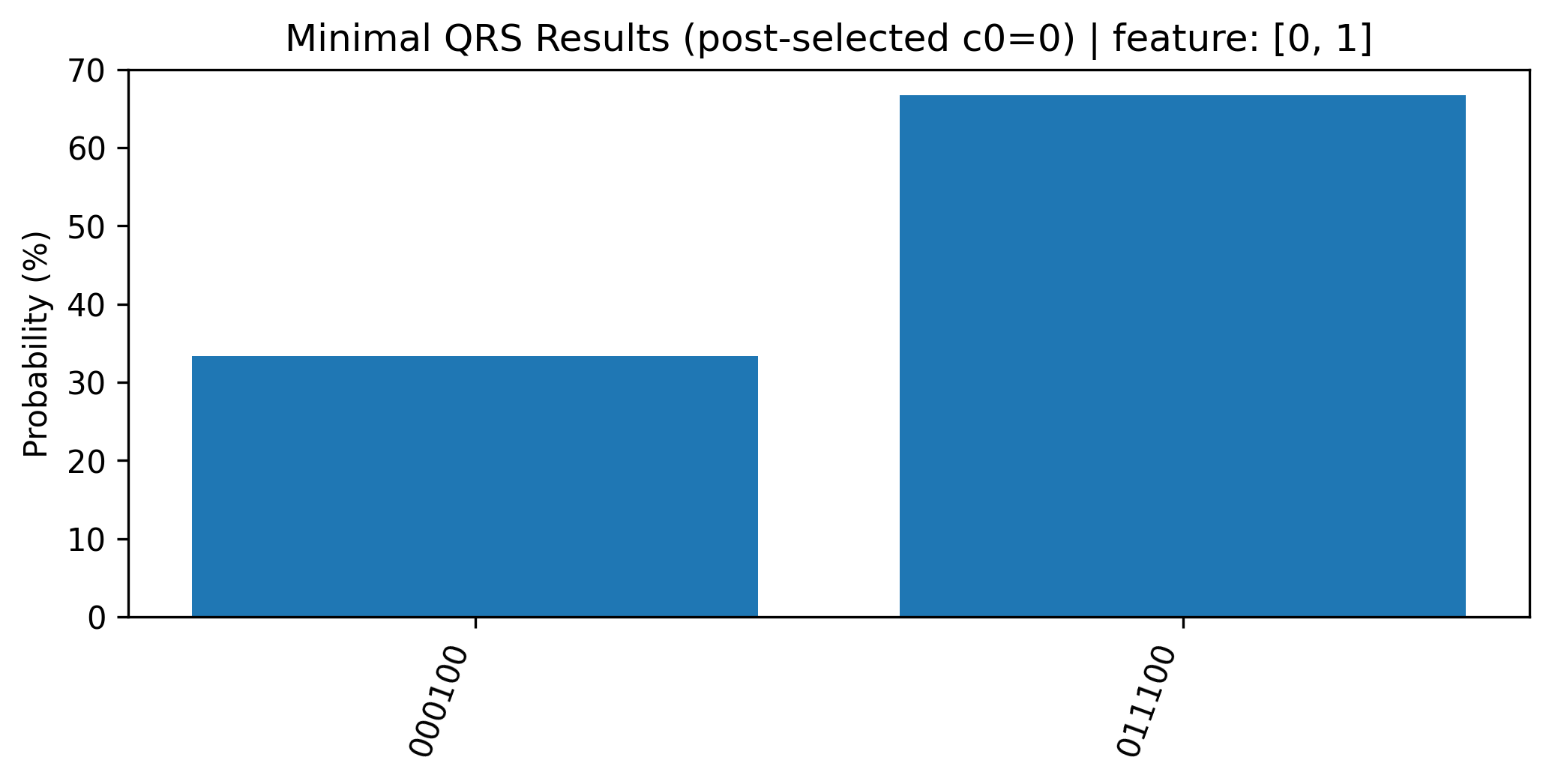}\\[-0.2em]
        \small (b)
    \end{minipage}

    \vskip0.8em

    \begin{minipage}[t]{0.48\textwidth}
        \centering
        \includegraphics[width=\linewidth]{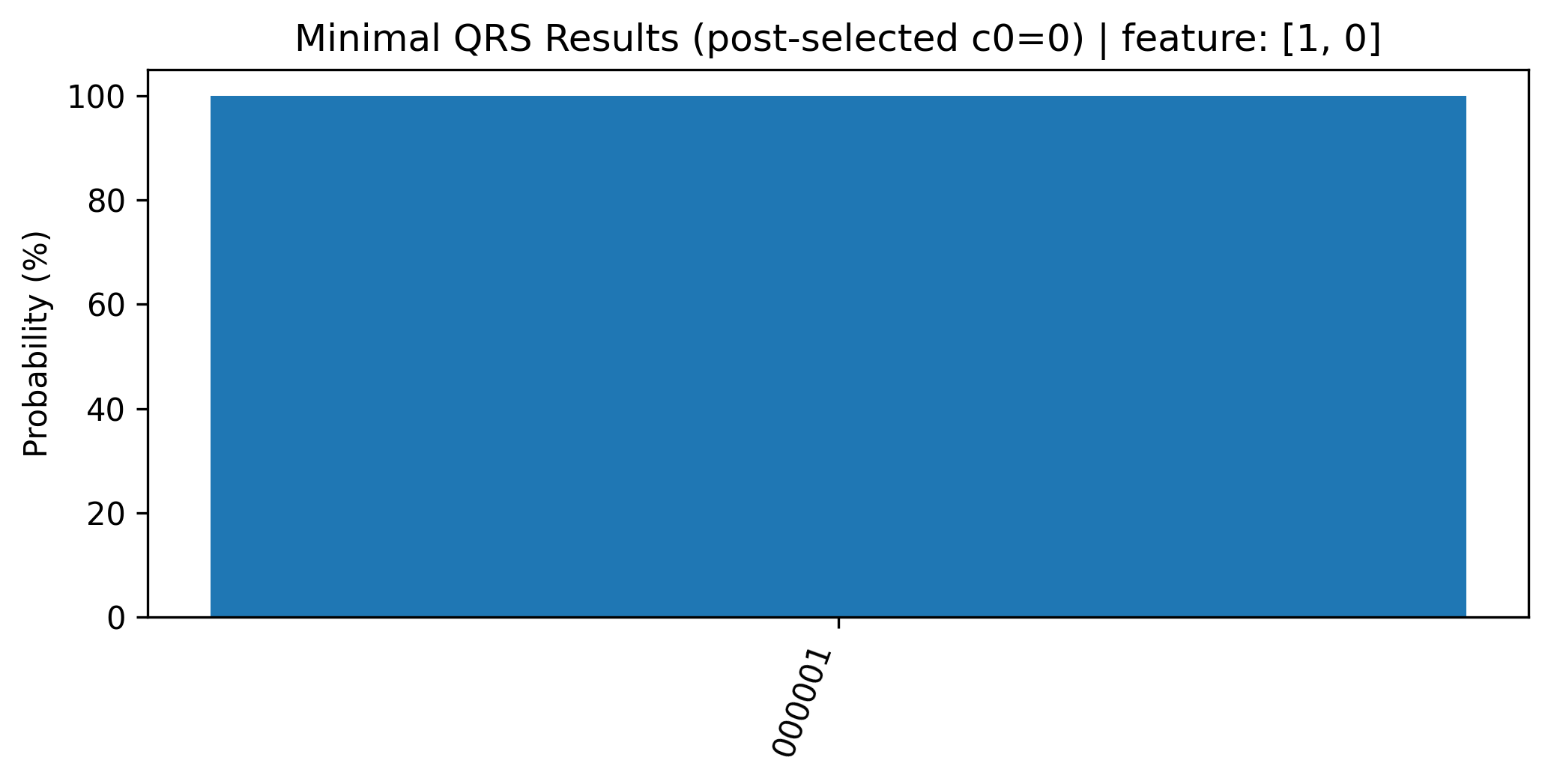}\\[-0.2em]
        \small (c)
    \end{minipage}
    \hfill
    \begin{minipage}[t]{0.48\textwidth}
        \centering
        \includegraphics[width=\linewidth]{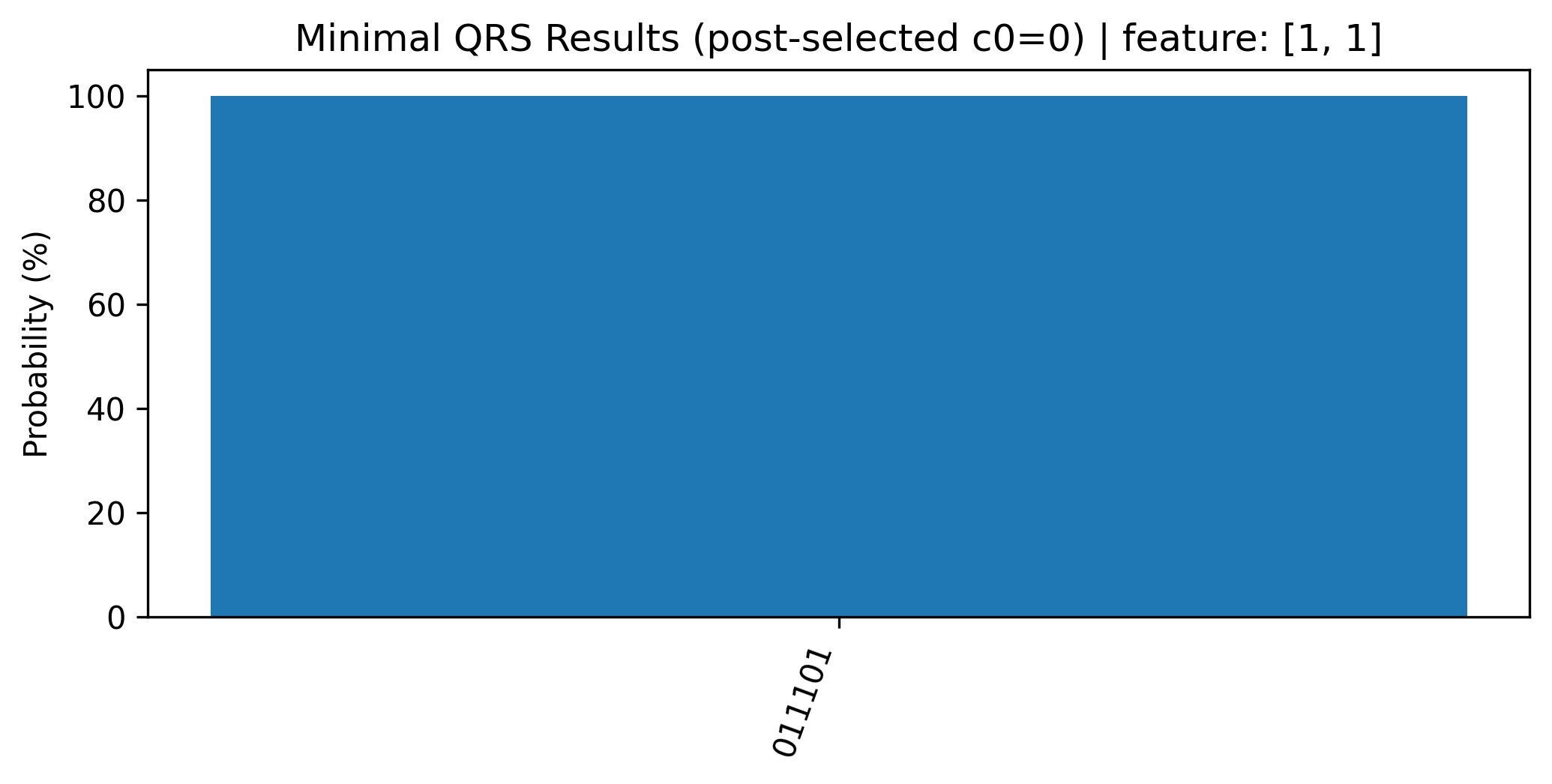}\\[-0.2em]
        \small (d)
    \end{minipage}

    \caption{
    Output counts for four feature-vector recommendations based on the same database encodings, \(00\) and \(01\), optimized with SABRE layout and routing.
    \textbf{(a)} Feature vector \(\langle 0,0\rangle\).
    \textbf{(b)} Feature vector \(\langle 0,1\rangle\).
    \textbf{(c)} Feature vector \(\langle 1,0\rangle\).
    \textbf{(d)} Feature vector \(\langle 1,1\rangle\).
    }
    \label{fig:four_plots_from_statevector_gx}
\end{figure*}

\section{Scope of classical--quantum blind complexity comparisons}
\label{app:comparison_scope}

A central question of this work is under what assumptions and at
what cost blind quantum learning can offer an advantage over
classical privacy-preserving approaches: is it possible to simultaneously achieve a quantum advantage in privacy and speed? To address this question,
we distinguish the additional cost of making an existing computation private from the
cost of implementing a complete private computation. We then consider the
inference-only B-QRS as a case study, as the blinded classical analog is well-studied~\cite{PriRec}.

\paragraph{Criterion for a dual quantum advantage.}
Let $Q(L)$ and $C(L)$ denote the costs of specified unblinded
quantum and classical algorithms, respectively, for a problem
of size $L$. For a fixed resource measure, define the privacy
overhead factors $h_Q(L)$ and $h_C(L)$ through
\begin{equation}
    B_Q(L)=h_Q(L)Q(L),
    \qquad
    B_C(L)=h_C(L)C(L),
\end{equation}
where $B_Q$ and $B_C$ are the costs of the corresponding private
implementations. These comparisons assume the same task and
output guarantee, together with compatible data-access and
privacy assumptions.

Two cases are worth distinguishing. First, the quantum
implementation has a strictly smaller asymptotic privacy
overhead when
\begin{equation}
    \lim_{L\to\infty}\frac{h_Q(L)}{h_C(L)}=0.
\end{equation}
Second, the complete private quantum implementation has
strictly better asymptotic scaling when
\begin{equation}\label{dualquantumadvantage}
    \lim_{L\to\infty}\frac{B_Q(L)}{B_C(L)}
    =
    \lim_{L\to\infty}
    \frac{h_Q(L)Q(L)}{h_C(L)C(L)}
    =0.
\end{equation}
The first case does not imply the second, since a smaller
privacy overhead may accompany a more expensive underlying
algorithm; a quantum speed quantum advantage is not guaranteed for all possible problems. Conversely, an algorithmic advantage may survive
even if the quantum privacy overhead is larger due to an underlying quantum speed advantage. Relations
expressed only as $\mathcal{O}$ bounds allow equal asymptotic
scaling and therefore do not, on their own, establish either
strict improvement. Emphatically, for a blind quantum algorithm to provably satisfy Eq. \ref{dualquantumadvantage} implies a dual quantum advantage in both privacy due to the information-theoretic security of UBQC and in speed due to the improved asymptotic scaling.  

\paragraph{Resource measures.}
To apply this criterion, the costs being compared must first
be specified. Let $n$ denote the gate count over the gate
families covered by our brickwork-depth bound, and let $\omega$
denote the logical circuit width. Section~\ref{sec:upp_bound_bw}
establishes
\begin{equation}
    D(B)=\mathcal{O}(\omega n).
\end{equation}
This bounds brickwork depth. It does not identify $\omega$ as
the actual multiplicative overhead in every resource measure.

For a full $\omega$-row brickwork graph, each brick column
expands into four measured graph columns, with one final
output column appended. The resource-state size is therefore
\begin{equation}
    S(B)=\omega[4D(B)+1]
        =\mathcal{O}(\omega^2 n+\omega).
\end{equation}
There are $4\omega D(B)$ measured qubits, and the bounded graph
degree gives an $\mathcal{O}(S(B))$ bound on the entangling
operations. Streaming preparation can reduce the number of
qubits stored simultaneously without removing these operations.

With sufficient hardware, operations across the width can be
performed in parallel once their dependencies are satisfied.
Assuming bounded-duration local operations, bounded-time
classical feed-forward per measurement layer, and sufficient
parallel capacity for resource preparation and delivery,
execution time is upper bounded by $\mathcal{O}(D(B))$.
Total quantum work instead scales as
$\mathcal{O}(\omega D(B)+\omega)$. Thus, parallel runtime and
total work give different comparisons. A runtime comparison
must also specify the parallel resources available to the
classical implementation. These statements concern logical
operations and exclude fault-tolerance costs.

\paragraph{Case 1: comparing privacy overheads.}
Classical secure computation offers several ways of evaluating
algorithms while protecting private inputs. Their overheads
depend on the protocol, security parameters, and trust
assumptions~\cite{SMPC_Theory_apps_overview}. Consequently,
there is no single classical overhead factor against which
the width-dependent brickwork bound can be compared.

Fully homomorphic encryption provides a relevant example,
since it allows a server to compute on encrypted client
data~\cite{FHE_overhead}. A comparison with a specified FHE
implementation should include its evaluation, encryption,
and decryption costs, together with the security parameter
and numerical precision. Standard FHE does not automatically
hide the evaluated function or circuit structure, so the
permitted leakage must also be specified.

UBQC protects the delegated inputs, outputs, and computation
under its stated assumptions, while allowing disclosure of
the chosen resource dimensions~\cite{ubqc}. Its security
setting is therefore not interchangeable with that of every
classical private-learning protocol. Similarly, concealing
query-dependent memory accesses does not require forbidding
them: oblivious-access methods may be relevant when the task
admits suitable indexing~\cite{ORAM_Barenghi2019}. Blindness
alone does not establish that a classical algorithm must
scan the complete database.

These differences make comparisons of fully specified
implementations more informative than comparisons of
overhead factors in isolation.

\paragraph{Case 2: comparing complete private computations.}
An interesting case is the B-QRS without database encoding
considered in Appendix~\ref{sec:qrs_no_db}. Under the assumptions
and gate accounting given there, constant feature width gives
$\omega=\mathcal{O}(\log L)$ and
\begin{equation}
    D_{\mathrm{inf}}(L)=\mathcal{O}(\sqrt{L}\log L).
\end{equation}
For the full brickwork graph, this implies
\begin{equation}
    S_{\mathrm{inf}}(L)
      =\mathcal{O}\!\left(\sqrt{L}(\log L)^2\right).
\end{equation}
The corresponding logical quantum work is therefore
asymptotically sublinear in $L$, subject to the stated
inference accounting.

A classical baseline that explicitly scores all $L$ candidates
has linear dependence on $L$ for fixed per-candidate cost.
For example, exhaustive dense factorization-machine scoring
takes $\Theta(LKd)$ operations for rank $K$ and feature
dimension $d$. PriRec provides a practical privacy-preserving
example using factorization-machine scoring and a
candidate-matching stage~\cite{PriRec}. However, this is not
automatically the same task as the distance-based recommendation
routine underlying B-QRS~\cite{Sawerwain_Wróblewski_2019}.
The scoring rule, candidate selection, output size, accuracy,
and privacy guarantees must be matched before comparing them.

If such a comparison establishes total quantum inference work
$B_Q(L)=\mathcal{O}(\sqrt{L}(\log L)^2)$ and classical work
$B_C(L)=\Omega(L)$, then
\begin{equation}
    0\leq\frac{B_Q(L)}{B_C(L)}
    \leq c\,\frac{(\log L)^2}{\sqrt{L}}
    \longrightarrow 0
\end{equation}
for some constant $c>0$ and sufficiently large $L$.
This would establish an asymptotic inference-work advantage
over the specified classical baseline. Establishing an advantage
against a class of classical algorithms requires the lower
bound to hold throughout that class; the cost of an exhaustive
implementation alone does not provide such a lower bound.

\paragraph{Database preparation and the scope of inference.}
The distinction between inference and preparation is essential
to this comparison. UBQC with quantum inputs permits delegation
of computations on supplied quantum states~\cite{ubqc}.
The inference complexity derived in
Appendix~\ref{sec:qrs_no_db} includes the amplification stage.
Amplification is therefore already accounted for in the stated
bound and is not an additional overhead introduced by the
present comparison. Initial database preparation remains
excluded from the inference-only analysis.

The inference-only analysis assumes a supplied encoded database
state. Initial preparation is excluded, while all database-access
operations required during inference must be included. No
assumption is made that the supplied state can be reused for
subsequent recommendations. A comparison with classical inference
must likewise specify which input preparation and preprocessing
costs are excluded.
In addition, delegating a supplied database state does not
by itself establish database privacy against the client;
no-cloning alone is insufficient to prove mutual privacy.

The comparison therefore depends on where the cost of
preparing and accessing the data resides. Our results provide
bounds on specified blind quantum implementations and identify
conditions under which favorable inference scaling can survive
transpilation. Whether these conditions yield an advantage
over a classical private implementation is then determined
by the matched costs and the criterion above.

\section{Analysis of Blind Quantum Machine Learning Algorithms}\label{app:qml_analysis}

\subsection{B-HHL}\label{sec:res_hhl}

\newcommand{\bigO}{\mathcal{O}}
\newcommand{\polylog}{\operatorname{polylog}}

Since HHL encodes the solution of a linear system of equations into a quantum register, it can use the extension of the BFK protocol supporting quantum outputs \cite{ubqc}, returning the solution state rather than a measurement sample. To benchmark blind HHL, we considered several matrix families, given the parameters entering its computational complexity:
\[
\mathrm{poly}\!\bigl(\kappa,\, s,\, 1/\varepsilon\bigr)
\cdot \mathrm{polylog}(N).
\]
Here, \(N\) is the dimension of the \(N\times N\) system matrix, \(\kappa\) its condition number, \(s\) the maximum number of nonzero entries per row, and \(\varepsilon\) the target error in the prepared quantum solution state. Our interest here is how the blind HHL brickwork depth scales with matrix dimension \(N\) across different matrix families, rather than an in-depth performance analysis. To that end, we used tridiagonal Toeplitz matrices (inspired by \cite{HHL_Toeplitz}), normalized Laplacians (shifted to control \(\kappa\)), and sparse Hermitian matrices with controlled spectral properties.

\paragraph{Tridiagonal Toeplitz matrix}
Let $A_N(a,b)\in\mathbb{R}^{N\times N}$ be the tridiagonal Toeplitz matrix
\begin{equation}\label{eq:toeplitz_symbolic}
A_N(a,b)=
\begin{pmatrix}
a & b \\
b & a & b \\
  & \ddots & \ddots & \ddots \\
  &        & b & a & b \\
  &        &   & b & a
\end{pmatrix}, \quad s=3.
\end{equation}
In the benchmark case we set $a=1$ and $b=-\tfrac16$, i.e.\ $A_N:=A_N\!\left(1,-\tfrac16\right)$.

% \begin{equation}\label{eq:toeplitz_numeric}
% a=1,\qquad b=-\tfrac{1}{6},\qquad
% A_N \;=\;
% \begin{pmatrix}
%  1 & -\tfrac{1}{6} & 0 & 0 & \cdots & 0 \\
%  -\tfrac{1}{6} & 1 & -\tfrac{1}{6} & 0 & \ddots & \vdots \\
%  0 & -\tfrac{1}{6} & 1 & -\tfrac{1}{6} & \ddots & 0 \\
%  0 & 0 & -\tfrac{1}{6} & 1 & \ddots & 0 \\
%  \vdots & \ddots & \ddots & \ddots & \ddots & -\tfrac{1}{6} \\
%  0 & \cdots & 0 & 0 & -\tfrac{1}{6} & 1
% \end{pmatrix}.
% \end{equation}
From \cite{toeplitz_ev}, the eigenvalues of a tridiagonal Toeplitz matrix where the diagonal is $a$ and the off diagonal is $b$, \(A_N(a,b) \; = \; aI + b\tilde{A}\) are
\begin{equation}\label{eq:toeplitz_eigs}
\lambda_j \;=\; a + 2b \cos\!\Bigl(\tfrac{j\pi}{N+1}\Bigr),\qquad j=1,\dots,N.
\end{equation}
With $A_N\!\left(1,-\tfrac16\right)$, one has \(\lambda_{\max}\to \tfrac{4}{3}\) and \(\lambda_{\min}\to \tfrac{2}{3}\) as \(N\to\infty\), so
\begin{equation}
\kappa(A_N)\;=\;\frac{\lambda_{\max}}{\lambda_{\min}} \; = \; \frac{\frac{4}{3}}{\frac{2}{3}} \; = \; 2,
\end{equation}
i.e., the condition number remains constant across system sizes.

Let \(n=\lceil \log_2 N\rceil\). The circuit width decomposes as
% \begin{equation}
% \omega \;=\; \underbrace{n}_{\text{system}} \;+\; \underbrace{\Theta\!\bigl(\log(\kappa/\varepsilon)\bigr)}_{\text{phase register}} \;+\; \underbrace{\mathcal{O}(1)}_{\text{ancillae}},
% \label{eq:toeplitz_widthplus}
% \end{equation}
% which simplifies to \(\omega=\Theta(\log N)\) as \(\kappa=\Theta(1)\) and \(\varepsilon\) is fixed. 

\begin{equation}
  \omega \;=\; \underbrace{n}_{\text{system}}
  \;+\; \underbrace{\Theta\!\big(\log(\kappa/\varepsilon)\big)}_{\text{phase register}}
  \;+\; \underbrace{\mathcal{O}(n)}_{\text{ancillae}}
  \;+\; \underbrace{1}_{\text{flag}},
  \label{eq:hhl_width}
\end{equation}
where the ancilla term is implementation dependent: Qiskit's
\texttt{TridiagonalToeplitz} Hamiltonian simulation uses $n-1$ ancillae, while an
explicitly given matrix uses none. In either case $\omega=\Theta(\log N)$ as
$\kappa=\Theta(1)$ and $\varepsilon$ is fixed. For Fig.~\ref{fig:res_qft}b we
therefore keep the phase register at five qubits for every $N$, in line with
Eq.~\eqref{eq:hhl_width}, using Qiskit's HHL with exact reciprocal,
$\varepsilon = 10^{-2}$ and $|b\rangle = H^{\otimes n}|0\rangle$; the resulting
width is $\omega = 2n+5$. For $N \le 64$, where the circuit can be simulated, the
post-selected output has fidelity at least $0.998$ with the normalized solution
$A_N^{-1}|b\rangle$. The gate count $G$ grows as $G \approx 1.4\times 10^{3}\, n^2$
up to $N = 2^{12}$, so the circuit is polylogarithmic in $N$, and
Theorem~\ref{thm:bw_general_alignment} gives
$D(\mathcal{B}) = \mathcal{O}(\omega G) = \mathcal{O}(\log^3 N)$, consistent with
the measured depths.

\paragraph{Normalized Laplacian matrix}
A closely related family to the tridiagonal Toeplitz matrices, but not restricted to 1D banded couplings, is the normalized Laplacian
\(\mathcal{L} = I - D^{-1/2} A D^{-1/2}\) of a graph (with adjacency matrix \(A\) and degree matrix \(D\)). It is symmetric positive semidefinite and its eigenvalues lie in the interval \([0,2]\) for any simple connected graph; in particular, \(0=\lambda_1 \leq \lambda_2 \leq \dotsi\leq \lambda_n \leq 2\) on connected graphs \cite{ev_normalised_laplacian, Chung:1997_spectralgraphtheory}. To control the condition number for HHL, we add a positive multiple of the identity to the normalized Laplacian:
\begin{equation}
\mathcal{L}_\mu = \mathcal{L} + \mu I,\qquad \mu>0.
\end{equation}
This shifts every eigenvalue by \(\mu\), making \(\mathcal{L}_\mu\) positive definite with eigenvalues in \([\mu,2+\mu]\). Its condition number therefore satisfies
\begin{equation}
\kappa(\mathcal{L}_\mu)
= \frac{\lambda_{\max}(\mathcal{L})+\mu}{\mu}
\leq \frac{2+\mu}{\mu}.
\end{equation}
For example, choosing \(\mu=1\) gives \(\kappa\leq3\). Keeping \(\mu\) fixed thus bounds the condition number independently of the matrix dimension.
% To control the condition number for HHL, we shift by a positive multiple of the identity:
% \begin{equation}
% \mathcal{L}_\mu \;=\; \mathcal{L} + \mu I,\qquad \mu>0,
% \end{equation}
% so that \(\sigma(\mathcal{L}_\mu)\subseteq[\mu,\,2+\mu]\) and \(\kappa(\mathcal{L}_\mu)=\frac{2+\mu}{\mu}\) (e.g.\ \(\mu=1\Rightarrow \kappa=3\)), preventing \(\kappa\) from interfering with the algorithmic complexity analysis.

The sparsity of \(\mathcal{L}\) is governed by the graph degree. For graphs of maximum degree \(\Delta\), each row has at most \(\Delta+1\) nonzero values (neighbors plus the diagonal), so \(\mathcal{L}\) is \(s\)-sparse with \(s\le \Delta+1=\mathcal{O}(1)\). In the special case \(\Delta=2\) (paths/cycles), each row has at most three nonzero values, matching the Toeplitz tri\-diagonal sparsity; for general bounded-degree graphs the nonzero values are not confined to a tri\-diagonal band but are dispersed according to the neighbor indices, serving as an excellent comparison to the transpiler's behavior given a tridiagonal Toeplitz matrix.

Finally, with \(n=\lceil \log_2 N\rceil\) system qubits, a constant number of ancillae, and a constant-sized phase register when \(\kappa\) is bounded and \(\varepsilon\) is fixed, the circuit width scales as
\begin{equation}
w \;=\; n + \mathcal{O}(1) \;=\; \Theta(\log N).
\end{equation}

\paragraph{Sparse Hermitian matrix}
Let \(A \in \mathbb{R}^{N\times N}\) be a sparse matrix with density \(0.05\). We form
\begin{equation}
\tilde{A} \;=\; \frac{A + A^{\top}}{2},
\end{equation}
which satisfies \(\tilde{A}^{\top}=\tilde{A}\) and is therefore symmetric; for real-valued matrices, symmetry implies Hermitianity (\(A^\dagger=A^{\top}\)). For numerical stability, we add a diagonal shift while preserving symmetric and Hermitian structure:
\begin{equation}
\tilde{A}_{\mu} \;=\; \tilde{A} + \mu I,
\end{equation}
with \(\mu=1\) (e.g., chosen so that \(\lambda_{\min}(\tilde{A}_\mu)>0\)). Two independent samples of this low-density Hermitian matrix are transpiled without circuit optimization. %In practice, one may also rescale so that \(\|\tilde{A}_\mu\|_2 \le 1\) before applying HHL. 
The results of all matrices are plotted in Fig. \ref{fig:res_hhl}. Fig.~\ref{fig:res_hhl} uses Qiskit's default phase register of $n+2$ qubits, which grows with $N$. For the Toeplitz matrix, the controlled power $U^{2^j}$ then uses $\lceil 2^{j/2} k \rceil$ Trotter steps, with $k$ the base number of steps, and the exact reciprocal acts on all $2^{n+1}$ register values, so the circuit carries a $\sqrt{N}$ factor and, for large $N$, a linear term; this is why panel (a) grows faster than Fig.~\ref{fig:res_qft}b. For the Laplacian and sparse Hermitian matrices, the matrix is compiled as a dense unitary and repeated $2^j$ times, so the circuit grows roughly as $N^3$. These costs come from the circuit, not from blinding: in every family the brickwork depth follows the circuit and stays within the bound of Theorem~\ref{thm:bw_general_alignment}.

\begin{figure}[ht!] 
\centering
\includegraphics[width=\linewidth]{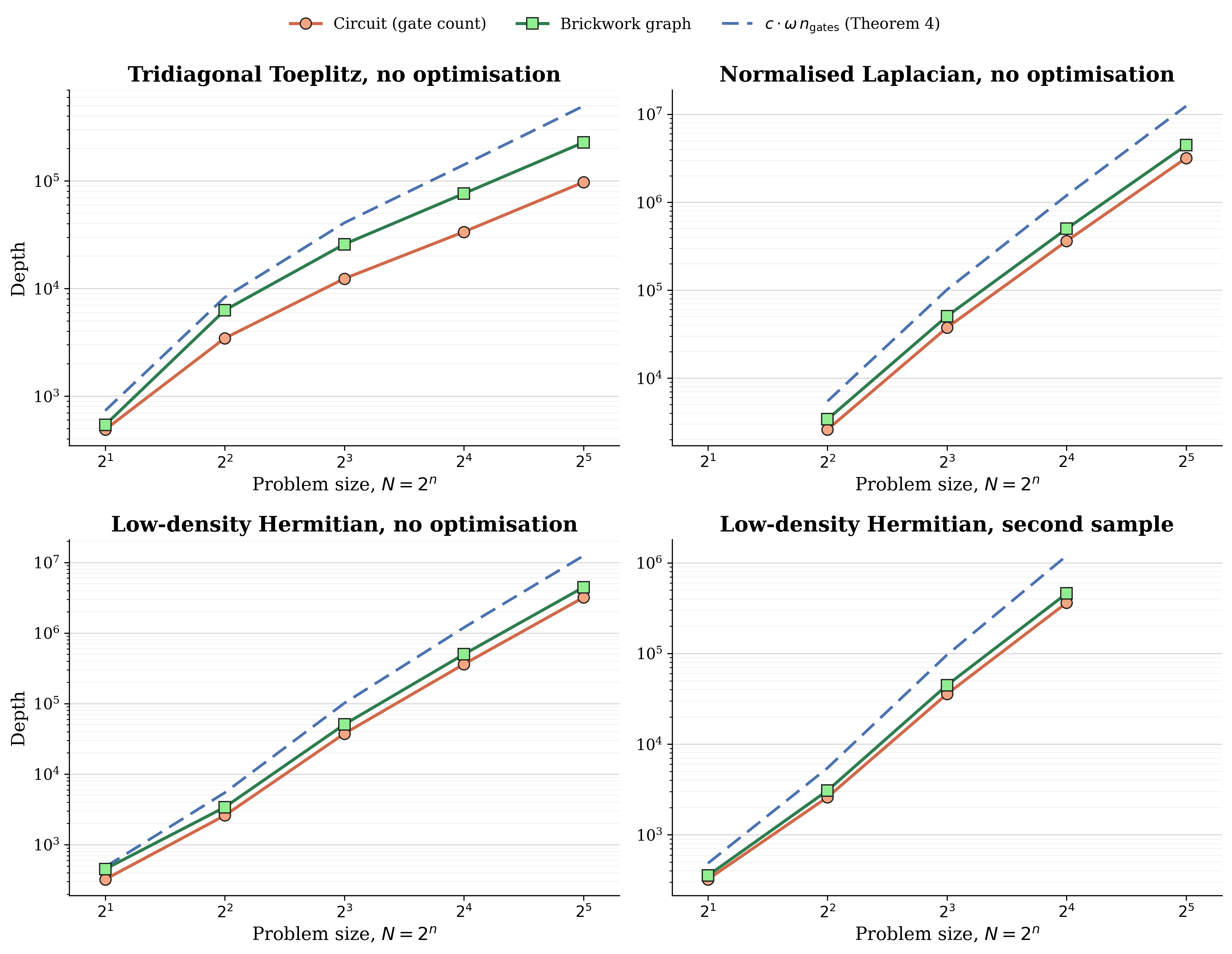}
% \caption{Resulting brickwork graphs of HHL given the circuit (a) Tridiagonal Toeplitz, (b) Normalized Laplacian, (c) sparse Hermitian matrix ($d=0.05$), and (d) sparse Hermitian matrix ($d=0.05$) with different optimization parameters.}
\caption{Brickwork depth $D(\mathcal{B})$ (green), circuit gate count $G$ (orange)
and the Theorem~\ref{thm:bw_general_alignment} reference $c\,\omega G$ (dashed) for HHL with Qiskit's
default phase register of $n+2$ qubits, as a function of the matrix dimension $N=2^n$:
(a) tridiagonal Toeplitz $A_N(1,-\tfrac16)$, (b) normalized Laplacian
$\mathcal{L}+I$ of a random $2$-regular graph, (c) and (d) two independently drawn
sparse Hermitian matrices (density $0.05$) $\tilde A + I$. All circuits use the exact
reciprocal and $\varepsilon=10^{-2}$; no circuit optimization is applied. Because
Qiskit compiles an explicit matrix as a dense unitary, the gate count is essentially
independent of the matrix entries: (b), (c) and (d) coincide except at $N=8$, where
the sample in (d) happens to synthesize to fewer gates. The brickwork depth stays
below the Theorem~\ref{thm:bw_general_alignment} reference in every panel (constant $c=0.3$ chosen
empirically).}
\label{fig:res_hhl}
\end{figure}

\subsection{B-QRS} \label{app:bqrs_with_db}

All gates used in the quantum recommendation system from \cite{Sawerwain_Wróblewski_2019} fall into the four gate families assumed in the derivation of Theorem \ref{thm:bw_general_alignment}. Hence, by Theorem~\ref{thm:bw_general_alignment}, the brickwork depth obeys
\begin{equation}
D(\mathcal{B}) \;=\; \mathcal{O}(\omega n),
\end{equation}
where \(\omega\) is the original circuit width in logical qubits and \(n\) its gate complexity before decomposition, with each multi-controlled gate counted once. We reserve one additional clean ancilla and write
\(W=\omega+1\) for the implemented width. Throughout
this section, \(a_{\mathrm{MCU}}\) denotes the constant
defined in Theorem~\ref{thm:bw_general_alignment}.
An MCU therefore contributes at most
\(a_{\mathrm{MCU}}W\) brick columns before alignment.
The global alignment factor is included explicitly
in the bounds below.

Under a binary encoding of database identifiers, a database with \(L=2^q\) entries requires \(q\) identifier qubits. With constant feature width \(l\) and a fixed number of auxiliary qubits, the circuit width therefore satisfies
\begin{equation}
\omega \;=\; \mathcal{O}(\log L).
\end{equation}
In the absence of Grover amplification, the gate complexity is dominated by database encoding. Following Section~3.3 of \cite{Sawerwain_Wróblewski_2019}, the initialization cost satisfies
\begin{equation}
O_1(l,N) \;=\; 2l \;+\; \frac{N(N-1)}{2},
\end{equation}
with $N=2^q$ the number of database entries\footnote{Our notation identifies $N$ in \cite{Sawerwain_Wróblewski_2019} with $L$ used here. The $2l$ term accounts for feature/user preparation; the quadratic term is the controlled permutation implementing data loading \cite{Li_DECOMPOSITION_OF_UNITARY_MATRICES_AND_QUANTUM_GATES}.}.
With \(l\) constant and \(N=L\), this gives the
upper bound \(n=\mathcal{O}(L^2)\), and therefore
\begin{equation}
D(\mathcal{B}) \;=\; \mathcal{O}\!\big(L^2 \log L\big).
\end{equation}

\subsubsection{Bounding B-QRS With Grover Amplification}
Even though all gates used in the recommendation system are covered by the assumptions leading to Theorem \ref{thm:bw_general_alignment}, the part of the algorithm responsible for amplitude amplification introduces additional overhead. Let \(r\) denote the integer number of amplification
rounds. Under the amplification assumptions used here,
\begin{equation}
    r=\mathcal{O}\!\left(\sqrt{\frac{L}{g}}\right),
\end{equation}
where \(g\geq1\) denotes the number of ``good" elements subject to amplification.

Retaining the circuit accounting of at most \(6\omega\)
native single-qubit rotations and two MCUs per
amplification round, the pre-alignment cost of \(r\)
rounds is bounded by
\[
    r\bigl(6\omega+2a_{\mathrm{MCU}}W\bigr).
\]
Applying the alignment factor \(2\), the amplification
stage admits a brickwork implementation satisfying
\begin{equation}
    D_{\mathrm{amp}}
    \leq
    2r\bigl(6\omega+2a_{\mathrm{MCU}}W\bigr)
    =
    \bigl(12\omega+4a_{\mathrm{MCU}}W\bigr)r.
    % Retain the existing label of F38 here.
\end{equation}

Since \(W=\omega+1=\mathcal{O}(\log L)\) for constant
feature width, combining this contribution with the
database-encoding and non-amplification costs gives
\begin{equation}
    D(\mathcal{B})
    =
    \mathcal{O}\!\left(
       L^2\log L
       +\sqrt{\frac{L}{g}}\log L
    \right).
    % Retain the existing label of F39 here.
\end{equation}
Since \(g\geq1\), this yields
\begin{equation}
    D(\mathcal{B})
    =
    \mathcal{O}\!\left(
       L^2\log L+\sqrt{L}\log L
    \right),
    % Retain the existing label of F40 here.
\end{equation}
and hence
\begin{equation}
    D(\mathcal{B})=\mathcal{O}(L^2\log L).
    % Retain the existing label of F41 here.
\end{equation}

Experimental results supporting this analysis are presented in Fig.~\ref{fig:res_qrs_full_plots}.

%PRE MCU
% Specifically, the overhead induced by the amplification step is determined by the optimal number of amplification rounds 
% \[
% \frac{\pi}{4}\sqrt{\frac{L}{g}},
% \]
% where $g$ denotes the number of ``good'' elements subject to amplification. Consequently, the total depth includes this contribution in terms of gates used to implement Grover's algorithm.
% \begin{equation}\label{eq:grover_explicit}
% 6\omega + 2 (32(6\omega+2)) \cdot \frac{\pi}{4}\sqrt{\frac{L}{g}},
% \end{equation}
% leading to the overall asymptotic complexity
% \begin{equation}
% \mathcal{O}\!\left(L^2 \log L + \sqrt{\tfrac{L}{g}} \log L\right).
% \end{equation}

% Since $g \geq 1$, this expression can be further simplified to
% \begin{equation}\label{eq:big_qrs_L_eq}
% \mathcal{O}\!\left(L^2 \log L + \sqrt{L}\,\log L\right).
% \end{equation}

% Ultimately, the circuit depth is asymptotically bounded by
% \begin{equation}\label{eq:qrs_with_db}
%    D(\mathcal{B}) = \mathcal{O}(L^2 \log L).
% \end{equation}

% Experimental results supporting this analysis are presented in Fig.~\ref{fig:res_qrs_full_plots}.

\begin{figure}[ht!]
\centering
\includegraphics[width=\linewidth]{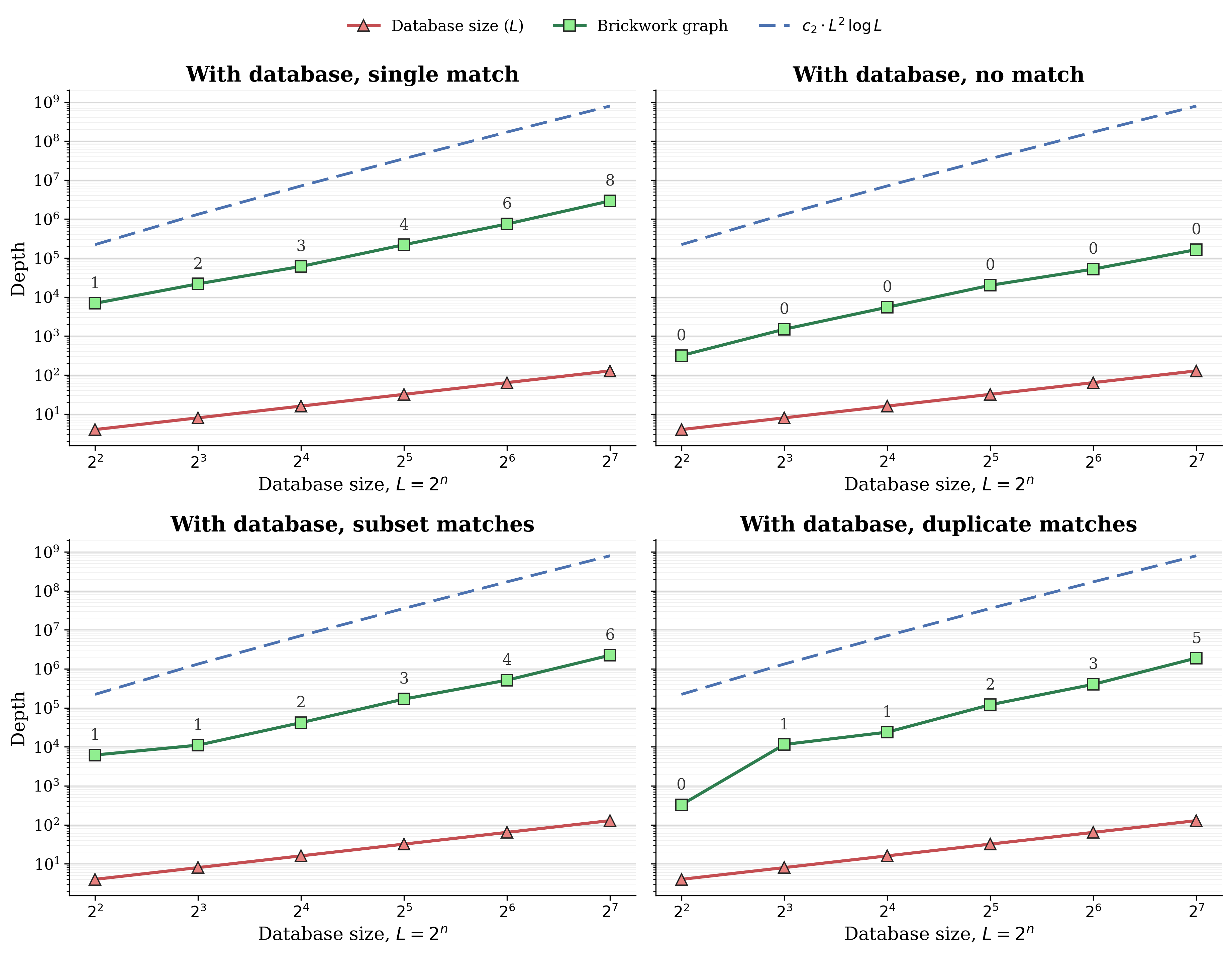}
\caption{Scaling behavior of brickwork graph depths obtained after transpiling the quantum recommendation system circuit, including the database encoding stage. Each point represents the resulting brickwork graph depth for a given input size, illustrating how transpilation cost grows with system scale, corroborating $\mathcal{O}(L^2 \log L)$.}
\label{fig:res_qrs_full_plots}
\end{figure}

\subsubsection{Bounding B-QRS Complexity Without Database Encoding}\label{sec:qrs_no_db}

A not unreasonable scenario is that an organization provides a dataset of size $L$, with feature vectors $f_f$, already encoded in a Hilbert space spanned by $\log L + l$ qubits. In this setting, the user can get a recommendation directly by applying QKNN and Grover amplification to the prepared database qubits. This division of tasks is particularly relevant given the dominating cost of database encoding and the support for quantum inputs in the generalized BFK protocol\cite{ubqc} and is therefore further explored in this section.

It is important to note that the number of gates involved in QKNN depends as much on the \emph{width} of the circuit as its depth. To see this, consider the QKNN construction presented in~\cite{Sawerwain_Wróblewski_2019}. In this design, each additional qubit contributes one additional $\mathrm{CX}$ gate, i.e., the gate count scales linearly with the circuit width. Moreover, the number of subcircuits used remains constant across problem sizes. These subcircuits perform three operations: computing the Hamming distance, summing the distances, and decoding the distances\footnote{This extension is not explicitly described in~\cite{Sawerwain_Wróblewski_2019}, but is required in our construction to ensure that amplitude amplification is applied directly to the database entries rather than on phase-encoded distances.}. 

%PRE MCU$
% The resulting gate cost for QKNN therefore scales as
% \begin{equation}
% 3\omega \cdot 32(6\omega+2) + \omega,
% \end{equation}
% where $\omega$ denotes the circuit's width. As established in Eq.~\eqref{eq:grover_explicit}, the cost of transpiling Grover's algorithm amounts to
% \begin{equation}
% 6\omega + 2 (32(6\omega+2)) \cdot \frac{\pi}{4}\sqrt{\frac{L}{g}},
% % \left(2 \cdot 32(6\omega+2) + 6\omega\right) \cdot \frac{\pi}{4}\sqrt{\frac{L}{g}},
% \end{equation}
% where $\sqrt{\tfrac{L}{g}}$ corresponds to the optimal number of Grover iterations. Combining the two contributions and simplifying leads to
% \begin{equation}
% 576\omega^2 + 193\omega + \left(390\omega + 128\right)\frac{\pi}{4}\sqrt{\tfrac{L}{g}}.
% \end{equation}
% Following the same simplification steps that lead to Eq.~\eqref{eq:big_qrs_L_eq}, this expression reduces asymptotically to
% \begin{equation}
% \mathcal{O}\!\left(\omega^2 + \omega\sqrt{L}\right).
% \end{equation}

% Assuming a constant feature size, the circuit width $m$ grows logarithmically with $L$, i.e., $m \propto \log L$, as each doubling of $L$ increases $m$ by one. Substituting this relation yields
% \begin{equation}
% \mathcal{O}\!\left(\log^2 L + \sqrt{L}\,\log L\right).
% \end{equation}

Retaining the gate accounting represented by the three
QKNN subcircuits, we conservatively assign at most
\(3\omega\) operations the MCU cost bound and count
a further \(\omega\) native rotations. This also bounds
CX operations, since
\(a_{\mathrm{MCU}}W\geq2(6W+1)\).
Including the alignment factor, the QKNN stage admits
a brickwork implementation satisfying
\begin{equation}
    D_{\mathrm{QKNN}}
    \leq
    2\bigl(3\omega a_{\mathrm{MCU}}W+\omega\bigr).
    % Retain the existing label of F42 here.
\end{equation}

For \(r\) amplification rounds, the corresponding bound is
\begin{equation}
    D_{\mathrm{amp}}
    \leq
    2r\bigl(6\omega+2a_{\mathrm{MCU}}W\bigr),
    % Retain the existing label of F43 here.
\end{equation}
where \(r=\mathcal{O}(\sqrt{L/g})\).

The total pre-alignment cost is the sum of these two
stages before their alignment factors are applied.
Applying the factor \(2\) once to that sum gives
\begin{equation}
\begin{aligned}
    D(\mathcal{B})
    &\leq
    2\Bigl(
       3\omega a_{\mathrm{MCU}}W+\omega
       +r(6\omega+2a_{\mathrm{MCU}}W)
    \Bigr)
    \\
    &=
    6a_{\mathrm{MCU}}\omega^2
    +(6a_{\mathrm{MCU}}+2)\omega
    \\
    &\quad+
    \Bigl((12+4a_{\mathrm{MCU}})\omega
          +4a_{\mathrm{MCU}}\Bigr)r,
\end{aligned}
    % Retain the existing label of F44 here.
\end{equation}
where \(W=\omega+1\).

Since \(a_{\mathrm{MCU}}\) is independent of the
problem size and \(g\geq1\), this gives
\begin{equation}
    D(\mathcal{B})
    =
    \mathcal{O}\!\left(
       \omega^2+\omega\sqrt{\frac{L}{g}}
    \right)
    =
    \mathcal{O}\!\left(\omega^2+\omega\sqrt{L}\right).
    % Retain the existing label of F45 here.
\end{equation}

For constant feature width,
\(\omega=\mathcal{O}(\log L)\), and therefore
\begin{equation}
    D(\mathcal{B})
    =
    \mathcal{O}\!\left(
       (\log L)^2+\sqrt{L}\log L
    \right).
    % Retain the existing label of F46 here.
\end{equation}
Consequently, the inference-only depth satisfies
\begin{equation}\label{eq:no_db}
    D(\mathcal{B})=\mathcal{O}(\sqrt{L}\log L).
    % Retain the existing label of F47 here.
\end{equation}

% This results in the overall upper bound
% \begin{equation}\label{eq:no_db}
%    D(\mathcal{B}) = \mathcal{O}(\sqrt{L}\,\log L).
% \end{equation}

To determine whether the complexity in Eq.~\eqref{eq:no_db} is sublinear, we analyze the asymptotic behavior of the ratio between Eq.~\eqref{eq:no_db} and the linear term $L$ as $L \to \infty$:

% \begin{equation}\label{eq:sublinear_L_lhopital}
% \lim_{L\to\infty} \frac{\sqrt{L}\log L}{L}
% = \lim_{L\to\infty} \frac{\log L}{\sqrt{L}}
% = 0 .
% \end{equation}

\begin{multline}\label{eq:sublinear_L_lhopital}
\lim_{L\to\infty} \frac{\sqrt{L}\log L}{L}
= \lim_{L\to\infty} \frac{\log L}{\sqrt{L}}
\overset{\text{L'H\^{o}pital}}{=} %\\
\lim_{L\to\infty} \frac{\tfrac{1}{L}}{\tfrac{1}{2\sqrt{L}}}
= \lim_{L\to\infty} \frac{2}{\sqrt{L}} = 0 .
\end{multline}

% \begin{equation}\label{eq:sublinear_L_lhopital}
% \lim_{L\to\infty} \frac{\sqrt{L}\log L}{L} 
%     = \lim_{L\to\infty} \frac{\log L}{\sqrt{L}} 
%     \;\overset{\text{L'H\^{o}pital}}{=}\; 
%     \lim_{L\to\infty} \frac{\tfrac{1}{L}}{\tfrac{1}{2\sqrt{L}}} 
%     = \lim_{L\to\infty} \frac{2}{\sqrt{L}} 
%     = 0.
% \end{equation}

This establishes that the depth of the brickwork graph for this algorithm is asymptotically sublinear in $L$. In particular, the circuit depth grows strictly slower than the cost of iteratively searching through the database by classical comparison of entries. Hence, this sub-algorithm constitutes a meaningful addition to the flexibility and applicability of the quantum recommendation system introduced in~\cite{Sawerwain_Wróblewski_2019}. Results of our numerical experiments are shown in Fig. \ref{fig:res_qrs_no_db_plots}.

\begin{figure}[ht!]
\centering
\includegraphics[width=\linewidth]{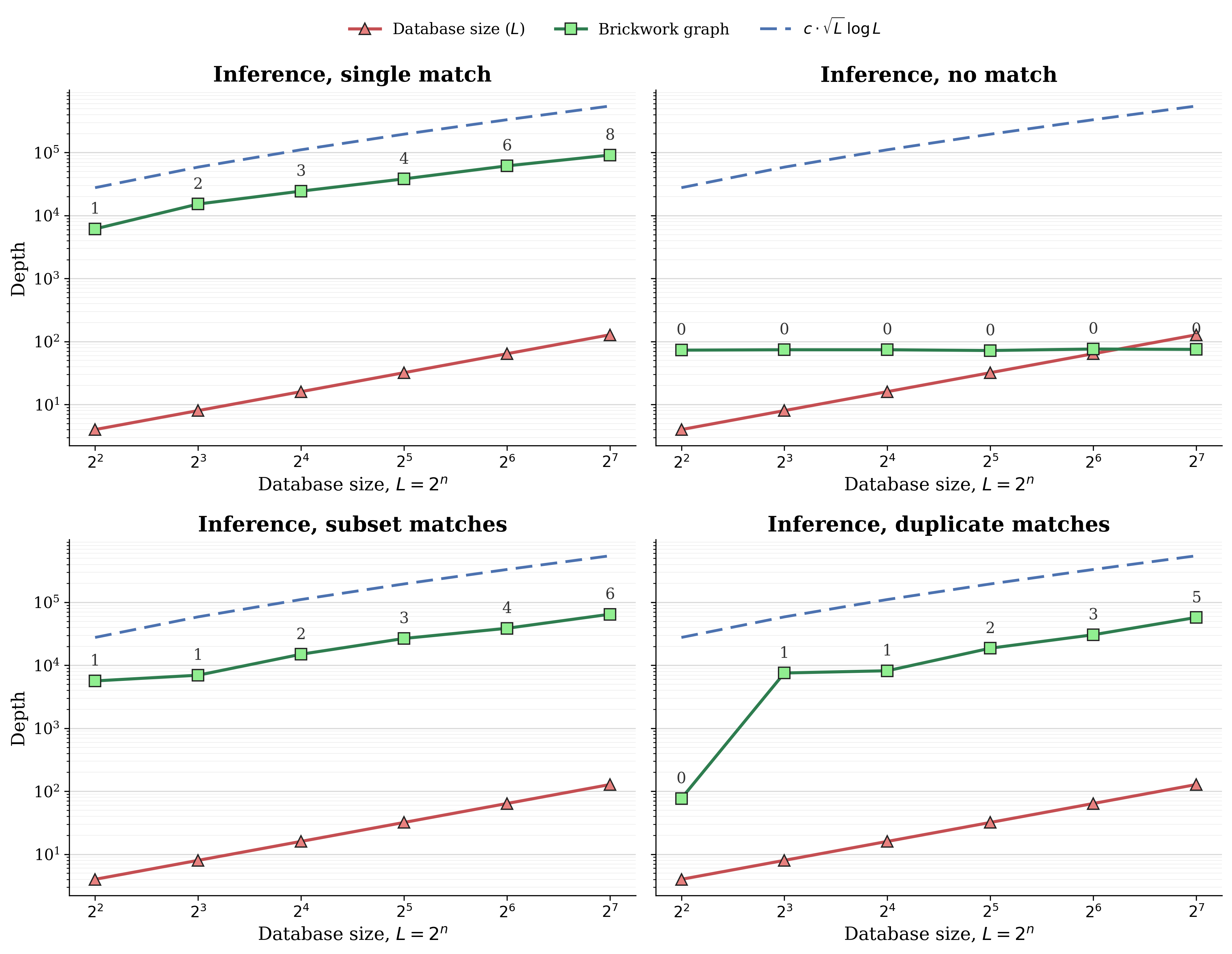}
\caption{Scaling behavior of brickwork graph depths obtained after transpiling the quantum recommendation system circuit, excluding the database encoding stage. Each point represents the resulting brickwork graph depth for a given input size, illustrating how transpilation cost grows with system scale, corroborating $\mathcal{O}(\sqrt{L}\,\log L)$. Notably, the number of Grover amplification rounds depends on the structure of the data; when there are more matches (i.e. clusters of low Hamming distance), fewer rounds are required, since $r=\mathcal{O}(\sqrt{L/g})$.}
\label{fig:res_qrs_no_db_plots}
\end{figure}

\subsection{Blind Quantum Generative Transformers}

Blind quantum machine learning is motivated by the imperative to ensure that emerging computational paradigms align with societal value hierarchies, particularly the growing emphasis on privacy. This motivation extends naturally to the design of privacy-preserving, or ``blind'', large language models (LLMs). Given the societal impact of classical LLMs, it is of significant theoretical interest to investigate quantum analogues that could, in principle, offer both computational advantage and privacy guarantees. Although the direct simulation of large-scale quantum LLMs is presently infeasible due to the prohibitive number of qubits required, it remains instructive to analyze their asymptotic resource requirements. In particular, using the brickwork depth bounds for blind algorithms presented in this paper can provide insight into the practical feasibility of privacy-preserving quantum generative models in NISQ regimes~\cite{Cherrat_2024}. This section will only establish bounds analytically using Theorem~\ref{thm:bw_general_alignment}. To verify that the theorem applies, we analyze one foreign gate\footnote{A gate that was not taken into account for the proof of Theorem~\ref{thm:bw_general_alignment}.}, namely $RBS(\theta)$. According to~\cite{Cherrat_2024}, $RBS(\theta)$ can be decomposed efficiently into four $\mathrm{H}$, two $R_y$, and two $\mathrm{CZ}$ gates---all covered in Theorem~\ref{thm:bw_general_alignment}. Therefore, the theorem applies.

In~\cite{Cherrat_2024}, three quantum transformer architectures are introduced: the \emph{Orthogonal Patch-wise Transformer}, the \emph{Quantum Orthogonal Transformer}, and the \emph{Compound Transformer}. Each is designed to implement the self-attention mechanism with logarithmic or polylogarithmic circuit depth. The central innovations are efficient quantum matrix-data loaders and trainable orthogonal layers, both of which contribute to reductions in runtime complexity and the number of trainable parameters relative to the classical Vision Transformer (ViT). For more background information, refer to ~\cite{Cherrat_2024}.

While all three architectures share similar design principles, they differ in their degree of quantum integration. The Orthogonal Patch-wise and Quantum Orthogonal Transformers operate on block-diagonal subspaces, corresponding to partially quantum realizations of the attention mechanism. In contrast, the \emph{Compound Transformer} merges all attention operations into a single coherent quantum process acting jointly on the patch and feature registers. As a result, it provides the most expressive and fully quantum variant among the proposed designs.

Let $n$ denote the number of input patches and $d$ the feature dimension per patch. Following the constructions in~\cite{Cherrat_2024}, we summarize the resource requirements of blind quantum transformer architectures using gate count as the measure of computational complexity.

% \paragraph{Circuit width.}
% Each quantum layer operates on a joint state representing both patch indices and patch features. Under unary amplitude encoding, the circuit width scales linearly as
% \begin{equation}
%     \omega(n, d) = \mathcal{O}(n + d),
% \end{equation}
% where $n$ is the size of the index register and $d$ the size of the feature register. This represents the maximum number of logical qubits required to implement any of the proposed transformer layers.

\paragraph{Circuit width.}
The quantum circuit implementing a Compound Transformer layer acts on a joint register representing patch indices and patch features. Under unary amplitude encoding, \(n\) input patches and \(d\) features per patch require \(n\) and \(d\) qubits, respectively, giving
\begin{equation}
    \omega(n,d)=\mathcal{O}(n+d).
\end{equation}

\paragraph{Complexity.}
\cite{Cherrat_2024} characterizes the quantum architectures by their circuit depths and the number of distinct circuits required per attention layer, which together determine runtime under the chosen execution model. The Compound Transformer requires only one distinct circuit per attention layer, with circuit depth
\begin{equation}
  T \;=\; \mathcal{O}\big(\log n + n\log d + \log(n+d)\big).
  \label{eq:qct_depth}
\end{equation}
The Compound Transformer acts on the full $(n+d)$-qubit register and subsumes the
other two architectures as special cases: the Orthogonal Patch-wise and Quantum
Orthogonal Transformers correspond to block-diagonal restrictions of this fully
entangled layer. Consequently, the Compound Transformer provides a natural asymptotic
upper bound for the complexity of all proposed quantum transformer variants.

\paragraph{Blind complexity.}
Theorem~\ref{thm:bw_general_alignment} bounds the brickwork depth by the circuit
width times the gate count. Since \cite{Cherrat_2024} reports circuit depths,
we use $G \le \omega T$ (at most $\omega$ gates per layer), which gives
\begin{equation}
  D(\mathcal{B}) \;=\; \mathcal{O}(\omega G)
  \;=\; \mathcal{O}\big(\omega^{2} T\big)
  \;=\; \mathcal{O}\big((n+d)^{2}\,(\log n + n\log d + \log(n+d))\big).
  \label{eq:qct_blind}
\end{equation}
This is an asymptotic upper bound on the brickwork depth of the Compound Transformer
of \cite{Cherrat_2024}, loose by up to a factor $\omega$ when layers contain
fewer than $\omega$ gates, and it bounds the entire family of blind quantum transformer
designs adapted from \cite{Cherrat_2024}.

\end{appendices}

%%===========================================================================================%%
%% If you are submitting to one of the Nature Portfolio journals, using the eJP submission   %%
%% system, please include the references within the manuscript file itself. You may do this  %%
%% by copying the reference list from your .bbl file, paste it into the main manuscript .tex %%
%% file, and delete the associated \verb+\bibliography+ commands.                            %%
%%===========================================================================================%%

% \bibliography{sn-bibliography}% common bib file
%% if required, the content of .bbl file can be included here once bbl is generated
%%\input sn-article.bbl

\end{document}